\documentclass[12pt]{article}
\usepackage{titling}
\usepackage{graphicx}
\usepackage{amsmath,amssymb,setspace,amsfonts,enumitem,fullpage,amsthm,comment,color,bm,nth}
\usepackage[longnamesfirst]{natbib}
\usepackage {palatino}
\usepackage{epsf}
\usepackage[margin=10pt,labelfont=bf,labelsep=endash]{caption}
\usepackage[colorlinks,pagebackref=false]{hyperref}
\usepackage[usenames,dvipsnames]{xcolor}
\usepackage{titlesec}
\usepackage[bottom]{footmisc} 
\usepackage{caption}
\usepackage{subcaption}
\usepackage{tikz}
\allowdisplaybreaks
\usetikzlibrary{arrows,backgrounds,calc,decorations.pathreplacing,decorations.markings}
\colorlet{myblue}{blue!80!green}
\colorlet{mybluelight}{myblue!50}
\tikzset{
  > = latex',
  axis/.style    = {very thick},
  aborder/.style = {draw},
  acomp/.style   = {fill=black, fill opacity=0.1},
  rect/.style    = {very thick},
  form/.style    = {font=\scriptsize},
  sm/.style      = {font=\small},
  vsm/.style     = {font=\scriptsize}
}

\setcounter{MaxMatrixCols}{10}

\definecolor{dark-red}{rgb}{0.4,0.15,0.15}
\definecolor{dark-blue}{rgb}{0.15,0.15,0.75}
\definecolor{medium-blue}{rgb}{0,0,0.5}
\hypersetup{
    pdftitle={Waiting For Fake News},
    pdfauthor={Boleslavsky},
    citecolor=dark-blue,
    bookmarksnumbered=true,
    urlcolor=BrickRed,
    linkcolor=dark-blue
}

\makeatletter
  \renewcommand\@seccntformat[1]{\csname the#1\endcsname.{\hskip.7em\relax}} 
\makeatother
\makeatletter
\renewenvironment{proof}[1][\proofname] {\par\pushQED{\qed}\normalfont\topsep6\p@\@plus6\p@\relax\trivlist\item[\hskip\labelsep\bfseries#1\@addpunct{.}]\ignorespaces}{\popQED\endtrivlist\@endpefalse}
\makeatother

\newtheorem{claim}{Claim}



\newtheorem{corollary}{Corollary}
\newtheorem{lemma}{Lemma}

\newtheorem{prop}{Proposition}

\theoremstyle{remark}
\newtheorem{remark}{Remark}

\theoremstyle{definition}

\addtolength{\textwidth}{.3in}
\addtolength{\hoffset}{-.15in}
\addtolength{\textheight}{.25in}
\addtolength{\voffset}{-.25in}
\onehalfspacing
\parindent.25in 
\setlength{\parskip}{6pt plus 1pt minus 1pt}
\setlength{\footnotesep}{0.17in}
\titlespacing\section{0pt}{10pt plus 2pt minus 2pt}{4pt plus 2pt minus 2pt} 
\titlespacing\subsection{0pt}{6pt plus 2pt minus 2pt}{2pt plus 2pt minus 2pt} 
\titlespacing\subsubsection{0pt}{6pt plus 2pt minus 2pt}{0pt plus 2pt minus 2pt} 


\renewcommand{\epsilon}{\varepsilon}


\let\oldfootnote\footnote
\renewcommand\footnote[1]{\oldfootnote{\hspace{.5mm}#1}}



\renewcommand{\bar}{\overline}

\newcommand{\appendixref}[1]{\hyperref[#1]{Appendix \ref{#1}}}

\newcommand{\ppp}{\text{principal}}
\newcommand{\PPP}{\text{Principal}}
 \newcommand{\aaa}{\text{agent}}
 \newcommand{\AAA}{\text{Agent}}
  \newcommand{\psafe}{\theta}
  \newcommand{\asafe}{\beta}
  \newcommand{\rcdf}{G}
  \newcommand{\rpdf}{g}
    \newcommand{\rhaz}{H_R}
    \newcommand{\faker}{\text{manipulative}}
    
    \newcommand{\acdf}{F_A}
    \newcommand{\apdf}{f_A}
    \newcommand{\ahaz}{H_A}
    \newcommand{\bl}{\mu_1}

    \newcommand{\pcdf}{F_P}
    \newcommand{\ppdf}{f_P}
     \newcommand{\phaz}{H_P}
    
    \newcommand{\apri}{\sigma}
    
    \newcommand{\mint}{\tau_M}
    \usepackage{tikz}
    \usepackage{caption}

\title{Waiting For Fake News\thanks{The author thanks Deepal Basak, Hector Chade, Ay\c{c}a Kaya, Aaron Kolb, Marilyn Pease, Mehdi Shadmehr, Curtis Taylor, Galina Vereshchagina, Mark Whitmeyer, Dong Wei, Kun Zhang, and seminar audiences at Arizona State, and University of Miami for helpful comments.}}
\author{Raphael Boleslavsky\\
Business Economics and Public Policy\\
Indiana University\\
rabole@iu.edu}
\date{\today}
\begin{document}
\begin{titlingpage}
\maketitle

\abstract{This paper studies a dynamic model of information acquisition, in which information might be secretly manipulated. A \ppp{} must choose between a safe action with known payoff and a risky action with uncertain payoff, favoring the safe action under the prior belief. She may delay her decision to acquire additional news that reveals the risky action's payoff, without knowing exactly when such news will arrive. An uninformed \aaa{} with a misaligned preference may have the capability to generate a false arrival of news, which is indistinguishable from a real one, distorting the information content of news and the \ppp{}'s search. The analysis characterizes the positive and normative distortions in the search for news arising from such manipulation, and it considers three remedies that increase the \ppp{}'s payoff: a commitment to naive search, transfer of authority to the \aaa{}, and delegation to an intermediary who is biased in the \aaa{}'s favor.}

\end{titlingpage}

\section{Introduction}

In the U.S. legal system, the decision to file charges against a suspect is usually made by the local prosecutor's office, based on an investigation conducted by the police department.\footnote{This is true of the vast majority of cases \citet{GPRS2020}. The main exception is federal cases, in which the decision to bring charges is made by a grand jury. } Imagine that an extensive police investigation has uncovered circumstantial evidence that implicates a suspect, but it has produced no direct evidence of the suspect's guilt or innocence. The prosecutor agrees with the police officer's interpretation of the circumstantial evidence, and both the prosecutor and the officer would like to see justice done. Nevertheless, the prosecutor might be less inclined to pursue a case based on circumstantial evidence than the officer would like: in an overloaded system, the prosecutor must consider the opportunity cost, which is not borne by the officer. Furthermore, because a significant investigation has failed to uncover direct evidence, the case may soon be closed. In this situation, a police officer who believes that the suspect is probably guilty based on the circumstantial evidence might be tempted to fabricate direct evidence against him.\footnote{\citet{GPRS2020} document various types of police misconduct that distort evidence in criminal cases, including but not limited to planting physical evidence, coercing a false confession, misrepresenting the results of forensic analysis, and procuring false testimony or identifications from witnesses.} When deciding whether to pursue the case, the prosecutor must therefore consider the possibility that seemingly conclusive evidence may actually be fabricated, impacting the entire investigation.

Such issues are not confined to criminal investigations. A startup firm may not fully internalize a venture capitalist's opportunity cost. Thus, the startup may believe its product to be sufficiently viable to justify additional investment, but the VC may not be willing to invest further given the information currently available. If additional information about product viability (e.g., a working prototype) takes too long to produce, the venture capitalist may withdraw her investment. The startup may therefore be tempted to fabricate such evidence, rather than lose its funding.\footnote{An example is the medical device company Theranos, whose ``pathbreaking'' product was fake.  } When deciding whether to withdraw funding, the venture capitalist must consider the possibility that evidence of a product's viability might be fake. A similar scenario might unfold when a firm decides whether to launch a new project or expand a division, or when a regulator decides whether to approve  a  product coming to market.

In this paper I study a dynamic \ppp{}-\aaa{} model of information acquisition, in which information can be faked strategically. A \ppp{} faces a single, irreversible choice between a safe action with a known payoff and a risky action with an uncertain payoff. The \ppp{} is free to choose her action at any time, and the game continues until she makes this choice. The payoff of the risky action might be higher or lower than the payoff of the safe action, but under the prior belief, the \ppp{} prefers safe. The \aaa{} has no private information about the risky payoff, and he obtains the same payoff as the \ppp{} if the risky action is chosen. But, if the safe action is chosen, then the \aaa{}'s payoff is smaller than the \ppp{}'s. The \ppp{} and \aaa{} agree on the best action in the absence of uncertainty. However, the \aaa{}'s safe payoff is small enough that he prefers the risky action under the prior belief, generating a conflict of interest between the parties that can only be resolved with additional information. The \aaa{} has a privately known type, \textit{\faker{}} or \textit{normal}, which is described in more detail below.

The \ppp{} can delay her choice in order to acquire additional information about the risky payoff. Information arrives through a news process, which generates a single arrival of news at some time.  A \textit{real news} arrival accurately reports the payoff of the risky action, either supporting the risky action or opposing it. The arrival time of real news is uncertain, and it is drawn from a continuous prior distribution that is independent of the risky payoff. If this arrival time is reached, then real news is instantly produced. \textit{Fake news} is generated by the \faker{} \aaa{} in order to influence the \ppp{}'s decision. At any time he likes, the \faker{} \aaa{} can fabricate a news arrival that supports the risky action. Though it is fake, such an arrival \textit{appears real}. Thus, when news arrives, the \ppp{} cannot directly observe whether it is real or fake; she must infer this from the \faker{} \aaa{}'s strategy. The normal \aaa{} does not have the ability to generate fake news, and he waits passively for the game to end. 

The conflict of interest distorts the search for information. In each player's ``first best,'' he or she has full control over the timing and action and only real news arrives. Each player decides how long to wait for real news before stopping the search and selecting the action that he or she prefers under the prior belief. Because the arrival of real news could reveal that each player's favorite action under the prior is actually wrong, news is valuable for both players. Provided its arrival rate is sufficiently high, each player waits for some time before terminating the search and taking an action. Furthermore, when the preference misalignment is not too large, the \aaa{}'s first best search duration is longer than the \ppp{}'s---the paper focuses on this case.\footnote{In the context of the introductory examples, the prosecutor's and VC's opportunity costs are not too high.} In the game, the \ppp{} must consider the possibility that a news arrival that conveys favorable information about the risky action is actually fake. She may therefore treat such news with skepticism, selecting the safe action instead of risky. Furthermore, the possibility that future news is manipulated undermines its informativeness and reduces its value. Consequently, the \ppp{} may terminate her search early, selecting the safe action before news arrives. Compared to the \aaa{}'s first best, these distortions in the \ppp{}'s search strategy make it more difficult for him to implement the risky action, which he prefers under the prior. Thus, if the \ppp{}'s skepticism is currently low and the chance of early termination is high, then the \faker{} \aaa{} may try to implement the risky action by faking favorable news.

When the probability of the \faker{} type is not too large, the game has a unique equilibrium with some intriguing features. The equilibrium incorporates two deadlines, an early ``soft deadline'' and a later ``hard deadline.'' Before the soft deadline, the \aaa{} does not fabricate news. The \ppp{} follows the advice implied by any news arrival and only selects an action if news arrives. In other words, before the soft deadline, equilibrium search is identical to each player's first best. Once the soft deadline is reached, the search for news becomes distorted. At each time between the soft and hard deadlines, the \aaa{} randomly fakes a news arrival, and the \ppp{} randomly terminates the search by choosing the safe action. Despite the positive likelihood of fabrication by the \faker{} type,  news that supports the risky action is sufficiently informative to overturn the \ppp{}'s prior in favor of safe, convincing her to select the risky action. A \faker{} \aaa{} fakes a news arrival with probability one before the hard deadline. Thus, reaching the hard deadline with no arrival reveals that the \aaa{} is normal and all future arrivals are real. Nevertheless, the \ppp{} stops the search at the hard deadline, which coincides with her first best search duration. 

Though equilibrium search lasts until the hard deadline with positive probability, from a normative perspective, it is identical to a first best search that is terminated at the soft deadline, too soon for either player's liking. Before the soft deadline, equilibrium search is undistorted, but once the soft deadline is reached, each player randomizes. Each player is therefore indifferent over all times between the soft and hard deadline. From a normative perspective, the \ppp{} may as well choose safe at the soft deadline, and the \aaa{} may as well fake, inducing risky. Thus, each player's equilibrium payoff is as if he or she waits until the soft deadline and then selects his or her favorite action under the prior if no real news has arrived.

The equilibrium can be interpreted as a ``stochastic unraveling'' of the first best. Imagine that the \ppp{} searches ``naively,'' simply following her strategy from the first best benchmark without considering the possibility of fake news. In this case, she would act on all news that arrives before (or at) the hard deadline, picking the safe action if no news comes. In this case, the \faker{} \aaa{} would not fake before the hard deadline. Before it is reached, the \aaa{} is always able to induce the risky action by faking a bit later, and waiting allows more time for valuable real news to arrive. At the hard deadline, however, waiting is no longer an option: the \aaa{} must fake an arrival immediately, in order to preempt the \ppp{}'s imminent choice of safe, resulting in masses of faking and search termination concentrated at the hard deadline.\footnote{These masses are inconsistent with equilibrium. Anticipating such a faking strategy, the \ppp{} would ignore news that arrives at the deadline, incentivizing the \aaa{} the fake a bit sooner. } In the unique equilibrium of the game, these probability masses are spread backwards in time, resulting in an interval with continuous mixing, starting from the soft deadline and ending at the hard one. In this sense, the naive search stochastically unravels into the unique equilibrium. Furthermore, a higher prior probability of a \faker{} \aaa{} intensifies the \aaa{}'s preemption motive, producing a longer unraveling and an earlier soft deadline. As the probability of the \faker{} \aaa{} approaches a threshold less than 1, the soft deadline approaches 0. If the probability of the \faker{} type exceeds this threshold, then the initial phase of first best search collapses. In this case (and only in this case) a multiplicity of equilibria arises. In all of them, the \ppp{} gains nothing from search: her payoff is as if she implements safe immediately.

Motivated by these normative features, I consider three simple forms of commitment that mitigate the \aaa{}'s preemption motive. First, I allow the \ppp{} to commit to implement the ``naive search'' strategy described above. By guaranteeing that she will select risky if favorable news arrives at the hard deadline, the \ppp{} eliminates the \aaa{}'s preemption motive and the resulting unraveling. The \faker{} \aaa{} simply waits until the hard deadline to fake without worrying that he ``tips his hand'' by being  too predictable. From the \ppp{}'s perspective, such a commitment mimics the first best, except for one scenario: when no real news comes before the hard deadline and the \aaa{} is \faker{}, she selects risky instead of safe. When the probability of a \faker{} \aaa{} is not too high, the \ppp{} benefits from a commitment to naive search. Thus, when the probability of manipulated news is relatively small, the \ppp{} might want to pretend that it does not exist. The \ppp{} might therefore wish to introduce institutional features that allow her to ``plausibly deny'' the existence of fake news.

A more straightforward commitment by the \ppp{} transfers authority over the search and action to the \aaa{}. The \ppp{} need not pretend, she can simply step back and allow the \aaa{} to do as he sees fit. In this case, there is clearly no preemption motive for the \aaa{}. The \ppp{}'s search is nevertheless distorted, because the \aaa{} selects the wrong action if he terminates the search without news, and he does so at the wrong time. The cost of these distortions may be small; if so, the \ppp{} would like to transfer all authority to the \aaa{}. Thus, if delegation is feasible, then the credible threat of news manipulation allows the \aaa{} to command outsized influence within the organization, despite having no private information that is relevant to the choice of action.

Finally, I consider a more nuanced form of delegation in which the \ppp{} allocates the decision authority to an intermediary, whose payoff from the safe action is between hers and the \aaa{}'s. On one hand, the conflict of interest between intermediary and \aaa{} is smaller, which shifts the soft deadline toward a later time and extends the duration of the initial ``first best'' phase. On the other hand, once the soft deadline is reached, the intermediary is indifferent between continuing the search and selecting the safe action with no news. The \ppp{}'s safe payoff is larger than the intermediary's, and therefore the \ppp{} would like to terminate the search at all times after the soft deadline. Thus, delegation is harmful if the search extends for too long and is (weakly) helpful otherwise. When the equilibrium of the main model has beneficial search, such delegation benefits the \ppp{}. For example, the prosecutor benefits when the decision to indict a suspect is made by a grand jury that places a bit less weight on the opportunity cost of trial.

\noindent\textbf{Related Literature.} In my paper, information acquisition is modeled as a stopping problem in the spirit of \citet{W1947} and \citet{ABG1949}. While the classic literature focuses on a single searcher,\footnote{Recent contributions by \citet{CM2019} and \citet{LMS2022} endogenize a single searcher's information by allowing her to choose which source(s) to sample among a set of available signals. In my analysis the information structure is endogenous, but it arises from a strategic interaction.} 
 \citet{BC2007} and \citet{HO2019} study Wald stopping in settings where authority over the final decision and search duration is split between a \ppp{} and \aaa{}. These authors show that even without the ability to manipulate the realization of the news process, controlling the stopping decision can benefit the \aaa{}. In my analysis, the \aaa{} has the ability to stop the search: by faking, he shuts off the news process, which forces the \ppp{} to choose an action immediately. However, the \aaa{} does not benefit from this ability. In particular, if the \aaa{} cannot fabricate a news arrival but can take an action that stops the search, in equilibrium he would never do so.

 A number of related papers consider stopping problems with \textit{observable} manipulation of the news process \citep{AJ2016,KL2020, OSZ}. In this strand, \citet{CKM2022} (CKM) is closest to my analysis. These authors consider a dynamic persuasion model in which an uninformed sender designs a flow experiment at each instant but cannot commit to future experiments. Similar to my model, the receiver's expectation of future informativeness affects her stopping strategy, which itself affects sender's incentive to manipulate the flow experiment. Unlike my model, which often has a unique equilibrium, in CKM a range of mutual expectations can be sustained, resulting in multiple equilibria. More broadly, in my model the \aaa{}'s manipulation of the news manipulation is unobservable. This  difference is crucial: if manipulation were observable in my model, then the \aaa{} would not manipulate in equilibrium.\footnote{My analysis also connects to an emerging literature on dynamic information design,  whereby a sender commits to a long run communication protocol in order to influence a receiver's decision(s) over time \citep{E2017, HKL2016, RSV2017, ES2020}. The pioneering contribution of \citet{KG2011}, who study information design in a static setting, is also related. Juxtaposed with this literature, the \aaa{}'s lack of commitment in my analysis is even more pronounced.  }

I am aware of only one other paper that studies a similar stopping problem with unobservable news manipulation.\footnote{\citet{D2019} studies a dynamic signaling model in which an informed sender's unobserved effort affects a public news process. Unlike my analysis, the sender decides when to stop; see also \cite{K2015,K2019}.}  \citet{EGMSW2022} consider a drift-diffusion model in which an informed sender would like to persuade the receiver to wait by secretly inflating an informative performance measure, showing that observed performance tends to spike just before receiver stops. This model differs from mine in two key respects: information arrives continuously, and the \aaa{} is informed about his type. These modeling differences imply a distinct set of applications and generate different results.\footnote{For example, in the context of venture capital, the setting of \citet{EGMSW2022} suggests that the firm is relatively established. Its current operations generate a continuous performance measure, and its past has endowed management with private information about its potential for success. In contrast, my setting suggests that the firm is just getting started: a breakthrough (in R\&D, perhaps) is still needed for the firm to be viable, and it has not operated for long enough to acquire private information about its ability. }

Finally, my paper is related to \citet{BT2023} (BT), which studies fraud in the context of production. With respect to this paper, there are a number of substantive differences in focus, model, and results. First, in BT the \aaa{}'s incentive to cheat comes from impatience---he wants to obtain a reward quickly. In contrast, in my paper the \aaa{} fakes news in order to influence the \ppp{}'s action.\footnote{In my model even a perfectly patient \aaa{} would manipulate news in equilibrium, provided the players prefer different actions under the prior belief.} Second, in BT the \ppp{} does not have an outside option, and thus, the stopping decision at the heart of my analysis is absent. Third, the equilibrium structure differs significantly. In BT, the \aaa{} fakes a project randomly in an early phase, and the \ppp{} mixes between approving and rejecting such a project when it arrives. In the late phase, all distortions vanish. In my paper, this structure is reversed: the search for news is distorted in the late phase of the equilibrium. Furthermore, the \ppp{}'s response to faking is also different. In BT the \ppp{} randomizes between approval and rejection after a project arrives. In my paper, \ppp{} randomizes between stopping and waiting before news arrives; if it arrives, she always ``approves'' it by selecting the recommended action.

\section{Model}
An organization consists of a \ppp{} and an \aaa{}, who play a game in continuous time. At any time she likes, the \ppp{} makes a single irreversible decision, choosing either a safe or risky action. When the \ppp{} chooses an action, the game ends and the players' payoffs are realized.  The safe action ($S$) delivers known lump-sum payoffs $\psafe\in(0,1)$ and $\asafe\in(0,\psafe)$ to principal and agent respectively, while the risky action ($R$) delivers an uncertain lump-sum payoff $\omega\in\{0,1\}$, identical for both players. Thus, the preference misalignment stems from the safe action, which gives the \aaa{} a smaller payoff than the \ppp{}.\footnote{A similar preference structure, featuring a misaligned payoff with respect to a single action, also appears in the cheap talk model of \citet{CDK2013}.} Although the players would select the same action if the payoff of the risky action were known, uncertainty about it creates disagreement. In particular, the players have a common prior belief $\mu\equiv \Pr(\omega=1)$, where $\asafe<\mu<\psafe$. In this case, the \ppp{} prefers the safe action under the prior, but the \aaa{} prefers the risky action. The \aaa{} and \ppp{} both discount future payoffs at rate $\rho$ and have zero flow payoffs. In addition, the \aaa{} has a privately known type, either \textit{\faker{}} or \textit{normal}, whose significance is described below. At the outset, the \ppp{} believes that the \aaa{} is \faker{} with probability $\sigma\in(0,1)$. 

Though she prefers the safe action given the prior belief, the \ppp{} may delay her decision in order to acquire additional information about $\omega$. Both players observe a news process which generates a single arrival of news---either real or fake---at some time. A \textit{real news} arrival accurately reports the payoff of the risky action. Thus, real news comes in two varieties ``type-0'' and ``type-1,'' and a real news arrival is type-$\omega$ if and only if the risky payoff is $\omega\in\{0,1\}$. The arrival time of real news is uncertain, and both players believe it is distributed according to continuous cumulative distribution function $\rcdf{}(\cdot)$, with associated density $\rpdf{}(\cdot)$, and decreasing hazard rate $\rhaz{}(\cdot)$, supported on $\mathbb{R}_+$.\footnote{The decreasing hazard rate implies that pessimism about the likelihood of a real news arrival grows over time, strengthening the incentive to stop. The analysis is unchanged if there is a positive probability that no news arrives, $\lim_{t\rightarrow\infty}G(t)<1$. One example is an exponential bandit: if real news exists, then its arrival time is exponentially distributed;  otherwise, it never arrives.} If the arrival time of real news is reached, then real news is instantly produced. Note that the arrival time of real news is  statistically independent of the payoff of the risky action (and the \aaa{}'s type). Thus, the content of real news is informative about the risky payoff, but its arrival time is not.

 \textit{Fake news} is produced strategically  in order to influence the \ppp{}'s decision. In particular, the \faker{} type of \aaa{} has the ability to produce an instant arrival of fake type-1 news, which is indistinguishable from real type-1 news that could have arrived at the same moment.\footnote{I focus the case where generating a fake is costless. A small faking cost can be incorporated into the analysis with minimal changes to the results. In particular, only the \ppp{}'s equilibrium strategy adjusts when a small cost is included.} The normal type of \aaa{} does not have the ability to fabricate news and waits passively for the \ppp{}'s decision. In light of this, ``the \aaa{}'' should be understood as the \faker{} type.

Three observations streamline the subsequent exposition and analysis. First, recall that the news process delivers only one arrival. Thus, the \ppp{} chooses between safe and risky immediately when news arrives. Second, recall that type-0 news is never faked.\footnote{If the \aaa{} had the ability to fake or suppress an arrival of type-0 news, he would not choose to do so in equilibrium. Because the \aaa{} is uninformed and a type-0 arrival might be real, type-0 news is always weakly bad news about the risky action and induces safe. Thus, he would never fake a type-0 arrival, since the \aaa{} prefers risky without a real type-0 arrival. If a real type-0 arrival does occur, however, then the \aaa{} prefers safe, and he would want it to be implemented immediately. Unlike \citet{SB2015} and \citet{S2023}, the \aaa{} has no incentive to suppress or censor type-0 news.} Therefore, type-0 news is fully revealing, and the \ppp{} immediately selects the safe action when it arrives. Third, the non-arrival of news before a given time conveys no information about the payoff of the risky action. In particular, the arrival time of real news is drawn from $\rcdf(\cdot)$, and this arrival time is statistically independent of $\omega$. Furthermore, because the \aaa{} has no private information about $\omega$, the arrival time of fake news is also statistically independent of $\omega$, regardless of the \aaa{}'s strategy. The arrival time of news is the smaller of the real and fake arrival times. Because both of these times are statistically independent of the risky payoff, their minimum is also. Thus, players' beliefs about the risky payoff do not change in the absence of news (see also footnote \ref{fnbel}). By implication, at all times before news arrives, the \ppp{} prefers the safe action and the \aaa{} risky.

\noindent\textbf{First-Best Benchmarks.} To clarify the broader implications of the conflict of interest between the players (and in anticipation of the analysis to follow), it is helpful to understand how each player would search for real news in isolation. We therefore consider two ``first-best'' benchmarks, one for the \ppp{} and one for the \aaa{}. In both benchmarks, the relevant player has full authority over the decision and search, and the news is always real. Thus, in either benchmark, the player selects safe following a type-0 arrival and risky following a type-1. Furthermore, as described above, the passage of time does not affect players belief about the risky payoff. Thus, if acting without news, the \ppp{} defaults to safe, and the \aaa{} risky.

Therefore, the crucial decision for each player in the first-best benchmark is the  search duration---how long to wait for news before selecting the default action. For $i\in\{A,P\}$, let $u^{FB}_i(t)$ be player $i$'s expected payoff of selecting search duration $t$ in the first-best benchmark. In addition, let 
\begin{align*}
	\phi_P\equiv \frac{\rho\psafe}{\mu(1-\psafe)}\quad\quad 	\phi_A\equiv \frac{\rho\mu}{\asafe(1-\mu)}.
\end{align*}
The numerator of $\phi_i$ represents player $i$'s marginal cost of delaying his or her default acton by an instant, while the denominator represents the gains from selecting the correct action (instead of the default action) if news arrives. I introduce two assumptions on these parameters that streamline subsequent analysis,
\begin{align*}
\text{A1:}\quad \rhaz(0)>\phi_P>\rhaz(\infty)\quad\quad\text{A2:}\quad\phi_P>\phi_A, \text{ or equivalently, } \beta>\underline{\beta}\equiv\mu(\frac{\mu}{\theta})(\frac{1-\theta}{1-\mu}).
\end{align*}

The significance and interpretation of (A1) and (A2) will be discussed below. For now, note that $\mu<\theta\Rightarrow\underline{\beta}<\mu$. Thus, the set of $\beta$ that is consistent with (A2) is non-empty.

The following result characterizes each player's optimal search duration in his or her first best benchmark.

\begin{lemma}\label{fb}(First-Best Benchmarks). In player $i\in\{A,P\}$ first-best benchmark,
\begin{enumerate}
\item[(i)] $u_i^{FB}(\cdot)$ is a continuous, differentiable, single-peaked function.
	\item [(ii)] if {\normalfont(A1)}, then \ppp{}'s optimal first best search duration $\tau_P$ is strictly positive and finite, and it is the unique value such that $\rhaz(\tau_P)=\phi_P$.

 \item[(iii)] if {\normalfont(A2)}, then \aaa{}'s optimal first best search duration $\tau_A$ is longer than the \ppp{}'s, and it is the unique value such that $\rhaz(\tau_A)=\phi_A$
\end{enumerate}
\end{lemma}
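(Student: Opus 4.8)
The plan is to put $u_i^{FB}(\cdot)$ in closed form, differentiate once, and read all three parts off the resulting expression together with the monotonicity of $\rhaz$. For $i\in\{A,P\}$ let $V_i$ denote player $i$'s expected instantaneous payoff from acting optimally on a real news arrival and $D_i$ the payoff of the default action. Since the arrival time of real news is independent of $\omega$, a news arrival is type-$1$ with probability $\mu$; hence $V_P=\mu+(1-\mu)\psafe$, $D_P=\psafe$, $V_A=\mu+(1-\mu)\asafe$, $D_A=\mu$, so that $V_i-D_i>0$ and $\phi_i=\rho D_i/(V_i-D_i)$ for both $i$. A search of duration $t$ — act on news if it arrives by $t$, else take the default at $t$ — then gives
\[
  u_i^{FB}(t)=\int_0^t e^{-\rho s}V_i\,\rpdf(s)\,ds+e^{-\rho t}\bigl(1-\rcdf(t)\bigr)D_i .
\]
Part (i)'s continuity and (given a continuous density) differentiability are immediate from this representation.

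Differentiating, the product rule on the second term contributes $-\rho e^{-\rho t}(1-\rcdf(t))D_i-e^{-\rho t}\rpdf(t)D_i$, and it is the piece $-e^{-\rho t}\rpdf(t)D_i$ that converts the $V_i$ coming from the first term into $V_i-D_i$; collecting everything and using $\rhaz=\rpdf/(1-\rcdf)$,
\[
  \frac{d}{dt}u_i^{FB}(t)=e^{-\rho t}\bigl(1-\rcdf(t)\bigr)(V_i-D_i)\bigl(\rhaz(t)-\phi_i\bigr).
\]
Since $1-\rcdf(t)>0$ for every finite $t$ (the support is $\mathbb{R}_+$) and $V_i-D_i>0$, the sign of the derivative is the sign of $\rhaz(t)-\phi_i$; as $\rhaz$ is decreasing, $t\mapsto\rhaz(t)-\phi_i$ changes sign at most once, from $+$ to $-$, so $u_i^{FB}$ is non-decreasing and then non-increasing, i.e.\ single-peaked, which finishes (i). For (ii): under (A1) we have $\rhaz(0)>\phi_P>\rhaz(\infty)$, so by continuity and strict monotonicity of $\rhaz$ there is a unique $\tau_P\in(0,\infty)$ with $\rhaz(\tau_P)=\phi_P$; the derivative is positive on $[0,\tau_P)$ and negative on $(\tau_P,\infty)$, so $\tau_P$ is the unique maximizer, and it is strictly positive and finite.

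For (iii), evaluate the \aaa{}'s marginal payoff at $\tau_P$: by (A2), $\phi_A<\phi_P=\rhaz(\tau_P)$, hence $\frac{d}{dt}u_A^{FB}(\tau_P)>0$ — the \aaa{} strictly prefers to keep searching past $\tau_P$ — and since $u_A^{FB}$ is single-peaked its peak $\tau_A$ must satisfy $\tau_A>\tau_P$. To pin $\tau_A$ down as the unique solution of $\rhaz(\tau_A)=\phi_A$ one needs the derivative to eventually turn negative, i.e.\ $\phi_A>\rhaz(\infty)$; under (A1) we already have $\phi_A<\phi_P<\rhaz(0)$, and $\phi_A>\rhaz(\infty)$ is the remaining (mild) restriction — an upper bound on $\asafe$, vacuous whenever $\rhaz(\infty)=0$ as in the exponential-bandit example — after which the argument in (ii) applies verbatim with $\phi_A$ in place of $\phi_P$. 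I expect no serious difficulty beyond two points: getting the derivative identity exactly right (in particular the factorization through $\rhaz(t)-\phi_i$, which hinges on that $-e^{-\rho t}\rpdf(t)D_i$ term), and noting or imposing the boundary condition $\phi_A>\rhaz(\infty)$ needed for finiteness of $\tau_A$ in (iii); everything else is the intermediate value theorem plus monotonicity of $\rhaz$.
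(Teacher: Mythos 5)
Your proof is correct and takes essentially the same route as the paper's: write $u_i^{FB}$ in closed form, differentiate, factor the derivative as $e^{-\rho t}(1-\rcdf(t))(V_i-D_i)(\rhaz(t)-\phi_i)$, and use the decreasing hazard rate to get single-peakedness and a unique crossing. Your caveat that finiteness of $\tau_A$ additionally requires $\phi_A>\rhaz(\infty)$ is a fair observation about a condition the paper leaves implicit, but it is harmless for the lemma's actual claims (only $\tau_P$ is asserted to be finite, and downstream the analysis only uses $\tau_A>\tau_P$).
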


In the first-best benchmark, player $i\in\{A,P\}$ waits for news until time $\tau_i$, at which the hazard rate of real news arrivals reaches $\phi_i$.\footnote{These are essentially the first order conditions for each player. By marginally extending her search, the \ppp{} allows a bit of extra time for news to arrive, but imposes costly delay if it does not.} Despite their disagreement over the default action, both players may wish to wait for news, since the arrival of real news could reveal that their default action was actually the wrong choice. However, because the players disagree over the default action, they may also disagree over the duration of first-best search. A1 ensures that the \ppp{} selects a positive and finite duration in her first best benchmark. A2 ensures that the \aaa{}'s first best search duration is longer than the \ppp{}'s. This assumption softens (but does not eliminate) the disagreement over search duration, streamlining the analysis and exposition.\footnote{Suppose the \aaa{} prefers a longer first best duration and the \ppp{} follows her first best strategy. If no news comes by the \ppp{}'s chosen deadline, the \aaa{} would like the \ppp{} to extend the search, but he has no action that would allow him to do so. Thus, he fakes at or just before the deadline only to preempt the \ppp{}'s imminent choice of safe. In contrast, if the \aaa{} prefers a shorter duration and the \ppp{} follows her first best strategy, then he would fake when his preferred duration is reached, which also induces his preferred action. In this case, the equilibrium strategy is shaped both by the \aaa{}'s incentive to preempt the safe action and to terminate search early. This case is substantially more complicated and is left for future study.} It holds whenever the preference misalignment is not too large.

\section{Equilibrium Analysis}

\subsection{Preliminaries.}

\paragraph{Strategies.}  A pure strategy for the \faker{} \aaa{} is a ``faking time,'' $t\in\{\mathbb{R}^+\cup\infty\}$, at which the \aaa{} will fake type-1 news if the game has not yet ended. A mixed strategy is a probability measure over ``faking times,'' $t\in\{\mathbb{R}^+\cup\infty\}$ represented by cumulative distribution function $\acdf(\cdot)$.

The \ppp{}'s strategy consists of two parts. The first component of the \ppp{}'s strategy describes the search duration.  The \ppp{} selects a stopping time, $t\in\{\mathbb{R}^+\cup \infty\}$. If the \ppp{}'s stopping time is reached with no news, then she stops the search and selects the safe action (which is optimal in the absence of news).  A mixed stopping strategy specifies a probability measure over stopping times, represented by cumulative distribution function $\pcdf{}(\cdot)$. The second component of the \ppp{}'s strategy specifies her decision when news arrives.  Because type-0 news is always real, its arrival reveals that $\omega=0$, and the \ppp{} selects safe. If type-1 news arrives at $t\in\mathbb{R}^+$, then the \ppp{} selects the risky action with probability $a(t)$.\footnote{An ambiguity arises if type-1 news arrives at the exact same time that the \ppp{} intends to stop the search and select safe---does she select risky according the news response profile, or does she select safe? The analysis assumes that the \ppp{} selects safe in this case. In other words, in the small period $[t,t+dt]$ the \ppp{} decides whether to terminate the search before observing whether news arrives. This assumption can be adjusted with minimal impact on the results, though some of the notation becomes more cumbersome.} If $a(t)=1$, then the \ppp{} ``acts on'' or ``follows'' the news at time $t$.

The analysis focuses on the case in which the players' mixed strategies $(\acdf(\cdot),\pcdf(\cdot))$ have an absolutely continuous component and a discrete component. In particular, for $i\in\{A,P\}$,  
\begin{align*}
	F_i(t)=F^*_i(t)+\sum_{m\in M}\mathcal{I}(t\geq t_m)f_m,
\end{align*}
where $M$ is a countable set indexing the set of atoms, $t_m$ is the time of atom $m$, $f_m$ is the probability assigned to atom $m$, $\mathcal{I}(\cdot)$ is the standard indicator function, and $F_i^*(\cdot)$ is absolutely continuous. In the standard way, 
\begin{align*}
	F_i(t)=\int_0^tf_i^*(s)ds+\sum_{m\in M}\mathcal{I}(t\geq t_m)f_m,
\end{align*}
where $f_{i}^*(\cdot)$ is the derivative $F_i^*(\cdot)$ almost everywhere. To economize on notation, it is helpful to represent the atoms using the Dirac $\delta(\cdot)$. In particular, let $f_i(s)\equiv f_i^*(s)+\sum_{m\in M}f_m\delta(s-t_m)$. Using the notation convention for the Dirac $\delta(\cdot)$,\footnote{In particular, $\int_{t_0}^{t_1}\delta(s)ds=\mathcal I\{t_0\leq 0\leq t_1)$.} 
\begin{align*}
	F_i(t)=\int_0^tf_i(s)ds.
\end{align*}
In the text, $f_i(t)$ is sometimes loosely referred to as the ``probability mass'' on $t$.

\paragraph{Payoffs.} From the perspective of each player, the other's strategy distorts the first best. These distortions are described below and are used to derive each player's payoff function.

\noindent\textit{\PPP{}.} From the \ppp{}'s perspective the \aaa{}'s faking strategy distorts the news process. In particular, when the payoff of the risky arm is low, the \aaa{} may fake type-1 news before real (type-0) news arrives. This has implications for the arrival times of type-0 and type-1 news, as well as the information content of type-1 news.

Suppose that the \ppp{} intends to wait for news until time $t$ and then select the safe action. Her search for news can end in three ways---an arrival of type-0 news before $t$, type-1 news before $t$, or no news before $t$.

 Because the \aaa{} never fakes type-0 news, it can only arrive from the real news process. Thus, such an arrival occurs at some time $s<t$ if the payoff of the risky arm is low, real news arrives at $s$, and fake news does not arrive before $s$. In turn, fake news takes longer than $s$ unless the \aaa{} is \faker{} and his chosen faking time is smaller than $s$. Combining these observations, the probability mass of  a type-0 arrival at $s$ is
\begin{align*}
	w_0^P(s)=(1-\mu)\rpdf(s)(1-\apri\acdf(s)).
\end{align*}

In contrast, type-1 news can be faked by the \faker{} \aaa{}, and it can therefore arrive even when the payoff of the risky action is zero. In particular, an arrival of type-1 news at time $s<t$ can occur in three mutually exclusive ways: (1) $\omega=1$,  real news arrives at $s$, fake news takes longer, (2) $\omega=1$, fake news arrives at $s$,  real news takes longer, (3) $\omega=0$, fake news arrives at $s$, real news takes longer. The probability mass of a  type-1 news arrival at $s$ is 
\begin{align*}
 w_1^P(s)&=\underbrace{\mu\rpdf(s)(1-\apri\acdf(s))}_{(1)}+\underbrace{\mu\apri\apdf(s)(1-\rcdf(s))}_{(2)}+\underbrace{(1-\mu)\apri\apdf(s)(1-\rcdf(s))}_{(3)}\\
 &=\mu\rpdf(s)(1-\apri\acdf(s))+\apri\apdf(s)(1-\rcdf(s)).
\end{align*}

No news arrives at or before time $t$ if and only if the arrival times of both  real news and fake news exceed $t$,\footnote{\label{fnbel}Note that the joint probability $\Pr(\text{no news at or before }t|\omega=1)=\Pr(\text{no news at or before }t|\omega=0)=W_\phi^P(t)$. That is, conditional on $\omega$, no news arrives by $t$ if and only if real news takes longer than $t$ (probability $1-\rcdf(t)$) and fake news takes longer than $t$ (probability $1-\apri\acdf(t)$). By implication, the posterior belief that $\omega=1$ given no news arrival by $t$ is the prior, $\mu$. Similar logic applies for the \aaa{}.}
\begin{align*}
W^P_\phi(t)\equiv\Pr(\text{no news at or before }t)=(1-\rcdf(t))(1-\apri\acdf(t)).
\end{align*}

The information content of type-1 news is also affected by the \aaa{}'s faking---it is not necessarily fully revealing. In particular, of the three ways that type-1 news could arrive at time $s$ (described above), only the first two have $\omega=1$. Thus, the \ppp{}'s belief that $\omega=1$, given an arrival of type-1 news at time $s$ is
\begin{align}\label{belief}
	\bl(s)\equiv\Pr(\omega=1|\text{type-1 news at }s)=\frac{\mu\rpdf(s)(1-\apri\acdf(s))+\mu\apri\apdf(s)(1-\rcdf(s))}{\mu\rpdf(s)(1-\apri\acdf(s))+\apri\apdf(s)(1-\rcdf(s))}.
\end{align}
Note that the denominator of $\bl(s)$ is always strictly positive, and therefore this belief can be derived using Bayes' rule at all times $s\geq 0$.

Building on these observations, if the \ppp{} waits for news until time $t$ and acts on news according to $a(\cdot)$, her expected payoff is
\begin{align}\label{uP}
	u_P(t)=\int_0^t &\exp(-\rho s)\{w^P_0(s)\psafe+w^P_1(s)((1-a(s))\theta+a(s)\bl(s))\}ds\\
	+&\exp(-\rho t)W^P_\phi(t))\psafe.\nonumber
\end{align}

\noindent\textit{\AAA.} Compared to his single-player benchmark, the \ppp{}'s stopping strategy and action profile restrict the \aaa{}'s ability to implement the risky action. In his single-player problem, the \aaa{} could stop waiting for real news and select risky at any time. However, in the game with the \ppp{}, he faces two distortions: the \ppp{} may stop the search and select the safe action before news arrives, and even if type-1 news arrives at $t$ the \ppp{} selects the risky action with probability $a(t)$, which may be less than 1.

Suppose that the \aaa{} intends to fake at some time $t$ if it reached. The search for news can end in three ways. First, there could be an arrival of real news at time $s<t$, which occurs before the \ppp{} stops her search. The probability mass of such an arrival is
\begin{align*}
	&w_0^A(s)=(1-\mu)\rpdf(s)(1-\pcdf(s))\\
	&w_1^A(s)=\mu\rpdf(s)(1-\pcdf(s)),
\end{align*}
where the subscript denotes the type of news arrival. Second, the \ppp{} could stop her search at $s<t$, before news arrives. The probability mass of such an arrival is  
\begin{align*}
	&w_S^A(s)=\ppdf(s)(1-\rcdf(s)).
\end{align*}
Finally, the \aaa{}'s intended faking time $t$ could be reached without a news arrival. In other words, real news takes longer than $t$, and the \ppp{}'s stopping time exceeds $t$, which occurs with probability
\begin{align*}
	W_\phi^A(t)=(1-\rcdf(t))(1-\pcdf(t)).
\end{align*}
Thus, if the \aaa{} waits until time $t$ to generate a fake, his expected payoff is
\begin{align}\label{uA}
	u_A(t)=\int_0^t &\exp(-\rho s)\{w^A_0(s)\asafe+w^A_1(s)((1-a(s))\beta+a(s))+w_S^A(s)\asafe\}ds\\+&\exp(-\rho t)W_\phi^A(t)((1-a(t))\asafe+\mu a(t)).\nonumber
\end{align}
In the previous expression, if the search ends with the arrival of type-1 news, the \aaa{} is confident that $\omega=1$, and his payoff is 1 if the \ppp{} selects the risky action. In contrast, if the search ends at $t$ with no news, then the \aaa{}'s belief that $\omega=1$ is still the prior $\mu$. Thus, if the risky action is chosen at $t$ (in response to the fake type-1 arrival),  the \aaa{}'s expected payoff is $\mu$. 

\paragraph{Equilibrium.} 
A profile of strategies $(a(\cdot),\pcdf(\cdot),\acdf(\cdot))$ is an equilibrium if and only if
 
\begin{enumerate}
	\item[(i)] The \aaa{}'s faking strategy $\acdf(\cdot)$ is optimal given the \ppp{}'s strategy $(a(\cdot),\pcdf(\cdot))$,
	
	\item[(ii)] For all times $t\geq 0$, the \ppp{}'s response to type-1 news, $a(t)$ is optimal given $\bl(t)$, her posterior belief that $\omega=1$ following the arrival of type-1 news at $t$ (derived in (\ref{belief})).
	\item[(iii)] The \ppp{}'s stopping strategy $\pcdf(\cdot)$ is optimal given the \aaa{}'s strategy $\acdf(\cdot)$ and her response to type-1 news, $a(\cdot)$.
\end{enumerate}

Condition (i) holds if and only if 
\begin{align}\label{AIC}
	\apdf(t)>0\Rightarrow u_A(t)\geq u_A(t')\quad\text{for all}\quad t,t'.
\end{align}
That is, the \aaa{}'s mixed strategy assigns positive mass (or probability) only to faking times that generate the highest possible payoff.

Condition (ii) holds if and only if 
\begin{equation}\label{PICa}
a(t) =
    \begin{cases}
      1 & \text{if}\quad  \bl(t)>\psafe \\
      [0,1] & \text{if}\quad  \bl(t)=\psafe\\
      0 & \text{if}\quad  \bl(t)<\psafe,
    \end{cases}
\end{equation}
for all $t\geq 0$. By selecting safe, the \ppp{} obtains guaranteed payoff $\psafe$, but following type-1 news at time $t$ her expected payoff of risky is $\bl(t)$, her posterior belief that $\omega=1$ given type-1 news at $t$, which is derived from Bayes' rule at all times (see (\ref{belief})). The \ppp{} selects the action with the larger expected payoff, mixing only if the payoffs are equal. 

Condition (iii) holds if and only if 
\begin{equation}\label{PICt}
	\ppdf(t)>0\Rightarrow u_P(t)\geq u_P(t')\quad\text{for all}\quad t,t'.
\end{equation}
In other words, optimal stopping for the \ppp{} requires that her strategy assigns positive mass (or probability) only to times that generate the highest possible payoff.\footnote{Each player's mixed strategy can be optimal even if it assigns non-zero probability mass to a set of $t$ at which $u_i(t)$ is suboptimal, as long as the probability of a realization in this set is zero.  However, if the realization of the player's faking time happened to be in this set, the player would be tempted to redraw it.} 

\subsection{Equilibrium With Beneficial Search}

I first focus on equilibria in which the \ppp{} benefits from the ability to search for news despite the \aaa{}'s manipulation. In such an equilibrium, the \ppp{}'s equilibrium payoff exceeds $\psafe$, her payoff from making a decision immediately based only on the prior belief.

To develop intuition for the structure of such an equilibrium, first consider a ``naive search'' by the \ppp{}. In particular, suppose that the \ppp{} adopts her strategy from the first-best benchmark: wait for news until stopping time $\tau_P$ and then select safe, while acting on all news that arrives ($a(\cdot)=1$). As long as the \ppp{}'s stopping time $\tau_P$ has not passed, the \aaa{}'s payoff is identical to his first-best benchmark---the distortions associated with the \ppp{}'s stopping are absent. Therefore, the \aaa{}'s payoff is increasing in his faking time if $t<\tau_P$ (his first-best payoff has a single peak at $\tau_A>\tau_P$). Thus, the \aaa{} would not want to fake at such times, which reinforces the \ppp{}'s incentive to wait for news (\ppp{}'s first-best payoff has a single peak at $\tau_P$). However, if time $\tau_P$ passes without a real news arrival, the \ppp{} stops on the safe action, which the \aaa{} dislikes under the prior belief. Thus, the \aaa{} would prefer to preempt the \ppp{}'s stopping by generating a fake arrival of type-1 news just before $\tau_P$, thereby inducing the \ppp{} to select  risky. Anticipating that it is fake, \ppp{} would ignore the type-1 arrival at time $\tau_P$, and the \aaa{} would like to preempt earlier, ``unraveling'' the search for news backwards from time $\tau_P$.

To explore this unraveling in more detail, it is helpful to consider incentives at two small adjacent time periods, an early period and a late period, both of which precede $\tau_P$. For ease of discussion, suppose that within-period, the news arrival occurs before the \ppp{}'s stopping decision.\footnote{In other words, if the \ppp{} plans to stop her search in a period, and type-1 news arrives in the period, then she selects risky according to her strategy $a(\cdot)$.} As described above, unraveling is triggered by the \aaa{}'s expectation that the \ppp{} will stop at the end of the late period in the absence of news, which incentivizes the \aaa{} to fake at the beginning of the late period in order to preempt the \ppp{}'s imminent choice of safe. Anticipating such a faking strategy, the \ppp{} would be skeptical about such a convenient arrival of type-1 news. Indeed, with this faking strategy the probability of a real arrival in the late period is low (the period is small), but the probability of a fake is high. Thus, the \ppp{} identifies the arrival of type-1 news in the late period as a likely fake and ignores it, selecting safe in the late period whether or not type-1 news arrives. But if she always selects safe in the late period, why wait? It is better for the \ppp{} to avoid the cost of discounting by selecting safe at the end of the early period. Of course, if the \aaa{} anticipates such early stopping by the \ppp{}, he would want to preempt it by faking in the early period. However, if the \faker{} \aaa{} always fakes early, then the \ppp{} does \textit{not} want to stop her search  at the end of the early period. Indeed, if no news comes in the early period, the \ppp{} concludes that the \aaa{} is \textit{not} the \faker{} type and that any future news is real. She therefore wants to wait for news until $\tau_P$. Thus, the \aaa{} wants to fake in the same period that the \ppp{} stops, but the \ppp{} wants to stop in the other period. This cyclical structure of best responses suggests that the search unravels stochastically: the \ppp{} mixes between stopping in the early period and the late period, and the \aaa{} mixes between faking early and late.

Extended to continuous time,  the preceding intuition suggests that in equilibrium, the atom of stopping and faking that would be concentrated at $\tau_P$ under a ``naive search''  unravels into an interval of times with smooth, stochastic stopping and faking. But, if search is beneficial for the \ppp{}, the stochastic unraveling cannot extend all the way to time 0. If it does, then the \ppp{}'s payoff is the same as if she immediately terminates the search for news by selecting safe, i.e., $\psafe$. Furthermore, the unraveling cannot leave an atom of fakes at $\tau_P$; the existence of such an atom triggered the unraveling in the first place. Provided there is no atom of fakes at $\tau_P$, an atom of stopping by the \ppp{} could occur at this time, since the \aaa{}'s faking strategy preempts the entire atom for certain.

The following proposition formalizes this intuition by characterizing the structure of an equilibrium with beneficial search.

\begin{prop}\label{strucval}(Structure of Beneficial Search Equilibrium.) If an equilibrium with beneficial search exists, then it has the following structure. There exists $\mint>0$ such that
\begin{enumerate}
	\item[(i)] the \aaa{}'s faking time is drawn from a mixed strategy with no mass points or gaps, supported on interval $[\mint,\tau_P]$.
	\item[(ii)] the \ppp{}'s stopping time is drawn from a mixed strategy with no gaps, supported on interval $[\mint,\tau_P]$; $\tau_P$ is the only possible mass point in the \ppp{}'s stopping strategy.
	\item[(iii)] the \ppp{} selects risky following type-1 news at all times, $a(t)=1$ for all $t\geq 0$.
\end{enumerate}
\end{prop}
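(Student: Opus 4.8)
The plan is to read the structure off the players' optimality conditions (\ref{AIC})--(\ref{PICt}), after first recording how each player's stopping payoff reacts, at the margin, to the other's distortion. Differentiating (\ref{uP}) and using $w^P_0(t)+w^P_1(t)=-\tfrac{d}{dt}W^P_\phi(t)$ gives the compact form $u_P'(t)=e^{-\rho t}\big[a(t)\,w^P_1(t)\,(\bl(t)-\psafe)-\rho\,\psafe\,W^P_\phi(t)\big]$, which on any interval where $a\equiv 1$ reduces to $\sign u_P'(t)=\sign\big(\rhaz(t)-\phi_P-\tfrac{\psafe-\mu}{\mu(1-\psafe)}\widetilde H_A(t)\big)$, with $\rhaz=\rpdf/(1-\rcdf)$ and $\widetilde H_A=\apri\apdf/(1-\apri\acdf)$. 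Likewise, differentiating (\ref{uA}) gives, on any interval where $a\equiv 1$, $\sign u_A'(t)=\sign\big(\rhaz(t)-\phi_A-\tfrac{\mu-\asafe}{(1-\mu)\asafe}\phaz(t)\big)$ with $\phaz=\ppdf/(1-\pcdf)$, and on any interval where $a\equiv 0$, $u_A'(t)=-\rho e^{-\rho t}W^A_\phi(t)\asafe<0$. At $\acdf\equiv 0$ (resp.\ $\pcdf\equiv 0$) these collapse to $\rhaz=\phi_P$ (resp.\ $\rhaz=\phi_A$), recovering Lemma~\ref{fb}; the two correction terms are exactly the forces in the text --- the \aaa{}'s faking depresses the \ppp{}'s return to waiting, the \ppp{}'s stopping depresses the \aaa{}'s.

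\emph{Part (iii), and locating the supports.} The compact identity already forces (iii) wherever the \ppp{} mixes: since $-\rho\psafe W^P_\phi(t)<0$ and $w^P_1(t)>0$, indifference $u_P'(t)=0$ at $t\in\mathrm{supp}\,\pcdf$ requires $a(t)(\bl(t)-\psafe)>0$, hence $a(t)>0$ and $\bl(t)>\psafe$, hence $a(t)=1$ by (\ref{PICa}); then the \ppp{}'s indifference pins $\widetilde H_A(t)=\tfrac{\mu(1-\psafe)}{\psafe-\mu}(\rhaz(t)-\phi_P)$, which is nonnegative only for $t\le\tau_P$ and strictly positive for $t<\tau_P$. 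So the \ppp{} never stops after $\tau_P$, and wherever she stops before $\tau_P$ the \aaa{} fakes with positive density. For (iii) at the remaining times: $a(t)<1$ forces $\bl(t)\le\psafe$, hence positive faking mass at $t$, hence $t\in\mathrm{supp}\,\acdf$; but on any interval where $a\equiv 0$ the identity gives $u_A'<0$, contradicting the \aaa{}'s indifference on $\mathrm{supp}\,\acdf$, and on any interval where $a\in(0,1)$ one has $\bl\equiv\psafe$, so $u_P'(t)=-\rho e^{-\rho t}W^P_\phi(t)\psafe<0$, incompatible with the \ppp{} being indifferent across the support of her stopping strategy. A further argument (using the regularity of $\widetilde H_A$ forced by the \ppp{}'s indifference) then rules out the remaining measure-zero violations, yielding $a\equiv 1$ everywhere; as an immediate corollary $\acdf$ has no atoms, since an atom would drop $\bl$ to $\mu<\psafe$, which is part of (i).

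\emph{The initial phase, coincidence of supports, and $\mint$.} Set $\mint:=\inf\{t:\pcdf(t)+\acdf(t)>0\}$; on $[0,\mint)$ neither player distorts, so $\bl\equiv 1$, $a\equiv 1$, play coincides with each player's first best, and $u_P$ on $[0,\mint]$ is the single-peaked first-best curve of Lemma~\ref{fb}. Since $u_P$ is continuous with $u_P(0)=\psafe$, while the \ppp{} is indifferent (value $u_P^{eq}>\psafe$) over $\mathrm{supp}\,\pcdf$, that support cannot accumulate at $0$; hence $\mint>0$ --- the sole use of ``beneficial search.'' With $a\equiv 1$ in hand, the identities can be used in a ``decoupling'' mode: on any open interval on which one player assigns no mass, the other's correction term vanishes, so below that player's first-best deadline the other's payoff is strictly monotone there. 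This rules out gaps in either support, forces both supports to be intervals with common left endpoint $\mint$, and (together with $\mathrm{supp}\,\pcdf\subseteq(-\infty,\tau_P]$ from the previous paragraph) forces the common right endpoint to equal $\tau_P$ --- if it were strictly smaller, then, with no faking past it, $u_P'>0$ just past it would make the \ppp{} strictly prefer stopping at $\tau_P$. This delivers the supports $[\mint,\tau_P]$ in (i) and (ii).

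\emph{The lone atom, and the main obstacle.} Finally, an atom of $\pcdf$ at some $t_0<\tau_P$ makes $u_A$ jump downward at $t_0$ --- by $e^{-\rho t_0}(1-\rcdf(t_0))(\mu-\asafe)$ times the atom's mass, because the \aaa{} strictly prefers faking an instant before $t_0$ to preempt the imminent choice of safe --- contradicting his indifference across $[\mint,\tau_P]$; the argument does not exclude an atom at $\tau_P$, and indeed, since the \aaa{}'s faking is complete by $\tau_P$, nothing is left there to preempt. (This is precisely the ``naive'' stopping-and-faking atom at $\tau_P$ unraveling backward into a mixing interval.) I expect the genuinely delicate steps to be (a) the case analysis behind ``$a\equiv 1$ everywhere'' --- upgrading ``$\{a<1\}$ contains no interval'' to ``$\{a<1\}=\varnothing$'' and excluding faking atoms --- and (b) the atom and coincidence-of-supports arguments, which lean on the model's tie-breaking convention (type-1 news arriving exactly at the \ppp{}'s intended stopping time is resolved in favor of safe); everything else is bookkeeping around the two marginal-value identities.
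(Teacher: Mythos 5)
Your outline reproduces the right economic architecture --- belief-collapse excludes atoms of $\acdf$, preemption excludes atoms of $\pcdf$ before $\tau_P$, continuity of $u_P$ plus $u_P(0)=\psafe$ gives $\mint>0$, and a decoupling argument handles gaps and the common endpoints --- but the engine you run it on is not available at the stage you use it. You derive everything from the pointwise first-order conditions $u_P'(t)=0$ and $u_A'(t)=0$ on the supports, which presupposes that $\acdf$ and $\pcdf$ are absolutely continuous near support points and that $u_P,u_A$ are differentiable there. That is part of what \autoref{strucval} is supposed to establish: a priori the mixed strategies are measures with atoms (and the paper's Claims \ref{nomassa}--\ref{nomassp} come \emph{last}), so ``indifference on $\mathrm{supp}\,\pcdf$ implies $u_P'(t)=0$'' is circular. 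The paper avoids this by never differentiating: Lemmas \ref{pbounds}, \ref{pcont} and \ref{abounds} give two-sided \emph{difference} bounds such as $u_P(t')-u_P(t)\leq(1-\sigma\acdf(t))\{u_P^{FB}(t')-u_P^{FB}(t)\}$ valid for arbitrary strategies, and the whole structure is extracted from those. Relatedly, your identity $\widetilde H_A(t)=\tfrac{\mu(1-\psafe)}{\psafe-\mu}(\rhaz(t)-\phi_P)$ is the content of Proposition \ref{charval}, obtained only after the structure is known; using it inside the proof of Proposition \ref{strucval} puts the cart before the horse.

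The second genuine gap is the one you yourself flag and then defer: upgrading ``$\{a<1\}$ contains no interval'' to ``$a\equiv 1$ everywhere.'' Your interval cases ($a\equiv 0$ and $a\in(0,1)$ on an interval) are fine in spirit, but the hard step is showing that a single point with $a(t)<1$ already forces a one-sided neighborhood with $a<1$ --- this is the paper's Lemma \ref{aball}, proved from the \aaa{}'s optimality via the difference bounds of Lemma \ref{abounds}, and it cannot be waved away as ``regularity of $\widetilde H_A$'' because at that stage no regularity of $\acdf$ is known. A smaller but real omission of the same kind: your gap/endpoint argument needs the backward-propagation lemmas (\ref{comp1}, \ref{comp2}, \ref{gapexp}, \ref{gapexa}), which show that a flat stretch of one player's CDF below $\tau_P$ forces the other's CDF flat on a strictly \emph{larger} interval, iterating down to $0$; strict monotonicity of the opponent's payoff \emph{inside} the gap alone does not produce the contradiction, because $u_A$ is not yet known to be continuous (that requires the no-atom result for $\pcdf$, which comes later) and because the gap could sit at the bottom of the support. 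The proof is repairable along exactly the paper's lines, but as written these two steps do not go through.
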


Consistent with the ``stochastic unraveling'' intuition described above,  an equilibrium with beneficial search time features two deadlines: a soft deadline $\mint$, and a hard deadline $\tau_P$. If the hard deadline is reached with no news, the \ppp{} stops the search and selects safe.  Before the soft deadline, the players incentives to wait for news are mutually reinforcing: during this phase there is no stopping by the \ppp{} or faking by the \aaa{}, and equilibrium search is identical to each player's first best. However, past $\mint$, the search becomes distorted. At the soft deadline, the \ppp{} becomes ``barely engaged'' in search---she acts on a news arrival if one occurs, but she may terminate the search at any moment if one does not. The \faker{} \aaa{} becomes ``anxious'' about the arrival rate of real news (which has decreased) and about the \ppp{} abandoning the search by selecting the safe action before news arrives. Consequently, he begins to fake randomly, which undermines the value of information and reinforces the \ppp{}'s minimal engagement in the search.  Although type-1 news is randomly fabricated during this phase, it remains sufficiently informative that the \ppp{} prefers to select the risky action when such news arrives. Maintaining the \ppp{}'s ``bare engagement'' with the search requires that a future type-1 news arrival is expected to be sufficiently informative to offset her cost of delay. By implication, such an arrival must also be informative enough to induce the risky action.

The equilibrium structure also reveals an interesting normative feature. The support of each player's mixed strategy begins at $\mint$, the soft deadline. Below $\mint$, there is no stopping or faking, and each player's payoff coincides with the first best. Furthermore, optimality of player strategies (\ref{AIC}, \ref{PICt}) imply that each player's equilibrium payoff is equal to the payoff of stopping the search or faking at $\mint$. Because there are no atoms at $\mint$, this payoff is simply the first-best payoff evaluated at $\mint$. Thus,  each player's payoff in the beneficial search equilibrium equal to the payoff he or she would receive in the first-best benchmark, but with a premature stopping time, $\mint<\tau_P<\tau_A$. In other words, if real news arrives in equilibrium before $\mint$, then both players get their first-best payoffs. But if the equilibrium search extends past $\mint$, players' payoffs are as if they simultaneously implemented their default actions at $\mint$, even though these default actions are different.

The following proposition completes the characterization of equilibria with beneficial search, deriving necessary and sufficient conditions for existence, establishing uniqueness, and providing closed-form expressions for the players' strategies.

\begin{prop}\label{charval}(Equilibrium With Beneficial Search). An equilibrium with beneficial search exists if and only if $\sigma<\bar{\sigma}$, where 
\begin{align*}
	\bar{\sigma}\equiv 1-\exp\{-\int_0^{\tau_P}\frac{\mu(1-\psafe)\rhaz(s)-\rho\psafe}{\psafe-\mu}ds\}.
\end{align*}
In such an equilibrium, then the players' faking and stopping strategies are
\begin{align*}
		\acdf(t)&=\frac{1}{\sigma}\Big(1-\exp\{-\int_{\mint}^{t}\frac{\mu(1-\psafe)\rhaz(s)-\rho\psafe}{\psafe-\mu}ds\}\Big),\\
				\pcdf(t)&=1-\exp\{-\int_{\mint}^{t}\frac{\asafe(1-\mu)\rhaz(s)-\rho\mu}{\mu-\asafe}ds\}\quad\text{for}\quad t\in[\mint,\tau_P)\quad\text{and}\quad\pcdf(\tau_P)=1,
\end{align*}
 both supported on $[\mint,\tau_P]$, where $\mint$ is the unique value for which 
\begin{align*}
1-\exp\{-\int_{\mint}^{\tau_P}\frac{\mu(1-\psafe)\rhaz(s)-\rho\psafe}{\psafe-\mu}ds\}=\sigma.
\end{align*}
The equilibrium payoff of player $i\in\{A,P\}$ is $u_i^{FB}(\mint)$. No other equilibrium with beneficial search exists.
\end{prop}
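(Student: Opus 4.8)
The plan is to take the structure established in Proposition~\ref{strucval} as given---the \aaa{}'s faking time is supported on $[\mint,\tau_P]$ with no atoms, the \ppp{}'s stopping time is supported on $[\mint,\tau_P]$ with a mass point possible only at $\tau_P$, and $a(\cdot)\equiv 1$---and then pin down the three remaining objects $\acdf$, $\pcdf$, and $\mint$ using the indifference conditions (\ref{AIC}) and (\ref{PICt}) together with the requirement that $\acdf$ and $\pcdf$ be honest CDFs supported on $[\mint,\tau_P]$. First I would substitute $a(\cdot)\equiv 1$ into $u_A(\cdot)$ from (\ref{uA}); on the open interval $(\mint,\tau_P)$ the \aaa{} must be indifferent (no atoms there), so $u_A(\cdot)$ is constant, and differentiating and simplifying (the $\mu\rpdf(t)(1-\pcdf(t))$ terms cancel) yields the linear relation $\ppdf(t)/(1-\pcdf(t))=(\asafe(1-\mu)\rhaz(t)-\rho\mu)/(\mu-\asafe)$. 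Integrating from $\mint$ with $\pcdf(\mint)=0$ gives the stated closed form for $\pcdf$ on $[\mint,\tau_P)$; since $\mu-\asafe>0$ and, by (A2) together with Lemma~\ref{fb}, $\rhaz(t)>\rhaz(\tau_A)=\phi_A$ for $t<\tau_P<\tau_A$, the integrand is strictly positive, so $\pcdf(\tau_P^-)<1$ and $\pcdf$ must jump to $1$ at $\tau_P$---exactly the atom permitted by Proposition~\ref{strucval}(ii). Symmetrically, substituting $a(\cdot)\equiv 1$ into $u_P(\cdot)$ from (\ref{uP}) and using that the numerator of $\bl(s)$ in (\ref{belief}) is $\mu\rpdf(s)(1-\apri\acdf(s))+\mu\apri\apdf(s)(1-\rcdf(s))$, the \ppp{}'s indifference on $(\mint,\tau_P)$ reduces to $\apri\apdf(t)/(1-\apri\acdf(t))=(\mu(1-\psafe)\rhaz(t)-\rho\psafe)/(\psafe-\mu)$, which integrates with $\acdf(\mint)=0$ to the stated formula for $\acdf$.

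Next I would determine $\mint$. Because the \aaa{}'s faking time is supported on $[\mint,\tau_P]$ with no atoms, $\acdf(\tau_P)=1$; substituting $t=\tau_P$ into the closed form for $\acdf$ yields precisely $1-\exp\{-\int_{\mint}^{\tau_P}(\mu(1-\psafe)\rhaz(s)-\rho\psafe)/(\psafe-\mu)\,ds\}=\sigma$. Since $\rhaz$ is decreasing with $\rhaz(\tau_P)=\phi_P$, the integrand is strictly positive on $[0,\tau_P)$, so the left-hand side is continuous and strictly decreasing in $\mint$, equal to $\bar\sigma$ at $\mint=0$ and $0$ at $\mint=\tau_P$. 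Hence the equation has a (unique) solution $\mint\in(0,\tau_P)$ exactly when $\sigma<\bar\sigma$; if $\sigma\geq\bar\sigma$ there is no admissible $\mint>0$, and the degenerate case $\mint=0$ gives the \ppp{} payoff $u_P^{FB}(0)=\psafe$, i.e.\ search would not be beneficial---this yields the ``if and only if.'' Uniqueness of the entire equilibrium follows because Proposition~\ref{strucval} fixes the structure and the steps above leave no residual freedom. For the payoff claim: on $[0,\mint]$ neither player stops or fakes, so $u_P(\mint)$ and $u_A(\mint)$ collapse to the respective first-best integrals evaluated at $\mint$, namely $u_i^{FB}(\mint)$; since there are no atoms at $\mint$, indifference implies every strategy in the support delivers this same value.

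The substantive work---and the step I expect to be the main obstacle---is verifying that the constructed profile is actually an equilibrium, i.e., that the local indifference I imposed on $(\mint,\tau_P)$ upgrades to \emph{global} optimality. Three checks are needed. (a) $a(t)=1$ must satisfy (\ref{PICa}): writing $\bl(t)=(x+\mu y)/(x+y)$ with $x=\mu\rpdf(t)(1-\apri\acdf(t))$ and $y=\apri\apdf(t)(1-\rcdf(t))$, the inequality $\bl(t)\geq\psafe$ is equivalent to $y/x\leq(1-\psafe)/(\psafe-\mu)$, and substituting the derived expression for $\apri\apdf/(1-\apri\acdf)$ this reduces to $-\rho\psafe\leq 0$; so in fact $\bl(t)>\psafe$ strictly on $(\mint,\tau_P)$, while off that interval there is no faking and $\bl\equiv 1>\psafe$. (b) The \aaa{} should not want a faking time outside $[\mint,\tau_P]$: for $t<\mint$ the \ppp{} has not begun stopping, so $u_A(t)=u_A^{FB}(t)$, which is increasing toward its peak $\tau_A>\mint$ by Lemma~\ref{fb}; for $t>\tau_P$, since $\pcdf(\tau_P)=1$ the game has terminated by $\tau_P$ on any path that reaches it, so $u_A$ is flat on $[\tau_P,\infty)$. (c) Symmetrically, the \ppp{} should not want to stop outside $[\mint,\tau_P]$: for $t<\mint$, $u_P(t)=u_P^{FB}(t)$ is increasing toward its peak $\tau_P$; for $t>\tau_P$, reaching $\tau_P$ without news reveals the \aaa{} is \norm{} and all later news is real, so the continuation is the \ppp{}'s first-best problem with hazard already below $\phi_P$, making immediate stopping at $\tau_P$ optimal.

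The conceptually hard part is therefore not the algebra of the ODEs but the global-optimality bookkeeping, in particular handling the Dirac-$\delta$/atom of $\pcdf$ at $\tau_P$ and the tie-breaking convention when type-1 news and the intended stopping coincide; once these are dispatched, the single-peakedness of $u_i^{FB}$ from Lemma~\ref{fb} does all the heavy lifting for the deviation comparisons, and the characterization in Proposition~\ref{charval} follows.
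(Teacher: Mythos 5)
Your proposal is correct and follows essentially the same route as the paper: take the structure from Proposition \ref{strucval} as given, derive each player's hazard rate of mixing from the opponent's indifference condition on $(\mint,\tau_P)$, integrate with the boundary conditions $\acdf(\mint)=\pcdf(\mint)=0$, pin down $\mint$ from $\acdf(\tau_P)=1$ to obtain the threshold $\bar\sigma$, and then verify global optimality (including that $\bl(t)>\psafe$ so $a\equiv 1$ is a best response, which is exactly the computation in Corollary \ref{corbel}). The only presentational difference is that you obtain the hazard relations by differentiating the indifference condition, whereas the paper works with the integral form and absolute continuity of the CDFs (which are not a priori differentiable); this is a technicality your argument would need to absorb, not a different approach.
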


\begin{figure}
\begin{minipage}{0.5\textwidth}
\begin{center}
\begin{tikzpicture}[scale=0.8]

\draw[-stealth] (0,0) -- (0,8); 
\draw[-stealth] (0,0) -- (8,0); 

\fill (8,0) node[right] {\footnotesize{$t$}};

\fill (7,0) node[below] {\footnotesize{$\tau_P$}};
\filldraw[color=black, fill=black] (7,0) circle (2.5pt);
\fill (3,0) node[below] {\footnotesize{$\mint$}};
\filldraw[color=black, fill=black] (3,0) circle (2.5pt);
\draw[line width=0.5pt, black, dashed] (7,8) -- (7,0);

\draw[line width=1pt, red] (0,0) -- (3,0); 
\draw[line width=1pt, red] (7,8) -- (8,8); 
\draw[line width=0.5pt, red] plot[smooth] coordinates {(3,0) 
(4,5) (5,7) (6,7.8) (7,8)};
\fill (7,7.2) node[left] {\footnotesize{$\acdf(\cdot)$}};

\draw[line width=0.5pt, blue] (0,0) -- (3,0); 
\draw[line width=0.5pt, blue] (7,8) -- (8,8); 
\filldraw[color=blue, fill=blue] (7,8) circle (2.5pt);
\filldraw[color=blue, fill=white] (7,5.1) circle (2.5pt);
\draw[line width=0.5pt, blue] plot[smooth] coordinates {(3,0) (4,3) (5,4.25) (6,4.8) (7,5.1)};
\fill (7,4.2) node[left] {\footnotesize{$\pcdf(\cdot)$}};

%

\end{tikzpicture} \captionsetup{font=footnotesize}
\captionof{figure}{Equilibrium Strategies $\acdf(\cdot),\pcdf(\cdot)$\label{strat}}
\end{center}
\end{minipage}
\begin{minipage}{0.5\textwidth}
\begin{center}
\begin{tikzpicture}[scale=0.8]

\draw[-stealth] (0,0) -- (0,8); 
\draw[-stealth] (0,0) -- (8,0); 

\fill (8,0) node[right] {\footnotesize{$t$}};

\fill (7,0) node[below] {\footnotesize{$\tau_P$}};
\filldraw[color=black, fill=black] (7,0) circle (2pt);
\fill (3,0) node[below] {\footnotesize{$\mint$}};
\filldraw[color=black, fill=black] (3,0) circle (2pt);

\fill (0,1) node[left] {\footnotesize{$\mu$}};
\fill (0,4) node[left] {\footnotesize{$\psafe$}};
\draw[line width=0.5pt,dashed] (0,3) -- (3,3); 
\draw[line width=0.5pt,dashed] (0,6.3) -- (3,6.3); 
\fill (0,6.3) node[left] {\footnotesize{$u_P^{FB}(\mint)$}};

\draw[line width=0.5pt,red] (0,1) .. controls (0,1) and (1,2.5) .. (3,3);
\draw[line width=0.5pt,red] (3,3) -- (7,3);
\fill (7,3.5) node[right] {\footnotesize{$u_A(\cdot)$}};
\fill (0,3) node[left] {\footnotesize{$u_A^{FB}(\mint)$}};


\draw[line width=0.5pt,blue] (0,4) .. controls (0,4) and (1,5.5) .. (3,6.3);

\fill (7,6.7) node[right] {\footnotesize{$u_P(\cdot)$}};

\draw[line width=0.5pt,blue] (3,6.3) -- (7,6.3);
\draw[line width=0.5pt,blue] (7,6.3) .. controls (7.25,6.29) and (7.75,6.15) .. (8,5.95);

\draw[line width=0.5pt,black,dashed] (3,0) -- (3,6.3);
\draw[line width=0.5pt,black,dashed] (7,0) -- (7,6.3);
\filldraw[color=red, fill=white] (7,3) circle (2.5pt);
\filldraw[color=red] (7,2.1) circle (2.5pt);
\draw[line width=0.5pt,red] (7,2.1) -- (8,2.1);
\end{tikzpicture}
\captionsetup{font=footnotesize}
\captionof{figure}{Equilibrium payoffs $u_A(\cdot)$, $u_P(\cdot)$\label{pay}}
\end{center}
\end{minipage}
\end{figure}


The  equilibrium is illustrated graphically in Figures \ref{strat} and \ref{pay}, which depict the equilibrium strategies and payoff functions of the players. As illustrated in Figure \ref{strat}, the \aaa{} smoothly mixes on the equilibrium support $[\mint,\tau_P]$, but the \ppp{}'s strategy has an atom at $\tau_P$. Because of this atom, the \aaa{}'s payoff of waiting to fake until $\tau_P$ has a downward jump discontinuity. The jump in the \aaa{}'s payoff does not disrupt the equilibrium, because the \faker{} \aaa{} fakes the news with probability 1 before time $\tau_P$ under his strategy. In other words, time $\tau_P$ is only ever reached if the \aaa{} is \textit{not} the \faker{} type; therefore, the \ppp{}'s atom of stopping at this time does not disrupt the incentives of the \faker{} agent. This atom has another interesting implication. Together with $a(\cdot)=1$ (see Proposition \ref{strucval}), the atom's existence implies that the \ppp{} engages in a ``naive search'' with positive probability in equilibrium. To an outside observer, it may appear that the \ppp{} is completely unaware that the \aaa{} may have the ability to produce fake news, even though her equilibrium strategy accounts fully for this possibility.

Next, consider the informativeness of news in a beneficial search equilibrium. Although the underlying real news process is fully revealing, the equilibrium news observed by the \ppp{} is distorted by the \aaa{}'s faking. Consistent with the intuition underlying stochastic unraveling and Proposition \ref{strucval}, type-1 news is always sufficiently informative to induce the \ppp{} to select the risky action when it arrives. However, the informativeness of type-1 news is non-monotone in the arrival time. At early times, before the soft deadline, type-1 arrivals are fully revealing. At the soft deadline, $\bl(t)$ drops discontinuously, and it increases gradually over the equilibrium support. At the hard deadline, type-1 news is fully revealing once again. The \ppp{} nevertheless prefers to exit at the hard deadline---even though a future news arrival reveals the risky payoff, it comes too slowly to justify additional delay. Thus, the informativeness of type-1 news exhibits a pattern of collapse and recovery, as formalized in the following corollary.

\begin{corollary}\label{corbel}(Beneficial Search, Beliefs). In an equilibrium with beneficial search, 
\begin{enumerate}
\item [(i)] the posterior belief that $\omega=1$ given type-1 news at $t$ is
\begin{align*}
	\mu_1(t)=\begin{cases}
		1 & \text{if }t\in[0,\mint)\\
		\psafe+(\psafe-\mu)\frac{\rho\psafe}{\rhaz(t)\mu(1-\mu)-\rho\theta} &\text{if }t\in[\mint,\tau_P]\\
		1 & \text{if }t\in[\tau_P,\infty).
	\end{cases}
\end{align*}
\item[(ii)] belief $\mu_1(t)$ has a downward jump at $\mint$, is strictly increasing on $[\mint,\tau_P)$, and is continuous at $\tau_P$. In addition, an $\epsilon>0$ exists such that $\mu_1(t)\geq \psafe+\epsilon$ at all $t\geq 0$.
\item[(iii)] type-0 news reveals that $\omega=0$ at all times.	
\end{enumerate}
\end{corollary}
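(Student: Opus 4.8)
The plan is to read the three branches of part~(i) directly off the closed-form strategies in Proposition~\ref{charval}, and then obtain the qualitative statements in parts~(ii)--(iii) by elementary calculus, using throughout the identity $\rpdf(s)=\rhaz(s)(1-\rcdf(s))$, which is what makes the algebra collapse. Part~(iii) is immediate from the model: the \aaa{} never fabricates type-0 news, so a type-0 arrival is necessarily real and hence reports $\omega$ accurately; the posterior on $\omega=0$ after type-0 news is $1$ at every $t$. For $t\in[0,\mint)$, Proposition~\ref{strucval}(i) gives $\acdf(s)=\apdf(s)=0$, and (A1) with the monotonicity of $\rhaz(\cdot)$ gives $\rpdf(s)>0$; substituting into~\eqref{belief}, the numerator and denominator both reduce to $\mu\rpdf(s)$, so $\bl(s)=1$. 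For $t>\tau_P$, Proposition~\ref{strucval}(i) gives $\apdf(t)=0$ while the defining equation for $\mint$ in Proposition~\ref{charval} forces $\acdf(\tau_P)=1$, so a type-1 arrival after $\tau_P$ must be real and $\bl(t)=1$ (where the real process has no remaining mass this is the natural limiting value).

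The interior branch $t\in[\mint,\tau_P]$ carries the real content, and its derivation is the step I expect to be the main (if routine) obstacle. Here I would substitute the closed form of $\acdf$ from Proposition~\ref{charval} into~\eqref{belief}. Differentiating $1-\apri\acdf(t)=\exp\{-\int_{\mint}^{t}\frac{\mu(1-\psafe)\rhaz(s)-\rho\psafe}{\psafe-\mu}\,ds\}$ yields $\apri\apdf(t)=(1-\apri\acdf(t))\,\frac{\mu(1-\psafe)\rhaz(t)-\rho\psafe}{\psafe-\mu}$. Inserting this together with $\rpdf=\rhaz(1-\rcdf)$ into the numerator and denominator of~\eqref{belief}, the common factor $(1-\apri\acdf(t))(1-\rcdf(t))$ cancels and the surviving hazard-rate combinations simplify: the denominator becomes proportional to $\rhaz(t)\mu(1-\mu)-\rho\psafe$, and the quotient reduces to $\psafe+(\psafe-\mu)\frac{\rho\psafe}{\rhaz(t)\mu(1-\mu)-\rho\psafe}$. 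This is just the same cancellation that produced the closed form for $\acdf$, now run in reverse. Equivalently, one may derive it from the \ppp{}'s stopping optimality~\eqref{PICt}: since her support $[\mint,\tau_P]$ has no gaps, $u_P'(t)=0$ on $(\mint,\tau_P)$, which rearranges, using~\eqref{uP} and $a(\cdot)=1$, to $w_1^P(t)(\bl(t)-\psafe)=\rho\psafe\,W_\phi^P(t)$; substituting the equilibrium expressions for $w_1^P(t)$ and $W_\phi^P(t)$ gives the same $\bl(t)$.

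For part~(ii) I would proceed by inspection of this formula. Using $\rhaz(\tau_P)=\phi_P=\frac{\rho\psafe}{\mu(1-\psafe)}$ one checks $\rhaz(\tau_P)\mu(1-\mu)-\rho\psafe=\rho\psafe\frac{\psafe-\mu}{1-\psafe}>0$; since $\rhaz(\cdot)$ is decreasing and, by uniqueness of $\tau_P$ in Lemma~\ref{fb}(ii), $\rhaz(t)>\phi_P$ strictly for $t<\tau_P$, the denominator $\rhaz(t)\mu(1-\mu)-\rho\psafe$ is strictly positive on $[\mint,\tau_P]$, so the interior expression is well-defined and exceeds $\psafe$ there. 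As $\rhaz$ decreases, this denominator decreases, so $\frac{\rho\psafe}{\rhaz(t)\mu(1-\mu)-\rho\psafe}$ increases and $\bl$ is strictly increasing on $[\mint,\tau_P)$. Plugging $\rhaz(\tau_P)=\phi_P$ into the interior formula gives $\bl(\tau_P)=\psafe+(\psafe-\mu)\frac{1-\psafe}{\psafe-\mu}=1$, matching the $t\ge\tau_P$ branch and establishing continuity at $\tau_P$; since $\lim_{t\uparrow\mint}\bl(t)=1>\bl(\mint)$ (the latter being $<1$ because $\bl$ is strictly increasing on $[\mint,\tau_P)$ with $\bl(\tau_P)=1$), there is a downward jump at $\mint$. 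Finally, $\bl$ equals $1$ off $[\mint,\tau_P]$ and is increasing inside with $\bl(\tau_P)=1$, so $\min_{t\ge 0}\bl(t)=\bl(\mint)$; as $\rhaz(\mint)\le\rhaz(0)<\infty$ by (A1), $\bl(\mint)>\psafe$ strictly, and $\epsilon\equiv\bl(\mint)-\psafe>0$ yields $\bl(t)\ge\psafe+\epsilon$ for all $t\ge 0$.
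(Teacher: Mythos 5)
Your proposal is correct and follows essentially the same route as the paper: it reads $\sigma\apdf(t)/(1-\sigma\acdf(t))$ off the closed form in Proposition \ref{charval}, cancels the common factor $(1-\sigma\acdf(t))(1-\rcdf(t))$ in \eqref{belief} to get the interior formula, and then obtains monotonicity, continuity at $\tau_P$, the jump at $\mint$, and the uniform bound $\epsilon=\bl(\mint)-\psafe>0$ by inspection, exactly as the paper does. The alternative derivation via the principal's indifference condition is a nice redundancy but not needed.
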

 The informativeness of the equilibrium news process is undermined most significantly near the soft deadline, at the earliest times in the support of the \aaa{}'s distribution of faking times. This might be surprising, given that the \faker{} \aaa{} ``panics'' as the hard deadline approaches and the hazard rate of faking approaches infinity at the hard deadline. This effect arises from a ``feedback loop'' in the \ppp{}'s beliefs about the \aaa{}'s type. Reaching a time near the hard deadline without a type-1 arrival is, in and of itself, a strong signal that the \aaa{} is not manipulative, suggesting that a type-1 arrival is likely real.  The \faker{} \aaa{} exploits the \ppp{}'s optimism by faking aggressively. If type-1 news does not arrive, it is an  even stronger signal that the \aaa{} is not \faker{}, further reinforcing the \faker{} type's incentive to fake an instant later, and so on. The informativeness of type-1 news increases over time because the \aaa{} is less likely to be \faker{} as time passes, which counteracts the increased aggressiveness of the \faker{} \aaa{}'s faking.

The comparative statics with respect to $\{\sigma,\psafe\}$ are helpful in subsequent analysis. Intuitively, an increase in $\psafe$ intensifies the preference misalignment between players, as well as the disagreement over default action and search duration.\footnote{When the safe payoff is higher, the \ppp{} prefers safe more strongly under the prior, and in addition, prefers an even shorter duration under the first best, shifting it further away from the \aaa{}'s.} Thus, we might expect that the equilibrium distortions might be stronger, shifting the soft and hard deadlines, as well as the players' strategies toward earlier times. Meanwhile, $\sigma$ determines the salience of the misalignment: if it is low, it is unlikely that the \aaa{} can affect the \ppp{}'s action, even if he would gain more by doing so. Effectively, increases in $\sigma$ intensify the conflict of interest and reductions weaken it. Thus, we might expect increases in $\sigma$ to shift the soft deadline and the player's strategies toward earlier times, resulting in a stronger distortion from the first best.

\begin{corollary}\label{corcs}(Beneficial Search, Comparative Statics). In an equilibrium with beneficial search, changes in model parameters have the following effects:
\begin{enumerate}
	\item[(i)] An increase in $\sigma$ reduces $\tau_M$, has no effect on $\tau_P$, and generates a first order stochastic dominance shift in the stopping and faking strategies toward earlier times. 
	\item[(ii)] An increase in $\psafe$ reduces $\tau_M$ and $\tau_P$, and generates a first order stochastic dominance shift in the stopping and faking strategies toward earlier times.
\end{enumerate}
\end{corollary}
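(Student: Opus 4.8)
The plan is to read every assertion directly off the closed-form expressions in Proposition~\ref{charval}, using only that $\rhaz$ is decreasing and that $\rhaz(\tau_P)=\phi_P$, $\rhaz(\tau_A)=\phi_A$ from Lemma~\ref{fb}. Write $k(s)\equiv\frac{\mu(1-\psafe)\rhaz(s)-\rho\psafe}{\psafe-\mu}$ and $\ell(s)\equiv\frac{\asafe(1-\mu)\rhaz(s)-\rho\mu}{\mu-\asafe}$ for the two integrands appearing in $\acdf$ and $\pcdf$. The first step I would carry out is to fix their signs on the equilibrium support: for $s\in[\mint,\tau_P)$ we have $s<\tau_P<\tau_A$, hence $\rhaz(s)>\rhaz(\tau_P)=\phi_P$ and $\rhaz(s)>\rhaz(\tau_A)=\phi_A$, so $k(s)>0$ and $\ell(s)>0$. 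The second preliminary step is to record the identity $\int_{\mint}^{\tau_P}k(s)\,ds=-\ln(1-\sigma)$ coming from the equation defining $\mint$, and to use it to rewrite $\acdf(t)=\frac{1}{\sigma}(1-(1-\sigma)\exp\{\int_t^{\tau_P}k(s)\,ds\})$ on the support --- a form in which $\mint$ no longer appears explicitly.

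For part~(i): $\tau_P$ solves $\rhaz(\tau_P)=\phi_P$, an equation free of $\sigma$, so it is unchanged. Since $\int_{\mint}^{\tau_P}k=-\ln(1-\sigma)$ is strictly increasing in $\sigma$ and, because $k>0$, strictly decreasing in $\mint$, it follows that $\mint$ strictly decreases in $\sigma$. For the faking strategy, fix $t$ in the support, set $a\equiv\int_{\mint}^t k$ and $c\equiv\int_t^{\tau_P}k$, so that $\acdf(t)=\frac{1-e^{-a}}{1-e^{-(a+c)}}$ with $c$ independent of $\sigma$ and $a$ increasing as $\mint$ falls; a one-line differentiation gives $\partial_a\acdf(t)=e^{-a}(1-e^{-c})/(1-e^{-(a+c)})^2>0$, so $\acdf(t)$ rises, and at times between the new and old values of $\mint$ the CDF rises from $0$ --- a pointwise increase in the CDF, i.e.\ a FOSD shift toward earlier times. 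For the stopping strategy, $\pcdf(t)=1-\exp\{-\int_{\mint}^t\ell\}$ on $[\mint,\tau_P)$ with $\ell>0$ and $\mint$ falling, so $\pcdf(t)$ rises pointwise (and the atom $\exp\{-\int_{\mint}^{\tau_P}\ell\}$ at $\tau_P$ shrinks), giving the same conclusion.

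For part~(ii): $\phi_P=\frac{\rho\psafe}{\mu(1-\psafe)}$ is increasing in $\psafe$, so $\rhaz(\tau_P)$ must rise and, $\rhaz$ being decreasing, $\tau_P$ falls. For $\mint$ I would apply the implicit function theorem to $\Psi(\mint,\psafe)\equiv\int_{\mint}^{\tau_P(\psafe)}k(s;\psafe)\,ds=-\ln(1-\sigma)$: here $\partial_{\mint}\Psi=-k(\mint)<0$, and the Leibniz boundary term in $\partial_\psafe\Psi$ vanishes because $k(\tau_P;\psafe)=0$ (since $\mu(1-\psafe)\rhaz(\tau_P)=\rho\psafe$ by definition of $\tau_P$), leaving $\partial_\psafe\Psi=\int_{\mint}^{\tau_P}\partial_\psafe k(s;\psafe)\,ds$ with $\partial_\psafe k(s;\psafe)=\frac{\mu[\rho-(1-\mu)\rhaz(s)]}{(\psafe-\mu)^2}<0$ on the support, the last sign holding because $\rhaz(s)>\phi_P>\frac{\rho}{1-\mu}$ (the final inequality is equivalent to $\psafe>\mu$). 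Hence $\mint'(\psafe)=-\partial_\psafe\Psi/\partial_{\mint}\Psi<0$, so $\mint$ falls. For the strategies, $\ell$ does not depend on $\psafe$, so $\pcdf$ responds only through the fall of $\mint$ (and the leftward move of its atom via the fall of $\tau_P$) and rises pointwise exactly as in part~(i); and writing $\acdf(t)=\frac{1}{\sigma}(1-(1-\sigma)\exp\{\int_t^{\tau_P(\psafe)}k(s;\psafe)\,ds\})$, the same vanishing-boundary-term computation gives $\frac{d}{d\psafe}\int_t^{\tau_P(\psafe)}k(s;\psafe)\,ds=\int_t^{\tau_P}\partial_\psafe k\,ds<0$, so $\acdf(t)$ rises. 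In each case I would finish with the support-endpoint bookkeeping: where the support shrinks the CDF there equals $1$ (hence still dominates) and where it extends leftward it rises from $0$.

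The step I expect to be the main obstacle is the $\psafe$ comparative static, specifically the parts where both the integrand $k(\cdot;\psafe)$ and the upper limit $\tau_P(\psafe)$ move and their effects look opposed (a larger $\psafe$ shrinks each $k(s)$ but also shortens $[\mint,\tau_P]$). Two facts resolve this: (a) the envelope-type identity $k(\tau_P;\psafe)=0$, which annihilates the endpoint contribution, and (b) the uniform sign $\partial_\psafe k<0$ on the support, which itself rests on the chain $\rhaz(s)\ge\rhaz(\tau_P)=\phi_P>\rho/(1-\mu)$; together these render the sign of every relevant derivative unambiguous. A minor caveat is smoothness: I would assume $\rhaz$ is $C^1$ and strictly decreasing for the implicit-function step and for the strictness of the conclusions (under merely weak monotonicity the statements hold in weak form).
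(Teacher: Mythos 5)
Your proposal is correct and follows essentially the same route as the paper: implicit differentiation of the equation $\int_{\mint}^{\tau_P}k(s)\,ds=-\ln(1-\sigma)$ defining $\mint$, the sign facts $k,\ell>0$ on the support from $\rhaz(s)>\phi_P>\phi_A$-type inequalities, the envelope identity $k(\tau_P;\psafe)=0$ killing the Leibniz boundary term, and $\partial_\psafe k<0$ via $\rhaz(s)>\phi_P>\rho/(1-\mu)$. Your reparametrization $\acdf(t)=\frac{1}{\sigma}\bigl(1-(1-\sigma)\exp\{\int_t^{\tau_P}k\}\bigr)$ is in fact a cleaner way to sign $d\acdf/d\sigma$ than the paper's decomposition into $\partial_\sigma$ and $\partial_{\mint}$ pieces (whose stated lower bound $1/\sigma-1/\sigma^2$ does not by itself establish positivity), so no gap.
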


The normative properties of the equilibrium are also worth discussing further. As the probability of the \aaa{} being the faker increases, the stochastic unraveling intensifies and the soft deadline $\mint$ shifts toward zero, thereby reducing the value of search and the equilibrium payoffs of the \ppp{} and \faker{} \aaa{}.  Furthermore, if the probability of the \faker{} type is too large, then the unraveling completely undermines the value of search for the \ppp{}: no equilibrium with beneficial search exists. Perhaps the most striking normative feature of Proposition \ref{charval} is that the stochastic unraveling can completely undermine the value of search, even if the degree of preference misalignment between \ppp{} and \aaa{} is arbitrarily small. Indeed, provided that the players disagree over the default action ($\mu\in(\asafe,\psafe)$), (A2) holds, and the \aaa{} is sufficiently likely to be \faker{}, search has no value for the \ppp{} in equilibrium regardless of the degree of preference misalignment.

\subsection{Equilibrium Without Beneficial Search}

The preceding analysis implies that equilibrium with beneficial search exists only for a subset of parameters, in particular, when the prior probability of the \faker{} \aaa{} is not too high. In the remaining situations, if an equilibrium exists, it must be such that search is not beneficial. Does such an equilibrium indeed exist when $\sigma>\bar{\sigma}$? Furthermore, the logic of strategic unraveling described above suggests a strategic complementarity between the \ppp{}'s stopping and the \aaa{}'s faking. Could it be that, because of this strategic complementarity, multiple equilibria exist for $\sigma<\bar{\sigma}$, one with beneficial search and another (or others) without? 

This section answers the preceding questions, deriving necessary and sufficient conditions for the existence of an equilibrium in which the \ppp{} does not benefit from search. In particular, Proposition \ref{charnot} establishes that an equilibrium without beneficial search exists if and only if $\sigma>\bar{\sigma}$, and provides an explicit characterization of one such equilibrium.

If the \ppp{} does not benefit from search in equilibrium, then her equilibrium payoff cannot be higher than $\psafe$, her payoff of choosing an action based on the prior alone. Simultaneously, $u_P(0)=\psafe$, regardless of the \aaa{}'s strategy. Therefore, in such an equilibrium the \ppp{}'s equilibrium payoff must be $\psafe$. From (\ref{PICt}), we have $u_P(t)\leq \psafe$ for all $t\geq 0$, with equality for all $t$ at which $\ppdf(t)>0$. Therefore, in an equilibrium without beneficial search, the \ppp{} is indifferent between stopping at any such time and stopping at time $0$. 

Building on the preceding observation, suppose that  an equilibrium without beneficial search exists. Without changing the \aaa{}'s strategy $\acdf(\cdot)$, suppose the \ppp{} stops her search at time 0 with probability 1. As described above, such a stopping decision is optimal given the \aaa{} strategy. Furthermore, if the \ppp{} stops her search at time 0, then the \aaa{}'s payoff is $\asafe$, regardless of the faking time he chooses. Thus, the \aaa{} is indifferent over all possible faking times. In particular, the \aaa{}'s original faking strategy $\acdf(\cdot)$ is optimal. By implication, immediate stopping for the \ppp{}, coupled with the \aaa{}'s original strategy is also an equilibrium.

\begin{lemma}\label{IS}(Immediate Stopping). If an equilibrium exists in which the \ppp{} does not benefit from search, then an equilibrium exists in which the \ppp{} does not benefit from search and selects the safe action at time 0 with probability 1.
\end{lemma}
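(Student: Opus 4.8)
The plan is to take a hypothesized equilibrium $(a(\cdot),\pcdf(\cdot),\acdf(\cdot))$ in which the \ppp{} does not benefit from search, to hold the \aaa{}'s faking strategy $\acdf(\cdot)$ and the \ppp{}'s news response $a(\cdot)$ fixed, and to replace only the \ppp{}'s stopping strategy by the degenerate distribution that puts all mass on $t=0$, i.e., $\pcdf^{0}(t)=\mathcal I(t\ge 0)$. I would then check that the modified profile $(a(\cdot),\pcdf^{0}(\cdot),\acdf(\cdot))$ is again an equilibrium and that the \ppp{}'s payoff in it is still $\psafe$, so that search remains non-beneficial.

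The first step is to pin down the \ppp{}'s equilibrium payoff in the original equilibrium. Since $u_P(0)=\psafe$ for any $\acdf(\cdot)$ (stopping immediately guarantees the safe payoff), the \ppp{}'s equilibrium payoff is at least $\psafe$; the hypothesis that search is not beneficial says it is at most $\psafe$; hence it equals $\psafe$. Combining this with optimality of the \ppp{}'s stopping strategy, condition (\ref{PICt})---which forces the equilibrium payoff to weakly dominate $u_P(t')$ for every $t'$---gives $u_P(t)\le\psafe$ for all $t\ge 0$, with equality at $t=0$.

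The second step verifies the three equilibrium conditions for the modified profile. Condition (ii) is immediate: the posterior $\bl(\cdot)$ in (\ref{belief}) depends only on $\acdf(\cdot)$, which has not changed, so the original $a(\cdot)$ still satisfies (\ref{PICa}). Condition (iii): the \ppp{}'s payoff function $u_P(\cdot)$ is built only from $\acdf(\cdot)$ and $a(\cdot)$ (see (\ref{uP})), so it is exactly the same object as before, and by Step 1 it is maximized at $t=0$; since $\pcdf^{0}$ places all mass there, (\ref{PICt}) holds. Condition (i): when the \ppp{} stops at $t=0$ with probability one, the game ends immediately on the safe action before any fake can be produced---using the tie-breaking convention that on $[t,t+dt]$ the \ppp{} decides to stop before observing an arrival---so $u_A(t)=\asafe$ for every faking time $t$; the \aaa{} is therefore indifferent across all faking times, and in particular the original $\acdf(\cdot)$ satisfies (\ref{AIC}).

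Finally, under the modified profile the \ppp{}'s payoff is $u_P(0)=\psafe$, so search is not beneficial, which is the claim. The only delicate point---the ``main obstacle,'' such as it is---is Step 1: one must argue from equilibrium optimality that the \ppp{}'s value is \emph{exactly} $\psafe$ and hence that $u_P(t)\le\psafe$ holds \emph{globally}, not merely on the support of $\pcdf(\cdot)$. Everything else follows from the observation that the objects governing the remaining incentive conditions ($u_P$, $\bl$, and the \aaa{}'s payoff against immediate stopping) do not depend on the component of the \ppp{}'s strategy that is being altered.
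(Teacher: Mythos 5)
Your proposal is correct and follows essentially the same route as the paper: pin down the principal's equilibrium value at exactly $\psafe$ (so that $u_P(t)\le\psafe$ globally via the optimality condition), swap in immediate stopping while holding $\acdf(\cdot)$ and $a(\cdot)$ fixed, and note that against immediate stopping the agent earns $\asafe$ at every faking time and is therefore indifferent, so the original $\acdf(\cdot)$ remains optimal. Your explicit verification of conditions (ii) and (iii) via the observation that $\bl(\cdot)$ and $u_P(\cdot)$ depend only on $\acdf(\cdot)$ and $a(\cdot)$ is a slightly more careful rendering of the same argument.
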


Lemma \ref{IS} simplifies the characterization of equilibrium without beneficial search. It allows us to focus on equilibria in which the \ppp{} immediately stops, so that the \aaa{} is indifferent over all possible faking strategies. Thus, an equilibrium without beneficial search exists if and only if a faking strategy exists such that immediate stopping by the \ppp{} is a best response. In other words, even if she selects her action profile optimally (as in (\ref{PICa})), $u_P(t)\leq\psafe$ for all $t\geq 0$. Necessary conditions for existence of such an equilibrium, along with one such equilibrium are presented below.

\begin{prop}\label{charnot}(Equilibrium Without Beneficial Search). An equilibrium without beneficial search exists if and only if $\sigma\geq \bar{\sigma}$. If this condition holds, then many such equilibria exist. In one such equilibrium, the \ppp{} terminates her search immediately, and the \aaa{} uses continuous faking strategy 
	\begin{align*}
		\acdf(t)&=\frac{1}{\sigma}\Big(1-\exp\{-\int_{0}^{t}\frac{\mu(1-\psafe)\rhaz(s)-\rho\psafe}{\psafe-\mu}ds\}\Big),
\end{align*}
supported on $[0,\tau_X]$, where $\tau_X>\tau_P$.
\end{prop}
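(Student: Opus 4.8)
I would prove the two directions separately, in each case using Lemma~\ref{IS} to reduce the problem to the principal's stopping incentives: a no-beneficial-search equilibrium exists iff there is a faking strategy $\acdf(\cdot)$ under which immediate stopping is a best response for the principal, i.e.\ (since $u_P(0)=\psafe$ always) $u_P(t)\le\psafe$ for all $t\ge 0$ when she responds optimally to type-1 news. The workhorse is the derivative of (\ref{uP}): writing $v(t)=(1-a(t))\psafe+a(t)\bl(t)$, a direct computation from $w_0^P,w_1^P,W_\phi^P$ gives $u_P'(t)=e^{-\rho t}\big(w_1^P(t)(v(t)-\psafe)-\rho W_\phi^P(t)\psafe\big)$, and, in the region where she acts on type-1 news, substituting (\ref{belief}) turns this into $u_P'(t)=e^{-\rho t}(1-\rcdf(t))\big[(1-\apri\acdf(t))(\mu(1-\psafe)\rhaz(t)-\rho\psafe)-(\psafe-\mu)\apri\apdf(t)\big]$.

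\emph{Sufficiency.} Assume $\sigma\ge\bar\sigma$. On $[0,\tau_P]$ let the agent fake at exactly the rate making $u_P'(t)=0$, namely $\apri\apdf(t)/(1-\apri\acdf(t))=(\mu(1-\psafe)\rhaz(t)-\rho\psafe)/(\psafe-\mu)$; its solution with $\acdf(0)=0$ is precisely the stated formula, and it satisfies $\apri\acdf(\tau_P)=\bar\sigma$, so $\acdf(\tau_P)=\bar\sigma/\sigma\le 1$ exactly because $\sigma\ge\bar\sigma$. A short computation from (\ref{belief}) shows $\bl(t)>\psafe$ along this path on $[0,\tau_P]$, so the principal's optimal response is $a(t)=1$ and by construction $u_P(t)=\psafe$ on $[0,\tau_P]$. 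The agent then distributes his remaining probability $1-\bar\sigma/\sigma$ over an interval $(\tau_P,\tau_X]$ in any admissible way (e.g.\ at the rate keeping $\bl(\cdot)=\psafe$); from the displayed $u_P'$, for $t>\tau_P$ we have $\rhaz(t)<\phi_P$ and hence $u_P'(t)<0$ no matter how the agent continues, so $u_P(t)<\psafe$ beyond $\tau_P$. Thus immediate stopping is optimal for the principal; and since she stops at $0$, the agent's payoff is $\asafe$ regardless of his faking time, so his strategy is a best response. This is the equilibrium in the statement, with $\tau_X>\tau_P$ whenever $\sigma>\bar\sigma$.

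\emph{Necessity.} Suppose a no-beneficial-search equilibrium exists; by Lemma~\ref{IS} take one with immediate stopping and fix its $\acdf(\cdot)$, so that $u_P(t)\le\psafe$ for all $t$, in particular on $[0,\tau_P]$. The key claim is that this forces $\apri\acdf(\tau_P)\ge\bar\sigma$, which (as $\apri\acdf(\tau_P)\le\sigma$) gives $\sigma\ge\bar\sigma$. I would first upgrade the displayed formula to a \emph{uniform} lower bound $u_P'(t)\ge e^{-\rho t}(1-\rcdf(t))\big[(1-\apri\acdf(t))(\mu(1-\psafe)\rhaz(t)-\rho\psafe)-(\psafe-\mu)\apri\apdf(t)\big]$ holding whether or not the principal acts on type-1 news: when she does it is the equality; when she does not, $\bl(t)\le\psafe$ forces $\apri\apdf(t)\ge\tfrac{\mu(1-\psafe)\rhaz(t)}{\psafe-\mu}(1-\apri\acdf(t))$, which drops the right-hand side to at most $-\rho\psafe e^{-\rho t}W_\phi^P(t)=u_P'(t)$. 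Integrating this bound over $[0,s]$ and using $u_P(s)\le\psafe=u_P(0)$ yields a Gr\"onwall-type inequality forcing $\apri\acdf(s)\ge 1-\exp\{-\int_0^s\tfrac{\mu(1-\psafe)\rhaz(r)-\rho\psafe}{\psafe-\mu}\,dr\}$ for every $s\in[0,\tau_P]$; at $s=\tau_P$ this is exactly $\apri\acdf(\tau_P)\ge\bar\sigma$. Atoms in $\acdf$ only raise the left side, so they do not break the bound.

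Multiplicity follows because the constructed faking strategy is highly non-unique: any admissible continuation past $\tau_P$ works, and --- when $\sigma$ is large enough to afford it --- so does faking on $[0,\tau_P]$ at the faster rate $\tfrac{\mu(1-\psafe)\rhaz(t)}{\psafe-\mu}(1-\apri\acdf(t))$ (which makes $\bl(\cdot)\le\psafe$, so $a(\cdot)=0$, and still gives $u_P\equiv\psafe$); each such $\acdf$ paired with immediate stopping is an equilibrium. I expect the main obstacle to be the necessity step --- turning ``$u_P\le\psafe$ on $[0,\tau_P]$'' into the pointwise cumulative-faking bound. The delicacy is that $u_P$ need only lie \emph{weakly} below $\psafe$ (it need not be monotone), that $u_P'$ depends on both the level $\acdf(t)$ and the rate $\apdf(t)$ of faking, that the principal's response switches regimes, and that $\acdf$ may carry atoms; in particular one must rule out the agent front-loading his faking to ``buy slack'' early, which the argument handles by noting that front-loading exhausts his probability budget sooner, after which $\rhaz(t)>\phi_P$ on the remainder of $[0,\tau_P)$ pushes $u_P$ strictly above $\psafe$ before $\tau_P$.
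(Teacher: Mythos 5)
Your overall architecture matches the paper's: reduce via Lemma~\ref{IS} to the question of whether a faking strategy exists making $u_P(t)\le\psafe$ on $[0,\tau_P]$ (the restriction to $t\le\tau_P$ via Lemma~\ref{pbounds}(ii) is also the paper's Step~1), construct the sufficiency strategy from the principal's indifference ODE, and prove necessity by bounding $u_P$ from below by the ``always act on type-1 news'' benchmark and running a Gr\"onwall argument to force $\sigma\acdf(\tau_P)\ge\bar\sigma$. Your uniform lower bound on $u_P'$ is exactly the derivative of the paper's constrained payoff $\underline{u}_P$, and your integrated inequality is the paper's $w(t)\le 1-\int_0^t b(s)w(s)\,ds$ followed by the G\"ollwitzer/Gr\"onwall lemma.

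The genuine gap is your one-sentence dismissal of atoms in the necessity direction. The Gr\"onwall-type lemma the argument relies on requires the function $w(t)=e^{-\rho t}(1-\rcdf(t))(1-\sigma\acdf(t))$ to be continuous; if $\acdf$ carries atoms, $w$ jumps down, your lower-bound integrand acquires Dirac masses (each atom contributes a strictly negative lump $-(\psafe-\mu)\sigma f_m e^{-\rho t_m}(1-\rcdf(t_m))$ of ``slack'' that the true $u_P$ does not exhibit, since the principal ignores an atom and still collects $\psafe$), and the integration-by-parts/Gr\"onwall chain no longer applies as stated. ``Atoms only raise the left side'' asserts the conclusion rather than proving it: a downward jump in $w$ weakens the constraint $w(t)\le 1-\int_0^t b(s)w(s)\,ds$ as well as the target bound, so the comparison is not automatic. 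The paper devotes a full claim to this (Claim~\ref{noatoms}): it decomposes an atomic strategy as a compound lottery, applies the Law of Iterated Expectations, and uses $u_P^{FB}(\min\{t,t_m\})\ge\psafe$ to show that the renormalized continuous part alone must already satisfy $\underline{u}_P^*\le\psafe$, after which Gr\"onwall applies to a continuous CDF. You would need either this reduction or a bounded-variation version of Gr\"onwall. Two smaller remarks: (a) in sufficiency, your continuation past $\tau_P$ (faking at the rate that keeps $\bl=\psafe$) deviates from the proposition's literal formula, which if extended past $\tau_P$ has negative hazard $\frac{\mu(1-\psafe)\rhaz(s)-\rho\psafe}{\psafe-\mu}<0$ and so is not increasing there --- your repair is the right one, and your appeal to Lemma~\ref{pbounds}(ii) to get $u_P(t)<u_P(\tau_P)=\psafe$ for $t>\tau_P$ regardless of the continuation is cleaner than the paper's ``straightforward to show that $u_P(t)=\psafe$ for all $t\le\tau_X$,'' which cannot hold as written; (b) your closing worry about ``front-loading'' is already subsumed by the Gr\"onwall bound itself (it holds pointwise for every $s\in[0,\tau_P]$), so no separate argument is needed there.
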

Proposition \ref{charnot} illustrates that equilibria without beneficial search exist if and only if the equilibrium without beneficial search, characterized in Proposition \ref{charval} does not. Thus, for all parameters an equilibrium exists. Furthermore, the stochastic unraveling completely undermines the value of the \ppp{}'s search for news if and only if $\sigma>\bar{\sigma}$. When $\sigma$ is smaller, the  equilibrium with beneficial search (from Proposition \ref{charval}) is also the unique equilibrium of the game. This also allows for a clean normative implication, untainted by issues of equilibrium selection: for $\sigma=0$ the \ppp{}'s equilibrium payoff coincides with the first-best, is decreases in $\sigma$ until $\bar{\sigma}$, at which point the value of search is eliminated, and further increases in $\sigma$ have no effect on the \ppp{}'s payoff.

\section{Normative Remedies}\label{rem} 
The fundamental incentive problem in this game comes from the \aaa{}'s preemption motive, which triggers the stochastic unraveling of the search for news. Here, I describe three simple remedies that mitigate or eliminate the \aaa{}'s preemption motive, deriving conditions under which they improve the \ppp{}'s payoff.

\paragraph{Commitment to Naive Search.} Consider a commitment by the  \ppp{} to carry out a naive search, in which she waits until $\tau_P$ to stop and acts on all news that arrives at or before this time. Effectively, the \ppp{} blinds herself to the possibility that the \aaa{} might be the \faker{} type and behaves as if all news is real. Such a commitment eliminates the \aaa{}'s preemption motive. In particular, because the \ppp{} always waits at $t<\tau_P$, and the \aaa{} can induce the risky action by faking type-1 news, the search is identical to the \aaa{}'s first best at $t<\tau_P$. Thus, the \aaa{} has no incentive to fake at such times since he prefers a longer search duration in a first best search, i.e., $t<\tau_P<\tau_A$. Given that the \ppp{} acts on it, the \aaa{} would therefore wait until $\tau_P$ to fake a type-1 arrival. Without a binding commitment to act naively, the \ppp{} would recognize the fake and select the safe action if type-1 news arrives at $\tau_P$, triggering the \aaa{} to preempt. In contrast, with this binding commitment, the \aaa{} has no incentive to preempt: he can fake at $\tau_P$ and still induce the risky action. 

Though it eliminates the \aaa{}'s preemption motive and the stochastic unraveling, such a commitment does not fully restore the \ppp{}'s first-best search. In particular, if $\tau_P$ is reached without a real news arrival, in the first-best the \ppp{} would like to stop and select the safe action. However, if she commits to a naive search, she may be forced to select risky. Indeed a \faker{} \aaa{} will fake type-1 news at $\tau_P$. Following through on her commitment, the \ppp{} must select risky in this instance. In particular, with a commitment to naive search, the \ppp{}'s expected payoff is
\begin{align*}
	u_P^{N}&=\int_0^{\tau_P}\exp(-\rho s)\rpdf(s)(\mu+(1-\mu)\psafe)ds+\exp(-\rho\tau_P)(1-\rcdf(\tau_P))((1-\sigma)\psafe+\sigma\mu)\\
	&=u_P^{FB}(\tau_P)-\exp(-\rho \tau_P)(1-\rcdf(\tau_P))\sigma(\psafe-\mu).
\end{align*}

The next proposition shows that such a commitment is always valuable for the \ppp{} if the equilibrium (without commitment) has beneficial search. More broadly, such a commitment is valuable unless the \aaa{} is very likely to be \faker{}, so that the probability of selecting risky at the deadline is high.

\begin{prop}\label{naive} (Naive Search). If the prior probability of the \faker{} \aaa{} is not too high, then the \ppp{}'s payoff with a commitment to naive search is higher than her equilibrium payoff in the main model. If $\sigma<\bar{\sigma}$, then commitment to naive search delivers a higher payoff to \ppp{} than the unique equilibrium without commitment, which has beneficial search. If $\sigma\in(\bar{\sigma},\widetilde{\sigma}_{N})$ for some $\bar{\sigma}<\widetilde{\sigma}_{N}\leq 1$, then commitment to naive search delivers a higher payoff to the \ppp{} than any equilibrium without commitment, all of which feature non-beneficial search. 
\end{prop}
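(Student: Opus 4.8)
The plan is to reduce the whole statement to the closed-form payoff $u_P^N = u_P^{FB}(\tau_P) - \exp(-\rho\tau_P)(1-G(\tau_P))\sigma(\theta-\mu)$ already derived in the text, the closed form for $u_P^{FB}$, and one elementary inequality. First I would record $u_P^{FB}(t) = \int_0^{t}\exp(-\rho s)g(s)\big(\mu+(1-\mu)\theta\big)\,ds + \exp(-\rho t)(1-G(t))\theta$, so that $u_P^{FB}(0)=\theta$ and, differentiating, $\tfrac{d}{dt}u_P^{FB}(t) = \exp(-\rho t)(1-G(t))\,(\theta-\mu)\,\psi(t)$, where $\psi(t)\equiv\frac{\mu(1-\theta)\rhaz(t)-\rho\theta}{\theta-\mu} = \frac{\mu(1-\theta)\big(\rhaz(t)-\phi_P\big)}{\theta-\mu}$. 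By {\normalfont(A1)} and the monotonicity of $\rhaz$, $\psi\ge0$ on $[0,\tau_P]$ with $\psi>0$ almost everywhere on $[0,\tau_P)$ — this is just a repackaging of Lemma \ref{fb}(i)--(ii). I would also recall, from Proposition \ref{charval}, that $\tau_M$ is defined by $\int_{\tau_M}^{\tau_P}\psi(s)\,ds = -\ln(1-\sigma)$ (the stated equation for $\tau_M$, after substituting $\rho\theta=\mu(1-\theta)\phi_P$), while $\bar{\sigma}$ corresponds to $\tau_M=0$, i.e.\ $\int_0^{\tau_P}\psi(s)\,ds = -\ln(1-\bar{\sigma})$.

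The engine of the proof is the following bound: for every $\tau\in[0,\tau_P)$, $u_P^{FB}(\tau_P)-u_P^{FB}(\tau) = (\theta-\mu)\int_{\tau}^{\tau_P}\exp(-\rho t)(1-G(t))\,\psi(t)\,dt > (\theta-\mu)\exp(-\rho\tau_P)(1-G(\tau_P))\int_{\tau}^{\tau_P}\psi(t)\,dt$, where the strict inequality holds because $t\mapsto\exp(-\rho t)(1-G(t))$ is strictly decreasing (as $\rho>0$ and $G(\tau_P)<1$), hence strictly exceeds its value at $\tau_P$ throughout $[0,\tau_P)$, and $\psi>0$ on a set of positive measure there. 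The point is that both the first-best gain on the left and the ``naive-search penalty'' $\exp(-\rho\tau_P)(1-G(\tau_P))\sigma(\theta-\mu)$ inside $u_P^N$ carry the common factor $(\theta-\mu)\exp(-\rho\tau_P)(1-G(\tau_P))$, so the comparison collapses to $\int_{\tau}^{\tau_P}\psi$ versus $\sigma$.

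For $\sigma<\bar{\sigma}$, I would apply the bound with $\tau=\tau_M$ (note $\tau_M<\tau_P$ since $\sigma>0$), so $\int_{\tau_M}^{\tau_P}\psi = -\ln(1-\sigma)$; using $-\ln(1-\sigma)>\sigma$, the bound gives $u_P^{FB}(\tau_P)-u_P^{FB}(\tau_M) > (\theta-\mu)\exp(-\rho\tau_P)(1-G(\tau_P))\,\sigma = u_P^{FB}(\tau_P)-u_P^{N}$, hence $u_P^{N}>u_P^{FB}(\tau_M)$, which by Proposition \ref{charval} is the \ppp{}'s payoff in the beneficial-search equilibrium — the unique equilibrium when $\sigma<\bar{\sigma}$ (Propositions \ref{charval}--\ref{charnot}). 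This is the second assertion. For $\sigma\ge\bar{\sigma}$, I would apply the bound with $\tau=0$, so $\int_0^{\tau_P}\psi = -\ln(1-\bar{\sigma})$ and $u_P^{FB}(0)=\theta$, obtaining $u_P^{FB}(\tau_P)-\theta > (\theta-\mu)\exp(-\rho\tau_P)(1-G(\tau_P))\big(-\ln(1-\bar{\sigma})\big)$. Set $\widetilde{\sigma}_{N} := \min\Big\{1,\;\dfrac{u_P^{FB}(\tau_P)-\theta}{\exp(-\rho\tau_P)(1-G(\tau_P))(\theta-\mu)}\Big\}$; the displayed bound together with $-\ln(1-\bar{\sigma})>\bar{\sigma}$ shows the ratio exceeds $\bar{\sigma}$, so $\bar{\sigma}<\widetilde{\sigma}_{N}\le1$. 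For $\sigma\in(\bar{\sigma},\widetilde{\sigma}_{N})$, the inequality $\sigma < \dfrac{u_P^{FB}(\tau_P)-\theta}{\exp(-\rho\tau_P)(1-G(\tau_P))(\theta-\mu)}$ rearranges to $u_P^{N} = u_P^{FB}(\tau_P)-\sigma\exp(-\rho\tau_P)(1-G(\tau_P))(\theta-\mu)>\theta$; and since $\sigma>\bar{\sigma}$, every equilibrium has non-beneficial search (Proposition \ref{charval}) and so delivers the \ppp{} exactly $\theta$ (the argument preceding Proposition \ref{charnot}), so $u_P^{N}$ strictly beats every equilibrium. This is the third assertion, and the first follows by combining the second and third, whose ranges together cover all $\sigma\in(0,\widetilde{\sigma}_{N})$.

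I do not expect a genuine obstacle here. The one point that must be seen clearly is that the naive-search penalty is dominated by the first-best gain, and the clean route is the one above: write the gain as an integral of $\psi$ against the decreasing weight $\exp(-\rho t)(1-G(t))$, bound that weight below by its terminal value at $\tau_P$, and observe that the residual comparison is precisely the elementary fact $-\ln(1-\sigma)>\sigma$. Everything else — the closed form for $u_P^{FB}$, the defining equation for $\tau_M$, the equilibrium payoffs — is bookkeeping with results stated earlier (Lemma \ref{fb}, Propositions \ref{charval} and \ref{charnot}). The only substantive modeling input, that a \ppp{} committed to naive search faces a \faker{} \aaa{} whose best response is to fake exactly at $\tau_P$ (so that $u_P^N$ takes the stated form), is already established in the paragraph preceding the proposition and can be taken as given.
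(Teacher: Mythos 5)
Your proof is correct, but it takes a genuinely different route from the paper's. The paper works in the $\sigma$-direction: it defines $\Delta(\sigma)=u_P^N(\sigma)-u_P^{FB}(\mint(\sigma))$, notes $\Delta(0)=0$, implicitly differentiates the defining equation for $\mint(\sigma)$ to compute $d\mint/d\sigma$, and shows $d\Delta/d\sigma>0$ via the comparison $\frac{1}{1-\sigma}e^{-\rho\mint}(1-\rcdf(\mint))>e^{-\rho\tau_P}(1-\rcdf(\tau_P))$; the regime $\sigma>\bar{\sigma}$ is then handled by continuity of $\Delta$ at $\bar{\sigma}$ (using $\mint\to 0$) plus linearity of $u_P^N$ in $\sigma$, which yields $\widetilde{\sigma}_N$ only implicitly. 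You instead work in the $t$-direction: you write $u_P^{FB}(\tau_P)-u_P^{FB}(\tau)$ as $(\psafe-\mu)\int_\tau^{\tau_P}e^{-\rho t}(1-\rcdf(t))\psi(t)\,dt$, bound the strictly decreasing weight $e^{-\rho t}(1-\rcdf(t))$ below by its value at $\tau_P$, and reduce everything to $\int_\tau^{\tau_P}\psi=-\ln(1-\sigma)$ versus $\sigma$, settled by the elementary inequality $-\ln(1-\sigma)>\sigma$. The two arguments rest on the same structural fact (the discount-survival weight decreases in $t$), but yours avoids the implicit-function differentiation and the continuity/linearity patching at $\bar{\sigma}$ entirely, and it buys an explicit closed form for $\widetilde{\sigma}_N$ rather than an existence statement. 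All the inputs you invoke are legitimately available: $\psi>0$ on $[0,\tau_P)$ follows from (A1) and the uniqueness of the crossing $\rhaz(\tau_P)=\phi_P$ in Lemma \ref{fb}, the equilibrium payoffs $u_P^{FB}(\mint)$ and $\psafe$ in the two regimes are Propositions \ref{charval} and \ref{charnot} together with the discussion preceding Lemma \ref{IS}, and the formula for $u_P^N$ is derived in the text. I see no gap.
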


When the probability of manipulation is relatively low, the \ppp{} would like to pretend that manipulation is impossible. To support this pretense, the \ppp{} could develop an institution that allows her to plausibly deny the possibility of fabricated news. As awareness of such manipulation grows, it becomes more difficult for the \ppp{} to maintain this facade, threatening her commitment to naive search. To the extent that the \ppp{} is aligned with society more broadly, increased awareness of fabricated or manipulated information can therefore be counterproductive, unless it is also accompanied by increased accountability. 

\paragraph{Delegating Authority to the \AAA{}.} Next, suppose that the \ppp{} can transfer the authority over the search and action directly to the \aaa{}. Doing so completely eliminates the \aaa{}'s preemption motive, but it introduces an additional distortion in the decision process. In particular, if the \aaa{} controls the search, then he waits for news until $\tau_A>\tau_P$ and if this time is reached without real news, then he selects risky: compared to the \ppp{}'s first best, the \aaa{} waits too long for news and selects the wrong default action. Furthermore, unlike the naive search, where the risky action was selected at the deadline only if the \aaa{} is \faker{}, with delegation the risky action is always chosen if the \aaa{}'s deadline is reached. Thus, the \ppp{}'s payoff of delegating authority to the \aaa{} is 
\begin{align*}
	u_P^D&=\int_0^{\tau_A}\exp(-\rho s)\rpdf(s)(\mu+(1-\mu)\psafe)ds+\exp(-\rho\tau_A)(1-\rcdf(\tau_A))\mu\\
	&=u_P^{FB}(\tau_A)-\exp(-\rho\tau_A)(1-\rcdf(\tau_A))(\psafe-\mu).
\end{align*}
As a first step, consider a simple choice for the \ppp{}: either she delegates the search to the \aaa{}, or she terminates the search immediately and selects the safe action. If the \ppp{} would rather terminate than delegate ($u_P^D<\psafe$), then the conflict of interest between the players is \textit{severe}; otherwise it is \textit{mild}. Obviously, with a severe conflict of interest, delegation could never be optimal---the \ppp{}'s equilibrium payoff always exceeds $\psafe$ when she retains authority. Note that  $\psafe$ and $\asafe$ can be arbitrarily close without violating any parametric assumption of the model, and in this case, $u_P^D\approx u_P^{FB}(\tau_P)>\psafe$.\footnote{$\rhaz(0)>\rho/(1-\psafe)$ guarantees (A1). Furthermore, as discussed in footnote 2, (A2) holds if $\asafe\in(\underline{\asafe},\mu)$. Thus,  $\psafe$ and $\asafe$ can be arbitrarily close to each other. It follows that $\psafe\approx \mu$ and $\tau_P\approx\tau_A$.} In other words, when the $\psafe$ and $\asafe$ are close, the conflict of interest is mild. Next, note that as long as $\mu>\psafe$, we have $\bar{\sigma}<1$. By implication, there is a range of large $\sigma$ where the \ppp{}'s payoff in the equilibrium is close to (or equal to) $\psafe$. Continuity of the \ppp{}'s equilibrium payoff as a function of $\sigma$ delivers the following result.

\begin{prop}\label{del}(Delegating to \AAA{}).  If the conflict of interest is mild and the prior probability of the \faker{} \aaa{} is sufficiently high, then the \ppp{}'s payoff from delegating authority to the \aaa{} is higher than her equilibrium payoff in the main model. With a mild conflict and $\sigma>\bar{\sigma}$, the \ppp{}'s payoff with delegation is higher than her payoff in any equilibrium of the main model. In addition, if $\sigma\in(\widetilde{\sigma}_D,\overline{\sigma})$ for some $0<\widetilde{\sigma}_D<\bar{\sigma}$, then the \ppp{}'s payoff with delegation is higher than her payoff in the unique equilibrium of the main model, which features beneficial search.
\end{prop}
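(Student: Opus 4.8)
The plan is to combine three facts already in hand. (i) For every $\sigma\ge\bar{\sigma}$ the \ppp{}'s payoff in \emph{every} equilibrium of the main model equals $\psafe$ exactly: by Proposition~\ref{charval} no beneficial-search equilibrium exists, so her payoff is at most $\psafe$, while stopping immediately is feasible and yields $u_P(0)=\psafe$, so her payoff is at least $\psafe$ (Lemma~\ref{IS} and Proposition~\ref{charnot} confirm such equilibria exist). (ii) For $\sigma<\bar{\sigma}$ the unique equilibrium has beneficial search and gives the \ppp{} $u_P^{FB}(\mint)$, with $\mint=\mint(\sigma)$ the unique solution of the implicit equation in Proposition~\ref{charval}. (iii) The delegation payoff $u_P^D=u_P^{FB}(\tau_A)-\exp(-\rho\tau_A)(1-\rcdf(\tau_A))(\psafe-\mu)$ does not depend on $\sigma$; since $\psafe>\mu$, $\rcdf(\tau_A)<1$, and $\tau_P$ is the unique maximizer of $u_P^{FB}(\cdot)$ with $\tau_A>\tau_P$, one gets $u_P^D<u_P^{FB}(\tau_A)<u_P^{FB}(\tau_P)$, while a mild conflict means $u_P^D\ge\psafe$ (and $u_P^D>\psafe$ strictly on an open parameter set, e.g.\ for $\psafe$ close to $\asafe$).

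The case $\sigma>\bar{\sigma}$ is then immediate: by (i) the \ppp{}'s payoff in every equilibrium is $\psafe$, and with a strictly mild conflict $u_P^D>\psafe$, so delegation strictly beats every equilibrium. This is the second sentence of the proposition; the first sentence is the same statement with ``sufficiently high'' read as ``$\sigma>\bar{\sigma}$.''

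For $\sigma<\bar{\sigma}$, let $v(\sigma):=u_P^{FB}(\mint(\sigma))$ denote the \ppp{}'s equilibrium payoff. I would argue that $v$ is continuous and strictly decreasing, with $v(0^+)=u_P^{FB}(\tau_P)$ and $v(\sigma)\to\psafe$ as $\sigma\uparrow\bar{\sigma}$. Monotonicity follows from Corollary~\ref{corcs}(i) ($\mint$ decreasing in $\sigma$) together with single-peakedness of $u_P^{FB}$ at $\tau_P$, which makes $u_P^{FB}$ strictly increasing on the interval $(0,\tau_P)$ that contains $\mint$. The boundary behavior follows by inverting the implicit equation: writing $\int_{\mint(\sigma)}^{\tau_P}\frac{\mu(1-\psafe)\rhaz(s)-\rho\psafe}{\psafe-\mu}\,ds=-\ln(1-\sigma)$ and recalling $-\ln(1-\bar{\sigma})=\int_0^{\tau_P}\frac{\mu(1-\psafe)\rhaz(s)-\rho\psafe}{\psafe-\mu}\,ds$, the integrand is strictly positive on $[0,\tau_P)$ (this is where (A1), $\rhaz$ decreasing, and $\rhaz(\tau_P)=\phi_P$ enter, giving $\rhaz(s)>\phi_P$ for $s<\tau_P$), so $\tau\mapsto\int_\tau^{\tau_P}(\cdot)\,ds$ is a continuous strictly decreasing bijection and $\mint(\sigma)\to 0$, hence $v(\sigma)\to u_P^{FB}(0)=\psafe$. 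Given $u_P^D>\psafe$ and $u_P^D<u_P^{FB}(\tau_P)=v(0^+)$, continuity and monotonicity of $v$ imply $\{\sigma\in(0,\bar{\sigma}):v(\sigma)<u_P^D\}=(\widetilde{\sigma}_D,\bar{\sigma})$ with $\widetilde{\sigma}_D:=v^{-1}(u_P^D)\in(0,\bar{\sigma})$; on this interval delegation strictly beats the unique (beneficial-search) equilibrium, which is the third sentence.

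The only non-bookkeeping step is the boundary limit $v(\sigma)\to\psafe$ as $\sigma\uparrow\bar{\sigma}$, i.e.\ the continuous extension of $\sigma\mapsto\mint(\sigma)$ to the boundary with $\mint(\bar{\sigma})=0$; everything else is substitution into the closed forms of Propositions~\ref{charval}--\ref{charnot} and the displayed formula for $u_P^D$. As noted, that step reduces to strict positivity of $\mu(1-\psafe)\rhaz(s)-\rho\psafe$ (equivalently $\rhaz(s)>\phi_P$) on $s\in[0,\tau_P)$, which is precisely what (A1) together with $\rhaz$ decreasing and $\rhaz(\tau_P)=\phi_P$ deliver.
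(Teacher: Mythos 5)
Your proposal is correct and follows essentially the same route as the paper, which proves this proposition only by the sketch in the text preceding it (delegation yields $u_P^D\geq\psafe$ independent of $\sigma$, every equilibrium yields exactly $\psafe$ once $\sigma\geq\bar{\sigma}$, and continuity plus monotonicity of $\sigma\mapsto u_P^{FB}(\mint(\sigma))$ with $\mint(\sigma)\to 0$ as $\sigma\uparrow\bar{\sigma}$ extends the comparison to an interval $(\widetilde{\sigma}_D,\bar{\sigma})$); your write-up simply supplies the boundary-limit and monotonicity details the paper leaves implicit. The only point worth flagging is one you already note: the paper's ``mild'' is the weak inequality $u_P^D\geq\psafe$, so the strict comparisons in the proposition (and a nonempty $(\widetilde{\sigma}_D,\bar{\sigma})$) really require $u_P^D>\psafe$, a looseness present in the paper itself.
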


Proposition \ref{del} suggests that the credible threat of manipulation of the news process allows the \aaa{} to gain outsized influence over the decision. Indeed, if the conflict of interest is relatively mild and the \aaa{} can probably manipulate the news, the \ppp{} would rather give the \aaa{} full authority over the search and decision than initiate a (possibly) manipulated search. Even though the \aaa{} has no private information about the best action, he is nevertheless granted complete authority over the decision. This contrasts with the classic logic supporting delegation, which is based on the \aaa{}'s ability to access or produce superior payoff-relevant information \citep{AT97, D2002}.

\paragraph{Delegating Authority to an Intermediary.} In this section, a milder form of delegation is analyzed: the \ppp{} delegates the decision authority to a third party whose payoff from the safe action is different than her own. This section focuses on the more interesting case $\sigma<\bar{\sigma}$, which implies that the unique  equilibrium has beneficial search. In addition, we focus on the \ppp{}'s incentive to  delegate ``locally,'' so that the intermediary's payoff from the safe of option,  $\psafe_I$, is not too far from $\psafe$. In this way, we can ensure that (A1) and (A2) are satisfied by the intermediary and that the equilibrium between the intermediary and \aaa{} features beneficial search.

By delegating to an intermediary, \ppp{} alters the equilibrium search for news: the incentives of the \aaa{} and intermediary determine the equilibrium, rather than the \ppp{}. Thus, with delegation the equilibrium is characterized by Proposition \ref{charval} with the intermediary standing in for the \ppp{}.  Building on the comparative statics derived in Corollary \ref{corbel}, consider the effects of delegating authority to an intermediary with a smaller payoff from the safe action. On one hand, such an intermediary has a smaller preference conflict with the \aaa{}. By delegating to such an intermediary, the \ppp{} weakens the incentives for stochastic unraveling, shifting the soft deadline to a later time. Thus by delegating, the \ppp{} extends the duration of the first best search, which increases her expected payoff. On the other hand, delegating authority to an intermediary also affect the \aaa{}'s faking strategy. In equilibrium, the intermediary is indifferent between continuing the search and stopping at all times between the soft and hard deadlines if there has been no news. But the \ppp{}'s payoff of stopping the search and selecting safe is higher, and she therefore prefers to stop at all such times. In other words, when she delegates,  the \ppp{}'s payoff function is decreasing between the soft and hard deadlines. Although her payoff is higher at the soft deadline, if the search extends for too long beyond it, the \ppp{} is harmed. This cost is particularly salient: by delegating, the \ppp{} shifts both the intermediary's stopping strategy and the \aaa{} faking strategy toward later times, increasing the likely duration of the search.

To make this discussion precise, consider the equilibrium of Proposition \ref{charval}, with intermediary playing the role of \ppp{}. To distinguish the equilibrium with delegation, the  intermediary's preference parameter is included explicitly in the notation. Thus, $\tau_P(\psafe_I)$ is the intermediary's first best search duration (and the hard deadline when it has the authority), $\mint(\theta_I)$ is the soft deadline,  $\acdf(\cdot|\psafe_I),\pcdf(\cdot|\psafe_I)$ are the equilibrium strategies of the \aaa{} and intermediary when it plays the role of \ppp{}, and $u_P(t|\psafe_I,\psafe)$ is the \ppp{}'s equilibrium expected payoff conditional on the intermediary's  stopping time $t$. Note that the equilibrium strategies and the hard and soft deadline depend only on the intermediary's safe payoff $\psafe_I$, but the \ppp{}'s safe payoff enters explicitly into $u_P(t|\psafe_I,\psafe)$. Indeed, whenever the intermediary selects the safe action (either because it chose to stop the search or because type-0 news arrived), the \ppp{}'s payoff is $\psafe$. In particular, we have
\begin{align*}
	u_P(t|\psafe_I,\psafe)=
u_P^{FB}(\mint(\psafe_I))+\int_{\mint(\psafe_I)}^t\exp(-\rho s)\{w_0^P(s|\psafe_I)\psafe+w_1^P(s|\psafe_I)\mu_1(s|\psafe_I)\}ds+\exp(-\rho t)W_\phi^P(t|\psafe_I)\psafe,
%
\end{align*}
where use has been made of the fact that the intermediary always selects the risky action following type-1 news in equilibrium (see Proposition 1). Recalling that the intermediary's equilibrium strategy has a mass point on $\tau_P(\psafe_I)$ but is otherwise differentiable, we have that the \ppp{}'s expected payoff of delegating to intermediary with preference parameter $\theta_I$ is
\begin{align*}
	U(\psafe_I|\psafe)&=\int_{\mint(\psafe_I)}^{\tau_P(\psafe_I)}\ppdf(s|\psafe_I)u_P(s|\psafe_I,\psafe)ds+(1-\pcdf(\tau_P^-(\psafe_I)|\psafe_I))u_P(\tau_P(\psafe_I)|\psafe_I,\psafe)\\
	&=\int_{\mint(\psafe_I)}^{\tau_P(\psafe_I)}(1-\pcdf(s|\psafe_I))u'_P(s|\psafe_I,\psafe)ds+u^{FB}_P(\mint(\psafe_I)),
\end{align*}
where the last line follows from an integration by parts.\footnote{See Proof of Proposition \ref{delint}.} In this form, tradeoff is apparent: by delegating to an intermediary with a smaller payoff of the safe action, the \ppp{} extends the soft deadline and the duration of first best search, but if the soft deadline is reached with no news continued search is strictly harmful. 
\begin{figure}
\begin{minipage}{0.5\textwidth}
\begin{center}
\begin{tikzpicture}[scale=0.8]

\draw[-stealth] (0,0) -- (0,9); 
\draw[-stealth] (0,0) -- (9,0); 

\fill (9,0) node[right] {\footnotesize{$t$}};

\fill (7,-0.13) node[below] {\footnotesize{$\tau_P$}};
\fill (8,0) node[below] {\footnotesize{$\tau_P(\psafe_I)$}};
\filldraw[color=black, fill=black] (7,0) circle (2.5pt);
\filldraw[color=black, fill=black] (8,0) circle (2.5pt);
\fill (3,-0.13) node[below] {\footnotesize{$\mint$}};
\filldraw[color=black, fill=black] (4.5,0) circle (2.5pt);
\fill (4.5,0) node[below] {\footnotesize{$\mint(\psafe_I)$}};
\filldraw[color=black, fill=black] (3,0) circle (2.5pt);
\draw[line width=0.5pt, black, dashed] (7,8) -- (7,0); 
\draw[line width=0.5pt, black, dashed] (8,8) -- (8,0);

\draw[line width=0.5pt, red,dashed] (0,0.03) -- (3,0.03); 
\draw[line width=0.5pt, red] (8,8.03) -- (9,8.03); 
\draw[line width=0.5pt, blue] (8,8) -- (9,8); 
\draw[line width=0.5pt, red,dashed] plot[smooth] coordinates {(3,0) 
(4,5) (5,7) (6,7.8) (7,8+0.03)};
\fill (7.15,3.8) node[left] {\footnotesize{$\acdf(\cdot|\psafe_I)$}};

\draw[line width=0.5pt, blue] (0,-0.03) -- (4.5,-0.03); 
\draw[line width=0.5pt, red] (0,0.03) -- (4.5,0.03); 
\draw[line width=0.5pt, blue,dashed] (7,8) -- (8,8); 
\filldraw[color=blue, fill=blue,opacity=0.3] (7,8) circle (2pt);
\filldraw[color=blue, fill=white,opacity=0.3] (7,5.1) circle (2pt);
\draw[line width=0.5pt, blue,dashed] plot[smooth] coordinates {(3,0) (4,3) (5,4.25) (6,4.8) (7,5.1)};
\fill (7,1.2) node[left] {\footnotesize{$\pcdf(\cdot|\psafe_I)$}};

\draw[line width=0.5pt, blue] (4.5,0) .. controls (5,2) and (6,3) .. (8,3.2);
\filldraw[color=blue, fill=blue] (8,8) circle (2.5pt);
\filldraw[color=blue, fill=white] (8,3.2) circle (2.5pt);
\draw[line width=0.5pt, red] (4.5,0) .. controls (5,3) and (6,8) .. (8,8);

%

\end{tikzpicture} \captionsetup{font=footnotesize}
\captionof{figure}{Strategies, Delegation $\acdf(\cdot|\psafe_I),\pcdf(\cdot|\psafe_I)$\label{stratdel}}
\end{center}
\end{minipage}
\begin{minipage}{0.5\textwidth}
\begin{center}
\begin{tikzpicture}[scale=0.8]

\draw[-stealth] (0,0) -- (0,9); 
\draw[-stealth] (0,0) -- (8,0); 

\fill (8,0) node[right] {\footnotesize{$t$}};

\fill (7,-0.13) node[below] {\footnotesize{$\tau_P$}};
\fill (8,0) node[below] {\footnotesize{$\tau_P(\psafe_I)$}};
\filldraw[color=black, fill=black] (7,0) circle (2.5pt);
\filldraw[color=black, fill=black] (8,0) circle (2.5pt);
\fill (3,-0.13) node[below] {\footnotesize{$\mint$}};
\filldraw[color=black, fill=black] (4.5,0) circle (2.5pt);
\fill (4.5,0) node[below] {\footnotesize{$\mint(\psafe_I)$}};
\filldraw[color=black, fill=black] (3,0) circle (2.5pt);
\draw[line width=0.5pt,dashed] (3,0) -- (3,6.3); 
\draw[line width=0.5pt,dashed] (4.5,0) -- (4.5,6.7); 
\draw[line width=0.5pt,dashed] (7,0) -- (7,6.3); 
\draw[line width=0.5pt,dashed] (8,0) -- (8,5.5); 
\fill (0,4) node[left] {\footnotesize{$\psafe$}};
\draw[line width=0.5pt,dashed,black] (0,6.3) -- (3,6.3); 
\fill (0,6.2) node[left] {\footnotesize{$u_P^{FB}(\mint)$}};




\draw[line width=0.5pt,blue] (0,4) .. controls (0,4) and (1,5.5) .. (3,6.3);

\draw[line width=0.5pt,blue] (3,6.3) .. controls (3.5,6.55) and (4.25,6.65) .. (4.5,6.7);
\draw[line width=0.5pt,blue] (4.5,6.7) .. controls (6,6) and (7,5.6) .. (8,5.5);


\draw[line width=0.5pt,dashed,blue] (3,6.3) -- (7,6.3);
\draw[line width=0.5pt,blue,dashed] (7,6.3) .. controls (7.25,6.29) and (7.75,6.15) .. (8,5.95);

\end{tikzpicture}
\captionsetup{font=footnotesize}
\captionof{figure}{Delegation payoff  $u_P(\cdot|\psafe_I,\psafe)$\label{paydel}}
\end{center}
\end{minipage}
\end{figure}

The tradeoff of delegating is illustrated graphically in Figures \ref{stratdel} and \ref{paydel}. Figure \ref{stratdel} illustrates the shift toward later stopping and faking times induced by delegation to an intermediary with a smaller payoff from the safe option. The dashed curves represent the equilibrium strategies when the \ppp{} keeps authority for herself, while the solid curves are the strategies under delegation. Figure \ref{paydel} illustrates the \ppp{}'s payoff when she delegates to an intermediary with safe payoff $\psafe_I$, and when the intermediary select stopping time $t$. The solid blue curve is this payoff, $u_P(\cdot|\psafe_I,\psafe)$, while the dashed blue curve is the \ppp{}'s payoff from keeping authority herself.

 The following proposition shows that the benefits of extending the first best search outweigh the harm.

\begin{prop}\label{delint}(Delegating to an Intermediary.) If $\sigma<\bar{\sigma}$, then the \ppp{} benefits by delegating decision authority to an intermediary whose payoff from the safe option is smaller than her own. In particular, a $\psafe_I^*<\psafe$ exists, such that the \ppp{} benefits from delegation if $\psafe_I\in(\psafe^*_I,\psafe)$. Furthermore, if the \ppp{} delegates authority to such an intermediary, then in equilibrium she is never tempted to overrule the intermediary's chosen action.
\end{prop}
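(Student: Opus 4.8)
The plan is to prove the payoff claim by showing that the \ppp{}'s delegation value $U(\psafe_I\mid\psafe)$ is strictly decreasing in the intermediary's safe payoff $\psafe_I$ at the point $\psafe_I=\psafe$. Since $U(\cdot\mid\psafe)$ is continuous and $U(\psafe\mid\psafe)=u_P^{FB}(\mint(\psafe))$ equals the \ppp{}'s equilibrium payoff under retention (Proposition \ref{charval}), a strictly negative derivative at $\psafe_I=\psafe$ immediately yields a threshold $\psafe_I^{*}<\psafe$ with $U(\psafe_I\mid\psafe)>U(\psafe\mid\psafe)$ for $\psafe_I\in(\psafe_I^{*},\psafe)$. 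The first step is to compute the slope of the \ppp{}'s payoff in the intermediary's realized stopping time. Differentiating the expression for $u_P(t\mid\psafe_I,\psafe)$ given in the text and using conservation of mass, $w_0^P(t\mid\psafe_I)+w_1^P(t\mid\psafe_I)=-\dot W_\phi^P(t\mid\psafe_I)$, gives $\partial_t u_P(t\mid\psafe_I,\psafe)=e^{-\rho t}\{w_1^P(\mu_1(t\mid\psafe_I)-\psafe)-\rho\psafe\,W_\phi^P\}$. The intermediary's own optimality (condition (iii)) makes this same bracket, with $\psafe_I$ in place of $\psafe$, vanish on its support $[\mint(\psafe_I),\tau_P(\psafe_I))$; subtracting yields the clean identity $\partial_t u_P(t\mid\psafe_I,\psafe)=-(\psafe-\psafe_I)\,e^{-\rho t}\bigl(\rho W_\phi^P(t\mid\psafe_I)+w_1^P(t\mid\psafe_I)\bigr)<0$ on that support. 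Feeding this into the integration-by-parts form of $U$ quoted above, $U(\psafe_I\mid\psafe)=u_P^{FB}(\mint(\psafe_I))-(\psafe-\psafe_I)\,K(\psafe_I)$, where $K(\psafe_I)=\int_{\mint(\psafe_I)}^{\tau_P(\psafe_I)}(1-\pcdf(s\mid\psafe_I))e^{-\rho s}\bigl(\rho W_\phi^P(s\mid\psafe_I)+w_1^P(s\mid\psafe_I)\bigr)ds>0$.

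Next I would differentiate at $\psafe_I=\psafe$. Because $\partial_t u_P(s\mid\psafe,\psafe)\equiv0$ on $[\mint(\psafe),\tau_P(\psafe))$ (the \ppp{}'s own indifference), all Leibniz boundary terms drop out and $\tfrac{d}{d\psafe_I}U(\psafe_I\mid\psafe)\big|_{\psafe_I=\psafe}=(u_P^{FB})'(\mint(\psafe))\,\mint'(\psafe)+K(\psafe)$. By Lemma \ref{fb} the first factor is strictly positive ($\mint(\psafe)<\tau_P$, the peak of $u_P^{FB}$), and by Corollary \ref{corcs} $\mint'(\psafe)<0$, so the two summands carry opposite signs and the argument comes down to the magnitude comparison $K(\psafe)<-(u_P^{FB})'(\mint(\psafe))\,\mint'(\psafe)$. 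To settle it I would insert the closed forms of Proposition \ref{charval} and Corollary \ref{corbel} at $\psafe_I=\psafe$: writing $\psi(s)=\frac{\mu(1-\psafe)\rhaz(s)-\rho\psafe}{\psafe-\mu}$ and $\Psi(s)=\int_{\mint}^{s}\psi$, one has $(u_P^{FB})'(\mint)=e^{-\rho\mint}(1-\rcdf(\mint))(\psafe-\mu)\psi(\mint)$; differentiating the identity $\int_{\mint(\psafe)}^{\tau_P(\psafe)}\psi(s;\psafe)\,ds=-\ln(1-\sigma)$ (and using $\psi(\tau_P)=0$) gives $\mint'(\psafe)=\psi(\mint)^{-1}\int_{\mint}^{\tau_P}\partial_\psafe\psi(s;\psafe)\,ds$ with $\partial_\psafe\psi(s;\psafe)=-\frac{\mu(\rhaz(s)(1-\mu)-\rho)}{(\psafe-\mu)^2}$; and $\rho W_\phi^P(s\mid\psafe)+w_1^P(s\mid\psafe)=(1-\rcdf(s))e^{-\Psi(s)}\frac{\mu(\rhaz(s)(1-\mu)-\rho)}{\psafe-\mu}$. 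After the common factors cancel, the inequality reduces to showing that, with $h(s)=e^{-\rho s}(1-\rcdf(s))e^{-\Psi(s)}(1-\pcdf(s\mid\psafe))$, one has $\int_{\mint}^{\tau_P}h(s)\bigl[\rhaz(s)(1-\mu)-\rho\bigr]ds<h(\mint)\int_{\mint}^{\tau_P}\bigl[\rhaz(s)(1-\mu)-\rho\bigr]ds$. This is where the proof is actually won: $\rhaz(s)(1-\mu)-\rho>0$ throughout $[\mint,\tau_P]$ (because $\rhaz(\tau_P)=\phi_P$ together with $\psafe>\mu$ forces $\rhaz(\tau_P)(1-\mu)>\rho$, and $\rhaz$ is decreasing), while every factor of $h$ is nonincreasing on $[\mint,\tau_P]$ and equals its value at $\mint$, so $h(s)\le h(\mint)$ with strict inequality on $(\mint,\tau_P]$; the displayed inequality follows, hence $\tfrac{d}{d\psafe_I}U(\psafe_I\mid\psafe)\big|_{\psafe_I=\psafe}<0$.

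For the final sentence of the proposition I would invoke Corollary \ref{corbel}(ii) for the intermediary--\aaa{} equilibrium: the smallest posterior after type-1 news is attained at the soft deadline and equals $\psafe_I+(\psafe_I-\mu)\frac{\rho\psafe_I}{\rhaz(\mint(\psafe_I))\mu(1-\mu)-\rho\psafe_I}$, which as $\psafe_I\uparrow\psafe$ converges to $\psafe+(\psafe-\mu)\frac{\rho\psafe}{\rhaz(\mint(\psafe))\mu(1-\mu)-\rho\psafe}>\psafe$ (the second term is positive because $\psafe>\mu$ and $\rhaz(\mint)\mu(1-\mu)>\rho\psafe$). So after possibly shrinking $(\psafe_I^{*},\psafe)$ we obtain $\mu_1(s\mid\psafe_I)>\psafe$ for every $s\ge0$, i.e.\ the \ppp{} strictly prefers the risky action whenever the intermediary selects it and is never tempted to overrule.

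The step I expect to be the real obstacle is the magnitude comparison in the second paragraph. The sign of $\tfrac{d}{d\psafe_I}U$ at $\psafe_I=\psafe$ is a genuine race between the gain from pushing the soft deadline later and the loss from a search that now runs too long for the \ppp{}'s taste, and it cannot be resolved at the heuristic level of ``first best lasts longer, so it is better''; it requires carrying the explicit weights through the integration by parts and noticing that $h(\cdot)$ peaks precisely at $\mint$, so that the ``naive'' weight $h(\mint)$ that would appear in the soft-deadline term dominates the true weight inside $K$.
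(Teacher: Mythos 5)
Your proposal is correct and follows essentially the same route as the paper: both reduce the claim to showing $\tfrac{d}{d\psafe_I}U(\psafe_I\mid\psafe)\big|_{\psafe_I=\psafe}<0$ via the integration-by-parts form of $U$, and both land on the identical final comparison $\int_{\mint}^{\tau_P}\{h(\mint)-h(s)\}\bigl[(1-\mu)\rhaz(s)-\rho\bigr]ds>0$, resolved by the strict monotonicity of the weight $h$ together with $(1-\mu)\rhaz(s)>\rho$ on $[\mint,\tau_P]$ (the paper's $Q(s)=(\psafe-\mu)h(s)$ and $h_A(s\mid\psafe)\propto-[(1-\mu)\rhaz(s)-\rho]$), and both handle the no-overruling claim by a continuity argument on the posterior bound from Corollary \ref{corbel}. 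Your intermediate factorization $U(\psafe_I\mid\psafe)=u_P^{FB}(\mint(\psafe_I))-(\psafe-\psafe_I)K(\psafe_I)$, obtained by subtracting the intermediary's indifference condition from $\partial_t u_P$, is a slightly cleaner piece of bookkeeping than the paper's hazard-rate difference $\ahaz(t\mid\psafe)-\ahaz(t\mid\psafe_I)$, but it is the same argument.
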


The second part of Proposition \ref{delint} highlights another difference between delegating to an intermediary and the  \aaa{}. When delegating authority to the \aaa{}, the \ppp{} is tempted to overrule the \aaa{} if he tries to select risky at time $\tau_A$. When delegating to an intermediary with a small bias, the \ppp{} will not be tempted to overrule the intermediary's choice of action. Indeed, whenever the intermediary selects risky, the \ppp{} also prefers risky. As shown in Corollary \ref{corbel}, the posterior belief that $\omega=1$ given an arrival of type-1 news exceeds $\psafe_I$, and furthermore, the gap is strictly positive. If $\psafe_I$ and $\psafe$ are close, then the same is also true of $\psafe$.

\section{Conclusion}

I have studied a model of information acquisition, in which information can be faked strategically.  A \ppp{} must choose between a safe action with a known payoff and a risky action with an uncertain payoff. An arrival of real news reveals the payoff of the risky action, arriving at an uncertain future time. The \ppp{} prefers the safe action under the prior, but she can wait for news to arrive. An \aaa{} has a smaller payoff of the safe action and prefers risky under the prior. The \aaa{} may have the ability to fabricate a news arrival that is indistinguishable from a real one.

The disagreement over the best action under the prior belief incentivizes preemption by the \aaa{}. whenever the \ppp{} stops before news arrives, she selects risky. Thus, if the \aaa{} expects that the \ppp{} is about to stop, he has an incentive to fake a news arrival, in order to induce the risky action, which he prefers under the prior belief. The \aaa{}'s faking undermines the value of future news, which creates an incentive for her to stop before such news arrives. 

From a positive perspective, two distortions arise in equilibrium. When the probability that the \aaa{} can manipulate news is not too high, the equilibrium features an early phase of undistorted search, followed by a phase in which \aaa{} randomly fakes news and the \ppp{} randomly stops before news arrives. Though it is sometimes manipulated, all news is sufficiently informative for the \ppp{} to follow. From a normative perspective, the possibility of faking undermines the value of information and reduces the \ppp{}'s payoff. If the probability that the \aaa{} can fabricate news is sufficiently high, the benefits of news are undermined completely. I study three straightforward remedies that mitigate the \aaa{}'s preemption motive and increase the value of search.

The model can be extended in a number of interesting directions. The current version studies only a single dynamic interaction between the players. It might be interesting to consider incentives of an \aaa{} who interacts with \ppp{}s repeatedly. In this case, the \aaa{} may be concerned about revealing his ability to fake to future \ppp{}s, which may temper his incentive to manipulate. Another possibility is to consider stronger commitment power for the \ppp{}. While I have focused on relatively simple forms of commitment that dampen the \aaa{}'s preemption motive, it would be interesting to know how the \aaa{}'s ability to fake news allows him to obtain rents, even when the \ppp{} can commit to her entire strategy. I plan to address these and other related questions in future work.

\bibliography{fake}
\bibliographystyle{rfs}

\section{Appendix}
\setlength{\parindent}{0pt}

\begin{proof}[Proof of Lemma \ref{fb}]

Consider the first-best benchmark for player $i$. Let $(d_i,s_i)$ be player $i$'s expected payoff of the default action, and safe action respectively. That is $(d_P,s_P)=(\psafe,\psafe)$ and $(d_A,s_A)=(\mu,\asafe)$. We have
\begin{align*}
	u_i^{FB}(t)=\int_0^t\exp(-\rho x)\rpdf(x)(\mu+(1-\mu)s_i)dx+\exp(-\rho t)(1-\rcdf(t))d_i.
\end{align*}
$\rcdf(\cdot)$ is differentiable, and therefore
\begin{align*}
\frac{d}{dt}u_i^{FB}(t)&=\exp(-\rho t)\rpdf(t)(\mu+(1-\mu)s_i)-d_i\exp(-\rho t)(\rho(1-\rcdf(t)+\rpdf(t))\\
&=\exp(-\rho t)(1-\rcdf(t))(\mu+(1-\mu)s_i-d_i)\{\rhaz(t)-\frac{d_i\rho}{\mu+(1-\mu)s_i-d_i}\}.
\end{align*}
Note that $d_i<\mu+(1-\mu)s_i-d_i$ for $i\in\{A,P\}$. It follows that the first three terms are positive. The sign of the derivative is therefore determined by the sign of the fourth term. Because $\rhaz(\cdot)$ is decreasing, the derivative of $u_i^{FB}(\cdot)$ crosses zero at most once, and if it exists, this crossing is from above. It follows that $u_i^{FB}(\cdot)$ is single-peaked. Under A1 and A2, the peak is at a strictly positive $t$. Substituting $(d_i,s_i)$, for $i\in\{A,P\}$, we find that the crossing occurs at $\tau_i$ such that $\rhaz(\tau_i)=\phi_i$. A1 and A2 imply $\tau_A>\tau_P>0$.\end{proof}

\subsection{Proof of Proposition \ref{strucval}}
The proof makes use of a number of lemmas, stated in the next section. To simplify the exposition, the proofs of the lemmas are presented in the  Supplemental Appendix. 
\subsection{Lemmas}
\noindent \textbf{Observation 1.} Note that 
\begin{align*}
	\int_0^tw^P_0(s)+w^P_1(s)ds+W^P_\phi(t)=1.
\end{align*}
Indeed, the integral is the probability that a type-0 or type-1 arrival occurs at times less than or equal to $t$, while $W_\phi^P(t)$ is the probability that an arrival of news takes longer than $t$. Rearranging, for $t_1\leq t_2$ we have
\begin{align*}
	\int_{t_1}^{t_2}w^P_0(s)+w^P_1(s)ds=W^P_\phi(t_1)-W^P_\phi(t_2).
\end{align*}
Similarly, for the \aaa{} we have,
\begin{align*}
	\int_{t_1}^{t_2}w^A_0(s)+w^A_1(s)+w_S^A(s)ds=W^A_\phi(t_1)-W^A_\phi(t_2).
\end{align*}

\begin{lemma}\label{pbounds} If (\ref{PICa}), then inequalities $(i),(ii),(iii)$ hold,
\begin{enumerate}
	\item[(i)] $u_P(t')-u_P(t)\leq (1-\sigma \acdf(t))\{u_P^{FB}(t')-u_P^{FB}(t)\}$ for $t<t'$. 
	\item[(ii)] $u_P(t)<u_P(\tau_P)$ for $\tau_P<t\leq\infty$.

	\item[(iii)] $u_P(t')-u_P(t)\geq -(\exp(-\rho t)-(\exp(-\rho t'))W_\phi^P(t)\psafe$ for $t<t'$.
	\item[(iv)] Part $(i)$ holds as an equality if $\acdf(t)=\acdf(t')$.
\end{enumerate}
\end{lemma}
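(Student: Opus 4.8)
Everything follows from one bookkeeping identity. Split $u_P(t')$ into the payoff on paths where news arrives in $[0,t]$ (this is the same for every deadline $\ge t$, since the \ppp{} acts on such news before $t$ regardless) and the payoff on the complementary event $\mathcal E=\{\text{no news in }[0,t]\}$, whose probability is $W_\phi^P(t)=(1-\rcdf(t))(1-\sigma\acdf(t))$. Then for every $t<t'$,
\begin{align*}
u_P(t')-u_P(t)=e^{-\rho t}\,W_\phi^P(t)\bigl(V(t,t')-\psafe\bigr),
\end{align*}
where $V(t,t')$ is the \ppp{}'s time-$t$ continuation value, conditional on $\mathcal E$, of waiting until $t'$ and responding to type-1 news via $a(\cdot)$. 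The identical split in the first best gives $u_P^{FB}(t')-u_P^{FB}(t)=e^{-\rho t}(1-\rcdf(t))\bigl(V^{FB}(t,t')-\psafe\bigr)$, with $V^{FB}$ the continuation value conditional on no \emph{real} news by $t$. Since (footnote~\ref{fnbel}) the posterior on $\omega$ is still $\mu$ after $\mathcal E$, and $W_\phi^P(t)=(1-\sigma\acdf(t))(1-\rcdf(t))$, cancelling the common positive factor turns (i) into $V(t,t')\le V^{FB}(t,t')$ and (iii) into $V(t,t')\ge e^{-\rho(t'-t)}\psafe$.

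\textbf{Parts (i) and (iv): comparing $V$ with $V^{FB}$.} Conditional on $\mathcal E$, the residual real-arrival time is $\rcdf$ truncated to $(t,\infty)$ and is independent of $\omega$ and of the agent's type; the agent is \faker{} with posterior probability $p=\sigma(1-\acdf(t))/(1-\sigma\acdf(t))$, and conditionally his residual faking time is $\acdf$ truncated to $(t,\infty)$. A first-best \ppp{} can \emph{simulate} this environment by imagining a fictitious \faker{} agent: she draws a coin that is heads with probability $p$ and, if heads, a fictitious faking time from the truncated $\acdf$; she then treats a fictitious fake as an observation of type-1 news and applies $a(\cdot)$ to it. Her payoff depends only on $\omega$ and her chosen action, and the two environments induce the same joint law of $\bigl(\omega,\text{arrival time},\text{observed type},\text{action}\bigr)$, so this (generally suboptimal) strategy earns exactly $V(t,t')$ in the first-best continuation problem; as $V^{FB}(t,t')$ is the supremum over first-best strategies, $V^{FB}(t,t')\ge V(t,t')$, which is (i). For (iv), $\acdf(t)=\acdf(t')$ forces $\acdf$ to be constant on $[t,t']$, so on $\mathcal E$ no fake can arrive in $(t,t']$; the game continuation from $t$ to $t'$ then literally coincides with the first-best continuation, $V(t,t')=V^{FB}(t,t')$, and the identity makes (i) an equality.

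\textbf{Parts (iii) and (ii).} By the identity, (iii) amounts to $V(t,t')\ge e^{-\rho(t'-t)}\psafe$: responding ``safe'' to everything in $(t,t']$ yields payoff $\psafe$ at a date no later than $t'$, hence time-$t$ value at least $e^{-\rho(t'-t)}\psafe$; since $a(\cdot)$ is pointwise optimal by (\ref{PICa}), $V(t,t')$ weakly dominates this. For (ii), apply part (i) with $t\gets\tau_P$: $u_P(t)-u_P(\tau_P)\le\bigl(1-\sigma\acdf(\tau_P)\bigr)\bigl(u_P^{FB}(t)-u_P^{FB}(\tau_P)\bigr)$; by Lemma~\ref{fb} the function $u_P^{FB}$ is single-peaked with strict peak at $\tau_P$, so the bracket is strictly negative for every $t\in(\tau_P,\infty]$, while $1-\sigma\acdf(\tau_P)\ge 1-\sigma>0$, giving $u_P(t)<u_P(\tau_P)$.

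\textbf{Main obstacle, and an alternative.} The delicate step is the simulation argument for (i): one must condition correctly on $\mathcal E$, compute the posterior type probability $p$, check that the fictitious and genuine randomizations have identical truncated laws (including atoms of $\acdf$, handled via the paper's Dirac conventions), and verify the two induced joint laws match so the coupling is exact. An equivalent but heavier route is to first establish the closed forms $u_P(t)=\psafe-\rho\psafe\int_0^t e^{-\rho s}W_\phi^P(s)\,ds+\int_0^t e^{-\rho s}\bigl(w_1^P(s)(\bl(s)-\psafe)\bigr)^{+}\!ds$ (the $(\cdot)^{+}$ coming from (\ref{PICa}), via the identity $w_0^P+w_1^P=-\tfrac{d}{ds}W_\phi^P$ and integration by parts) and $u_P^{FB}(t)=\psafe-\rho\psafe\int_0^t e^{-\rho s}(1-\rcdf(s))\,ds+\mu(1-\psafe)\int_0^t e^{-\rho s}\rpdf(s)\,ds$. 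From these, (ii)--(iv) are immediate by comparing integrands, using $w_1^P(s)(\bl(s)-\psafe)\le\mu(1-\psafe)\rpdf(s)(1-\sigma\acdf(s))$, $W_\phi^P(s)=(1-\rcdf(s))(1-\sigma\acdf(s))$, that $1-\sigma\acdf(\cdot)$ is nonincreasing, and that $\operatorname{sign}\bigl(\rhaz(s)-\phi_P\bigr)=\operatorname{sign}(\tau_P-s)$; the pointwise comparison fails for (i) past $\tau_P$, however (it would require an extra Fubini estimate), which is exactly why the coupling above is the clean argument for (i).
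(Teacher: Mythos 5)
Your reduction to continuation values is a clean repackaging of the paper's computation, and parts (iii), (iv), and the derivation of (ii) from (i) essentially coincide with the paper's arguments: the paper proves (iii) from the pointwise bound $\psafe+a(s)(\bl(s)-\psafe)\ge\psafe$ plus the identity $\int_t^{t'}(w_0^P+w_1^P)=W_\phi^P(t)-W_\phi^P(t')$, and (iv) by noting that $\apdf\equiv 0$ on $(t,t')$ forces $\bl=1$ and $a=1$ there. The gap is in part (i), which carries all the weight. The step ``$V^{FB}(t,t')$ is the supremum over first-best strategies, hence $V^{FB}(t,t')\ge V(t,t')$'' is a non sequitur: $V^{FB}(t,t')$ is not a supremum but the value of one fixed strategy (wait until exactly $t'$ and act on all real news), while your simulation attains $V(t,t')$ with a \emph{different} first-best strategy, one that stops before $t'$ whenever the fictitious fake fires. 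To conclude you must show the simulated strategy is dominated by the deadline-$t'$ strategy. The only nontrivial comparison is at a fictitious fake at $s\in(t,t')$: the simulation collects at most $\psafe$ there, whereas the deadline-$t'$ strategy continues with value $V^{FB}(s,t')$. Domination therefore requires $V^{FB}(s,t')\ge\psafe$, i.e.\ $u_P^{FB}(t')\ge u_P^{FB}(s)$ for every $s$ in the faking support inside $(t,t')$, which by single-peakedness (Lemma \ref{fb}) is guaranteed only when $t'\le\tau_P$. When faking mass can sit in $(\tau_P,t')$ the comparison genuinely reverses: e.g.\ an atom of $\acdf$ at $s^*\in(\tau_P,t')$ gives $\bl(s^*)=\mu$, $a(s^*)=0$, so the \ppp{} collects $\psafe$ at $s^*$ and $u_P(t')=(1-\sigma)u_P^{FB}(t')+\sigma u_P^{FB}(s^*)>u_P^{FB}(t')$, violating the inequality you are trying to prove at $t=0$. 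So your coupling establishes (i) only for $t'\le\tau_P$ --- and that is exactly the regime you cannot restrict to, since (ii) is obtained by applying (i) at $t=\tau_P<t'$.

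In fairness, this is the delicate point in the paper's own proof as well: it first derives the exact identity that the right-hand side of (\ref{QQ}) equals your $A+B$ split, written as $\int_t^\infty\sigma\apdf(s)\min\{u_P^{FB}(s)-u_P^{FB}(t),\,u_P^{FB}(t')-u_P^{FB}(t)\}ds$, and then drops the min --- a step that invokes the same monotonicity of $u_P^{FB}$ on the faking support. You correctly sensed that something extra is needed past $\tau_P$ (your closing remark about the pointwise comparison failing there), but the coupling does not supply it. What does survive past $\tau_P$ is the bound with the min kept, and that weaker bound suffices for (ii) directly: at $t=\tau_P$ both arguments of the min are nonpositive, the second is strictly negative, and it carries weight at least $1-\sigma>0$. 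If you keep the coupling for (i), state and verify the domination condition $u_P^{FB}(t')\ge u_P^{FB}(s)$ on $(t,t')$ explicitly, and prove (ii) from the min-form identity rather than from (i).
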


\begin{lemma}\label{pcont} If (\ref{PICa}), then $u_P(\cdot)$ is continuous, regardless of the \aaa{}'s strategy.
\end{lemma}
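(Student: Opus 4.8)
The plan is to derive continuity of $u_P(\cdot)$ as a quick consequence of parts (i) and (iii) of Lemma~\ref{pbounds}, using only that $u_P^{FB}(\cdot)$ is continuous (Lemma~\ref{fb}) and that $s\mapsto\exp(-\rho s)$ is continuous. The point is that the only place $u_P$ could jump is at an atom of the \aaa{}'s faking distribution $\acdf(\cdot)$: there $w_1^P$ carries a Dirac mass and $W_\phi^P$ drops discontinuously, but the resulting upward jump of the integral term in (\ref{uP}) and downward jump of the boundary term $\exp(-\rho t)W_\phi^P(t)\psafe$ are equal and cancel. Lemma~\ref{pbounds} already encodes this, so no further information about $\acdf$ is needed.

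\textbf{Step 1: sandwich the increment.} Fix $a\le b$. Because $\acdf$ is a distribution function, $1-\sigma\acdf(a)\in[1-\sigma,1]\subseteq[0,1]$, so Lemma~\ref{pbounds}(i) yields $u_P(b)-u_P(a)\le(1-\sigma\acdf(a))(u_P^{FB}(b)-u_P^{FB}(a))\le|u_P^{FB}(b)-u_P^{FB}(a)|$. For the reverse bound, $W_\phi^P(a)\psafe=(1-\rcdf(a))(1-\sigma\acdf(a))\psafe\in[0,1]$, so Lemma~\ref{pbounds}(iii) yields $u_P(b)-u_P(a)\ge-(\exp(-\rho a)-\exp(-\rho b))W_\phi^P(a)\psafe\ge-(\exp(-\rho a)-\exp(-\rho b))$. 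Hence, for every $a\le b$,
\begin{align*}
\bigl|u_P(b)-u_P(a)\bigr|\ \le\ \max\Bigl\{\,\bigl|u_P^{FB}(b)-u_P^{FB}(a)\bigr|,\ \exp(-\rho a)-\exp(-\rho b)\,\Bigr\}.
\end{align*}

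\textbf{Step 2: pass to the limit.} Fix $t\ge 0$. Setting $a=t$ and letting $b\downarrow t$, the right-hand side above tends to $0$, since $u_P^{FB}$ is continuous at $t$ (Lemma~\ref{fb}) and $\exp(-\rho\cdot)$ is continuous; thus $u_P$ is right-continuous at $t$. Setting $b=t$ and letting $a\uparrow t$, the same bound tends to $0$ — here it matters only that $1-\sigma\acdf(a)$ remains bounded in $[1-\sigma,1]$, so a possible jump of $\acdf$ at $t$ is harmless — thus $u_P$ is left-continuous at $t$. As $t$ was arbitrary, $u_P(\cdot)$ is continuous, and nothing about the \aaa{}'s strategy beyond (\ref{PICa}) entered the argument.

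\textbf{Main obstacle.} Conditional on Lemma~\ref{pbounds}, this is essentially immediate, so the real work lives in that lemma: part~(iii) relies on the feasibility identity of Observation~1, and part~(i) on Bayes-consistency of $\bl(\cdot)$ together with the news-response condition (\ref{PICa}). At this level the only thing to watch is that the two bounds are genuinely one-sided — the first-best comparison controls $u_P(b)-u_P(a)$ only from above and the discounting term only from below — so both parts of Lemma~\ref{pbounds} are needed to trap the increment; using either alone would give only a one-sided continuity statement. (A self-contained argument would instead verify the jump cancellation at each atom $t_m$ of $\acdf$ by hand, which requires noting that $\bl(t_m)=\mu<\psafe$ there and hence $a(t_m)=0$ by (\ref{PICa}); routing through Lemma~\ref{pbounds} avoids this bookkeeping.)
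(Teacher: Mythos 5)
Your proof is correct and follows essentially the same route as the paper: both sandwich the increment $u_P(t')-u_P(t)$ between the lower bound from Lemma~\ref{pbounds}(iii) and the upper bound from Lemma~\ref{pbounds}(i), then let $t'\to t$ from either side using continuity of $u_P^{FB}(\cdot)$ and $\exp(-\rho\cdot)$. Your version is marginally cleaner in simply bounding $W_\phi^P(a)\psafe\le 1$ and $1-\sigma\acdf(a)\le 1$ rather than tracking which endpoint the coefficients are evaluated at, but the substance is identical.
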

%

\begin{lemma}\label{ptop} If (\ref{PICa}) and (\ref{PICt}), then $\pcdf(\tau_P)=1$.
\end{lemma}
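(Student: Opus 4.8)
The plan is to argue by contradiction: suppose $\pcdf(\tau_P)<1$, meaning the principal assigns positive probability to stopping times strictly greater than $\tau_P$ (or to $t=\infty$). I will show this cannot be consistent with the optimal stopping condition (\ref{PICt}) together with the principal's optimal response to type-1 news (\ref{PICa}). The key leverage is part (ii) of Lemma \ref{pbounds}, which states that $u_P(t)<u_P(\tau_P)$ for all $t$ with $\tau_P<t\le\infty$. This says that no stopping time past $\tau_P$ can be a best response, because stopping exactly at $\tau_P$ strictly dominates it.

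\textbf{Main steps.} First, I would recall from the strategy formalism that $\pcdf(\cdot)$ is a genuine CDF on $\{\mathbb{R}^+\cup\infty\}$, so $\pcdf(\tau_P)<1$ means the measure it induces puts positive mass on $(\tau_P,\infty]$. Second, since $u_P(\cdot)$ is continuous on $\mathbb{R}^+$ (Lemma \ref{pcont}) and, by Lemma \ref{pbounds}(ii), is strictly below $u_P(\tau_P)$ on all of $(\tau_P,\infty]$, I can find a set of stopping times in the support of $\pcdf$ lying in $(\tau_P,\infty]$ on which $u_P$ is strictly smaller than $u_P(\tau_P)$. Third, I invoke the footnote caveat attached to (\ref{PICt}): a mixed stopping strategy can place mass on suboptimal times only if the probability of actually realizing such a time is zero. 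But here the realized probability of stopping in $(\tau_P,\infty]$ is exactly $1-\pcdf(\tau_P)>0$, so with positive probability the principal stops at a time $t$ with $u_P(t)<u_P(\tau_P)$ — a profitable deviation exists (shift that mass to $\tau_P$), contradicting optimality. Hence $\pcdf(\tau_P)=1$.

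\textbf{The subtle point} — and what I expect to be the main obstacle — is handling the endpoint $t=\infty$ and making sure Lemma \ref{pbounds}(ii) genuinely covers it, since $u_P(\infty)$ must be interpreted as a limit (the principal never stops, so only real type-0 news can ever trigger safe, and otherwise the payoff is the discounted expectation over an arrival that may never come). I would check that $\lim_{t\to\infty}u_P(t)$ exists and is $\le u_P(\tau_P)$ with strict inequality, which follows because $\tau_P$ is the unique peak of a single-peaked first-best payoff (Lemma \ref{fb}) and Lemma \ref{pbounds}(i)–(ii) translate this into a strict gap for $u_P$ itself. A second minor point is that I should confirm I am not inadvertently assuming there is an atom at $\tau_P$; the claim $\pcdf(\tau_P)=1$ only asserts no mass \emph{strictly} after $\tau_P$, which is exactly what the argument delivers, and it is consistent with (but does not require) an atom at $\tau_P$ itself.
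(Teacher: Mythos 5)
Your proposal is correct and follows essentially the same route as the paper: invoke Lemma \ref{pbounds}(ii) to get $u_P(t)<u_P(\tau_P)$ for all $\tau_P<t\leq\infty$, then use (\ref{PICt}) to conclude the stopping strategy puts no mass past $\tau_P$, i.e.\ $\pcdf(\tau_P)=1$. Your extra care about the $t=\infty$ endpoint is already handled inside the paper's proof of Lemma \ref{pbounds}(ii), so no additional work is needed there.
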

\noindent \textbf{Definition:} Let
\begin{align*}
&v(t)\equiv \int_0^t\exp(-\rho x)\rpdf(x)(\mu+(1-\mu)\asafe)dx+\exp(-\rho t)(1-\rcdf(t))\asafe.
\end{align*}

\begin{lemma}\label{abounds} Inequalities $(i),(ii)$ hold for $t<t'\leq \tau_P$, regardless of the \ppp{}'s strategy. 
	\begin{align*}
(i)\quad &	u_A(t')-u_A(t)\leq\\
&	(1-\pcdf(t))\{v(t')-v(t)\}+
\exp(-\rho t')W_\phi^A(t')(\mu-\asafe) a(t')-\exp(-\rho t)W_\phi^A(t)(\mu-\asafe) a(t).\\
(ii)\quad	&u_A(t')-u_A(t)\geq\\
	&	\{\exp(-\rho t')W_\phi^A(t')a(t')-\exp(-\rho t)W_\phi^A(t)a(t)\}(\mu-\asafe)
-\{\exp(-\rho t)-\exp(-\rho t')\}W_\phi^A(t)\asafe
\end{align*}
 $(iii)$: if $\pcdf(t)=\pcdf(t')$,  $a(t)=a(t')=1$, and $a(s)=1$ for almost all $s\in(t,t')$, then
\begin{align*}
u_A(t')-u_A(t)=(1-\pcdf(t))(u_A^{FB}(t')-u_A^{FB}(t)).
\end{align*}
\end{lemma}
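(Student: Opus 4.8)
The plan is to follow the template of Lemma~\ref{pbounds}: first put $u_A(\cdot)$ into a canonical ``baseline-plus-premium'' form, then read off all three claims. Rewrite the coefficients appearing in (\ref{uA}) as $(1-a(s))\asafe+a(s)=\asafe+(1-\asafe)a(s)$ and $(1-a(t))\asafe+\mu a(t)=\asafe+(\mu-\asafe)a(t)$. Collecting the $\asafe$-terms and using $\frac{d}{ds}\big[e^{-\rho s}W_\phi^A(s)\big]=-\rho e^{-\rho s}W_\phi^A(s)-e^{-\rho s}\big(w_0^A+w_1^A+w_S^A\big)(s)$ (equivalently, integrating Observation~1 by parts) to collapse the $\asafe$-baseline to $\asafe\big(1-\rho\int_0^t e^{-\rho s}W_\phi^A(s)\,ds\big)$, one obtains the exact identity, valid for all $t<t'$,
\begin{align*}
u_A(t')-u_A(t)&=-\asafe\rho\int_t^{t'}e^{-\rho s}W_\phi^A(s)\,ds+(1-\asafe)\mu\int_t^{t'}e^{-\rho s}\rpdf(s)\big(1-\pcdf(s)\big)a(s)\,ds\\
&\quad+(\mu-\asafe)\big[e^{-\rho t'}W_\phi^A(t')a(t')-e^{-\rho t}W_\phi^A(t)a(t)\big].
\end{align*}
The same manipulation gives $v(t')-v(t)=-\asafe\rho\int_t^{t'}e^{-\rho s}(1-\rcdf(s))\,ds+(1-\asafe)\mu\int_t^{t'}e^{-\rho s}\rpdf(s)\,ds$, while setting $\pcdf\equiv0$ and $a\equiv1$ recovers $u_A^{FB}$ in the same form. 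The identity uses only that $\pcdf$ is a cdf and $a(\cdot)\le 1$, hence holds regardless of the \ppp{}'s strategy.

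Parts (ii) and (iii) follow quickly. For (iii), $\pcdf(t)=\pcdf(t')$ forces $\pcdf\equiv\pcdf(t)$ on $[t,t']$ by monotonicity, so each of the three terms of the identity carries the common factor $1-\pcdf(t)$, and the remaining (news) terms, now with $a\equiv1$, reassemble into $u_A^{FB}(t')-u_A^{FB}(t)$. For (ii): the middle term of the identity is $\ge0$; the last term equals the first bracket on the right-hand side of (ii) verbatim; and $-\asafe\rho\int_t^{t'}e^{-\rho s}W_\phi^A(s)\,ds\ge-\asafe\rho W_\phi^A(t)\int_t^{t'}e^{-\rho s}\,ds=-\asafe W_\phi^A(t)(e^{-\rho t}-e^{-\rho t'})$ since $W_\phi^A$ is non-increasing; adding the three estimates gives (ii).

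Part (i) is the crux. Its terminal terms coincide with the last bracket of the identity, which cancels, so (i) reduces to $-\asafe\rho\int_t^{t'}e^{-\rho s}W_\phi^A(s)\,ds+(1-\asafe)\mu\int_t^{t'}e^{-\rho s}\rpdf(s)(1-\pcdf(s))a(s)\,ds\le(1-\pcdf(t))\big(v(t')-v(t)\big)$. Substituting $W_\phi^A(s)=(1-\rcdf(s))(1-\pcdf(s))$ and the formula for $v(t')-v(t)$ and rearranging, the right-hand side minus the left-hand side equals
\begin{align*}
\int_t^{t'}e^{-\rho s}\big(\pcdf(s)-\pcdf(t)\big)\big((1-\asafe)\mu\rpdf(s)-\asafe\rho(1-\rcdf(s))\big)\,ds+(1-\asafe)\mu\int_t^{t'}e^{-\rho s}\rpdf(s)\big(1-\pcdf(s)\big)\big(1-a(s)\big)\,ds.
\end{align*}
The second integral is $\ge0$ because $a\le1$. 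In the first, $\pcdf(s)-\pcdf(t)\ge0$ since $\pcdf$ is non-decreasing, so it is enough that $(1-\asafe)\mu\rpdf(s)\ge\asafe\rho(1-\rcdf(s))$ a.e.\ on $(t,t')$, i.e.\ $\rhaz(s)\ge\frac{\rho\asafe}{\mu(1-\asafe)}$. Since $t'\le\tau_P$ and $\rhaz$ is decreasing with $\rhaz(\tau_P)=\phi_P$ (Lemma~\ref{fb}, using (A1)), $\rhaz(s)\ge\phi_P=\frac{\rho\psafe}{\mu(1-\psafe)}$ on $(0,\tau_P]$; and $\phi_P\ge\frac{\rho\asafe}{\mu(1-\asafe)}$ because $x\mapsto x/(1-x)$ is increasing and $\psafe>\asafe$. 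This closes (i).

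The step I expect to be the main obstacle is exactly this sign in (i): the naive term-by-term bound (using $1-\pcdf(s)\le1-\pcdf(t)$ and $a\le1$) pushes the $\asafe$-baseline term the wrong way, so one must show the slack it creates is dominated by the slack in the real-type-$1$ term, and that is precisely where $t'\le\tau_P$, assumption (A1), the monotonicity of $\rhaz$, and the inequality $\psafe>\asafe$ all enter. Atoms of $\pcdf$ are absorbed into the Dirac-$\delta$ bookkeeping of Section~3 and affect none of the estimates.
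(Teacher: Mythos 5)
Your proposal is correct. Parts (ii) and (iii) track the paper's argument closely (the paper likewise isolates the terminal $(\mu-\asafe)a$-terms, uses Observation 1 to control the $\asafe$-baseline, and obtains (iii) by direct substitution once $\pcdf$ and $a$ are constant on the interval). Where you genuinely diverge is in the crux of part (i). The paper bounds the integrand's type-1 coefficient $(1-a(s))\asafe+a(s)$ by $1$ and then, reversing the order of integration in a double integral, rewrites the remaining baseline as $\int_t^{\infty}\ppdf(s)\min\{v(s)-v(t),v(t')-v(t)\}\,ds$ — a $\ppdf$-weighted expectation of capped $v$-increments — which is at most $(1-\pcdf(t))(v(t')-v(t))$ because the cap is the largest increment; monotonicity of $v$ on $[0,\tau_P]$ (i.e.\ $\rhaz\ge\phi_P>\rho\asafe/(\mu(1-\asafe))$) is what validates the $\min$-representation. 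You instead collapse everything into a single exact identity by integrating the $\asafe$-baseline by parts, cancel the terminal bracket, and show that RHS minus LHS of (i) is a sum of two nonnegative integrals, the sign of the first being controlled by exactly the same hazard-rate inequality. The two arguments consume the same ingredients ($t'\le\tau_P$, (A1), $\asafe<\psafe$, monotonicity of $\pcdf$, $a\le1$); yours makes the point where each hypothesis bites more transparent and delivers (ii)--(iii) as immediate corollaries of one identity, while the paper's buys a probabilistic reading of the agent's continuation value as an expectation over the \ppp{}'s stopping time. One trivial phrasing slip: your canonical identity does not actually use $a(\cdot)\le1$ — it is exact; that hypothesis enters only in the subsequent estimates.
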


\begin{lemma}\label{aball} Suppose (\ref{AIC}). If $a(t)<1$ and $\pcdf(t)<1$ for some $t$, then
\begin{enumerate}
	\item[(i)] an $\epsilon'>0$ such that $a(t')<1$ for $t'\in[t,t+\epsilon')$.
	\item[(ii)] an $\epsilon''>0$ exists such that for $a(t'')<1$ for $t''\in(t-\epsilon'',t]$, provided $t>0$.
\end{enumerate}
\end{lemma}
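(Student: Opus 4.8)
The plan is a proof by contradiction, driven by a single observation: whenever $a(t)<1$, the agent is necessarily ``faking'' at $t$ with positive mass, so his incentive constraint (\ref{AIC}) makes $u_A(t)$ a global maximum of $u_A(\cdot)$. Indeed, by (\ref{belief}), if $\apdf(t)=0$ then the numerator and denominator of $\bl(t)$ coincide --- both equal $\mu\rpdf(t)(1-\apri\acdf(t))$, which is nonzero since the denominator of $\bl$ is always strictly positive --- so $\bl(t)=1$ and hence $a(t)=1$ by (\ref{PICa}); contrapositively, $a(t)<1$ forces $\apdf(t)>0$, which by (\ref{AIC}) gives $u_A(t)\ge u_A(s)$ for all $s$ (this also holds trivially when $t$ is an atom of $\acdf$, where $\apdf(t)=\infty$).

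For part (i), I would suppose the conclusion fails, which yields a sequence $t_n\downarrow t$, $t_n>t$, with $a(t_n)=1$. Subtracting the expression (\ref{uA}) for $u_A(t)$ from that for $u_A(t_n)$ (or, equivalently, invoking the bounds of Lemma \ref{abounds}) and letting $n\to\infty$, the integral term converges --- the only piece that need not vanish is the contribution of a possible atom of $\pcdf$ at $t$, which reappears with the opposite sign inside $W_\phi^A(t)$ --- and one obtains
\[
\lim_{n\to\infty}\big(u_A(t_n)-u_A(t)\big)\ \ge\ \exp(-\rho t)\,W_\phi^A(t)\,(\mu-\asafe)\,\big(1-a(t)\big)\ >\ 0 ,
\]
where strict positivity uses $W_\phi^A(t)=(1-\rcdf(t))(1-\pcdf(t))>0$ (from $\pcdf(t)<1$), $\mu>\asafe$, and $a(t)<1$. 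Thus $u_A(t_n)>u_A(t)$ for large $n$, contradicting $u_A(t)\ge u_A(t_n)$, so the required $\epsilon'$ exists. Part (ii) is symmetric: if $a(t_n)=1$ along a sequence $t_n\uparrow t$ with $t_n<t$, then writing $q=1-\pcdf(t)>0$ and $q^{-}=1-\pcdf(t^{-})\ge q$, the analogous computation gives
\[
\lim_{n\to\infty}\big(u_A(t)-u_A(t_n)\big)\ \le\ \exp(-\rho t)\,(1-\rcdf(t))\,(\mu-\asafe)\,\big(q\,a(t)-q^{-}\big)\ <\ 0 ,
\]
since $q\,a(t)<q\le q^{-}$ (a leftward jump of $\pcdf$ at $t$ only strengthens this), so again $u_A(t_n)>u_A(t)$ eventually, the same contradiction.

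I expect the delicate step to be the atom bookkeeping inside the two limits: one must track how a mass point of $\pcdf$ at $t$ enters the integral term of (\ref{uA}) via the Dirac convention while the one-sided limits of the terminal term $\exp(-\rho t)W_\phi^A(t)$ disagree across $t$; the content is that every such contribution either vanishes or reinforces the strict inequality in the needed direction, so the sign survives. Everything else is routine: the inequalities $\mu>\asafe$ and $\pcdf(t)<1$, continuity of $\rcdf$ and right-continuity of $\pcdf$, and --- in the variant that cites Lemma \ref{abounds} rather than differencing (\ref{uA}) directly --- the fact that the relevant times satisfy $t<\tau_P$ in every application of this lemma.
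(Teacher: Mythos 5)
Your proposal is correct and follows essentially the same route as the paper: the key step in both is that $a(t)<1$ forces $\apdf(t)>0$ via (\ref{belief}) and (\ref{PICa}), so (\ref{AIC}) makes $u_A(t)$ a global maximizer, and then the one-sided bounds of Lemma \ref{abounds} yield the contradiction as $t'\to t^{\pm}$ (the paper merely rearranges the same inequalities into an explicit bound $a(t')\leq A\,a(t)+B$ with $A\to 1$, $B\to 0$, rather than phrasing it as a sequence argument). Your handling of the one-sided limits of $W_\phi^A$, the possible atom of $\pcdf$ at $t$, and the restriction $t<\tau_P$ (which follows from $\pcdf(t)<1$ and Lemma \ref{ptop}) all match what the paper needs.
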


\begin{lemma}\label{comp1} Suppose an interval $0\leq t_1<t_2\leq \tau_P$ exist such that $\acdf(t_1)=\acdf(t_2)$. If (\ref{PICa}) and (\ref{PICt}), then
\begin{enumerate}
\item[(i)] $\ppdf(t)=0$ for all $t\in[t_1,t_2)$.
\item[(ii)] If $t_1>0$, then there exists $t_0<t_1$ such that $\ppdf(t)=0$ for all $t\in[t_0,t_2)$.
\end{enumerate}

\end{lemma}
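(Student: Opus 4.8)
The plan is to exploit the fact that on the interval $[t_1,t_2]$ — where the \aaa{} places no faking mass — the \ppp{}'s payoff $u_P(\cdot)$ is just a positive multiple of her first-best payoff $u_P^{FB}(\cdot)$, which by Lemma~\ref{fb} is strictly increasing up to its peak $\tau_P\ge t_2$; hence stopping anywhere in $[t_1,t_2)$ is strictly worse than stopping at $t_2$, so optimality (\ref{PICt}) forbids the \ppp{} from placing mass there, and continuity of $u_P$ lets me push the no-mass region a little to the left of $t_1$. First I would note that, since $\acdf(\cdot)$ is a CDF, $\acdf(t_1)=\acdf(t_2)$ with $t_1<t_2$ forces $\acdf(\cdot)\equiv\acdf(t_1)$ on all of $[t_1,t_2]$, so $\acdf(t)=\acdf(t')$ for every $t_1\le t<t'\le t_2$ and, via (\ref{belief}), $\bl(\cdot)=1>\psafe$ on $(t_1,t_2]$, whence $a(\cdot)=1$ there by (\ref{PICa}).

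Next I would fix $t\in[t_1,t_2)$ and apply Lemma~\ref{pbounds}(iv) with $t'=t_2$: since $\acdf(t)=\acdf(t_2)$,
\[
u_P(t_2)-u_P(t)=\bigl(1-\sigma\acdf(t_1)\bigr)\bigl(u_P^{FB}(t_2)-u_P^{FB}(t)\bigr),
\]
with $1-\sigma\acdf(t_1)>0$ because $\sigma<1$ and $\acdf(t_1)\le1$. By Lemma~\ref{fb}, $u_P^{FB}(\cdot)$ is single-peaked with unique peak at $\tau_P$, hence strictly increasing on $[0,\tau_P]$; since $t<t_2\le\tau_P$ this gives $u_P^{FB}(t_2)>u_P^{FB}(t)$, and therefore $u_P(t)<u_P(t_2)$ for all $t\in[t_1,t_2)$. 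For part~(i): if $\ppdf(t)>0$ for some $t\in[t_1,t_2)$, then (\ref{PICt}) requires $u_P(t)\ge u_P(t')$ for every $t'$, in particular $u_P(t)\ge u_P(t_2)$, a contradiction; so $\ppdf\equiv0$ on $[t_1,t_2)$. For part~(ii): Lemma~\ref{pcont} gives continuity of $u_P(\cdot)$, and the $t=t_1$ instance of the above gives $u_P(t_1)<u_P(t_2)$, so continuity furnishes a $t_0<t_1$ with $u_P(t)<u_P(t_2)$ for all $t\in[t_0,t_1]$; the same optimality argument then shows $\ppdf\equiv0$ on $[t_0,t_1)$, and combining with part~(i) yields $\ppdf\equiv0$ on $[t_0,t_2)$.

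The main thing to be careful about — though it is not a genuine obstacle — is the boundary case $t_2=\tau_P$: there the first-order condition $\rhaz(\tau_P)=\phi_P$ only delivers $\frac{d}{dt}u_P^{FB}(\tau_P)=0$, so I must invoke single-peakedness from Lemma~\ref{fb}(i) (strict monotonicity to the left of the peak) to obtain the \emph{strict} gap $u_P^{FB}(\tau_P)>u_P^{FB}(t)$ used in Step~2. The only other bookkeeping point is translating ``$\ppdf(t)=0$ for all $t\in[t_0,t_2)$'' into the underlying claim that $\pcdf(\cdot)$ assigns zero probability to $[t_0,t_2)$, which — exactly as under the paper's measure-zero convention — is immediate for atoms and holds up to a null set for the absolutely continuous part.
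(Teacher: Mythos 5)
Your proof is correct and follows essentially the same route as the paper's: constancy of $\acdf$ on $[t_1,t_2]$, Lemma \ref{pbounds}(iv) to reduce the payoff comparison to the first-best gap $u_P^{FB}(t_2)-u_P^{FB}(t)>0$ from Lemma \ref{fb}, optimality (\ref{PICt}) to rule out stopping mass, and continuity from Lemma \ref{pcont} to extend the no-mass region slightly below $t_1$ for part (ii). The additional remarks on the boundary case $t_2=\tau_P$ and the measure-zero convention are fine but not needed beyond what the paper already records.
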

%
%

\begin{lemma}\label{comp2} Suppose an interval $[t_1,t_2)$ exists such that $\pcdf(t_1)=\pcdf(t_2)<1$. In equilibrium, $\apdf(t)=0\,$ for all $t\in[t_1,t_2)$.	
\end{lemma}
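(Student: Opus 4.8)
The plan is to reduce the claim to the statement ``$a(t)=1$ for all $t\in[t_1,t_2)$,'' after which the ``delaying a fake is costless'' logic of Lemma \ref{abounds}(iii) finishes it. First I would record the housekeeping facts. We must have $t_2\le\tau_P$: if $\tau_P\in[t_1,t_2)$ then $\pcdf(\tau_P)=\pcdf(t_1)<1$ contradicts Lemma \ref{ptop}, and if $\tau_P<t_1$ then $1=\pcdf(\tau_P)\le\pcdf(t_1)<1$. Since $\pcdf$ is constant on $[t_1,t_2)$, $\ppdf\equiv0$ there, so $w_S^A(s)=\ppdf(s)(1-\rcdf(s))=0$ and $W_\phi^A(s)=(1-\pcdf(t_1))(1-\rcdf(s))$ for $s\in[t_1,t_2)$. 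Also, on $[0,\tau_P]$ we have $\rhaz\ge\phi_P>0$, hence $\rpdf>0$, so inspecting (\ref{belief}) gives $\bl(s)=1$ whenever $\apdf(s)=0$; equivalently $a(s)<1\Rightarrow\apdf(s)>0$.

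\emph{The reduction.} Suppose $a\equiv1$ on $[t_1,t_2)$. For any $t<t'$ in $[t_1,t_2)$ we have $t<t'\le\tau_P$, $\pcdf(t)=\pcdf(t')$, $a(t)=a(t')=1$, and $a\equiv1$ on $(t,t')$, so Lemma \ref{abounds}(iii) gives $u_A(t')-u_A(t)=(1-\pcdf(t))\big(u_A^{FB}(t')-u_A^{FB}(t)\big)$. By Lemma \ref{fb}, $u_A^{FB}$ is single-peaked with peak $\tau_A>\tau_P$, hence strictly increasing on $[0,\tau_P]$, and $1-\pcdf(t)=1-\pcdf(t_1)>0$, so $u_A$ is strictly increasing on $[t_1,t_2)$. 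Then each $t\in[t_1,t_2)$ is strictly dominated by some $t'\in(t,t_2)$, so by (\ref{AIC}), $\apdf(t)=0$, which is the conclusion.

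\emph{Establishing $a\equiv1$ on $[t_1,t_2)$ (the crux).} Suppose instead $a(t_0)<1$ for some $t_0\in[t_1,t_2)$. Then $\apdf(t_0)>0$ and $\pcdf(t_0)<1$, so Lemma \ref{aball}(i) yields $\epsilon'>0$ with $a<1$ on $[t_0,t_0+\epsilon')$; set $I=[t_0,\min\{t_0+\epsilon',t_2\})\subseteq[t_1,t_2)$, a nondegenerate interval on which $a<1$, hence $\apdf>0$, throughout. By (\ref{AIC}), every point of $I$ maximizes $u_A$, so $u_A$ is constant on $I$. Plugging $w_S^A\equiv0$ and $w_0^A,w_1^A\propto(1-\pcdf(t_1))\rpdf$ into (\ref{uA}) reduces $u_A$ on $I$ to a constant plus $(1-\pcdf(t_1))\big[\int^t e^{-\rho s}\rpdf(s)\big(\asafe+\mu(1-\asafe)a(s)\big)ds+e^{-\rho t}(1-\rcdf(t))\big(\asafe+(\mu-\asafe)a(t)\big)\big]$; constancy forces (after noting $B$ is absolutely continuous and differentiating) the relation $B'=\rho B-\rhaz\,a\,\asafe(1-\mu)$ a.e.\ on $I$, where $B:=\asafe+(\mu-\asafe)a$. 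Where $a=0$ this reads $0=\rho\asafe>0$, so $0<a<1$ a.e.\ on $I$, whence (\ref{PICa}) forces $\bl\equiv\psafe$ on $I$, which through (\ref{belief}) pins the faking rate to $\apdf(t)=\mu\rpdf(t)(1-\sigma\acdf(t))(1-\psafe)\big/\big(\sigma(1-\rcdf(t))(\psafe-\mu)\big)$. Taking $I$ leftward-maximal with left endpoint $c_I$, either $c_I=t_1$ or, by Lemma \ref{aball}(ii), $a(c_I)=1$; in the latter case $B$ jumps up to $\mu$ at $c_I$, so $u_A(c_I)>\lim_{t\downarrow c_I}u_A(t)=u_A|_I$, contradicting that points of $I$ maximize $u_A$.

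\emph{The main obstacle.} The delicate step is closing this contradiction when the jump in $B$ at $c_I$ is absent --- i.e.\ when $a(t)\uparrow1$ as $t\downarrow c_I$ --- and, relatedly, the case $c_I=t_1$, where $I$ cannot be extended leftward. Here I expect the contradiction to come from confronting the two relations that $I$ must satisfy --- the indifference ODE for $B$ (equivalently $a$) and the Bayes ODE for $\acdf$ coming from $\bl\equiv\psafe$ --- together with the boundary data $a(c_I^+)=1$ and $\rhaz(c_I)>\phi_A$ (since $c_I<\tau_A$), showing they cannot hold jointly on a nondegenerate interval; alternatively, Lemma \ref{comp1} applied just left of $c_I$ should rule out the requisite faking pattern. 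Everything else is routine bookkeeping with the distortion formulas $w_i^A$ and the single-peakedness of $u_A^{FB}$.
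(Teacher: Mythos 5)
Your reduction (if $a\equiv 1$ on the interval, then Lemma \ref{abounds}(iii) plus single-peakedness of $u_A^{FB}$ makes $u_A$ strictly increasing there, so (\ref{AIC}) kills all faking) is sound and matches the paper's final step, and your ``jump'' contradiction at $c_I$ is fine as far as it goes. But the two cases you flag as unresolved --- $a(t)\uparrow 1$ as $t\downarrow c_I$, and $c_I=t_1$ --- are the real content of the lemma, and neither of your proposed repairs closes them. The indifference ODE $B'=\rho B-\rhaz\,a\,\asafe(1-\mu)$ and the Bayes condition $\bl\equiv\psafe$ constrain different objects ($a(\cdot)$ and $\acdf(\cdot)$, respectively) and are jointly satisfiable: with boundary data $a(c_I^+)=1$ one gets $B'(c_I^+)=\rho\mu-\rhaz(c_I)\asafe(1-\mu)<0$ precisely because $c_I<\tau_P<\tau_A$ implies $\rhaz(c_I)>\phi_A$, so the ODE happily produces $a<1$ just to the right of $c_I$, while the Bayes condition merely pins a positive faking hazard $\ahaz=\mu(1-\psafe)\rhaz/(\psafe-\mu)$. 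Lemma \ref{comp1} requires $\acdf$ constant on an interval, which is exactly what fails on $I$ (where $\apdf>0$). So there is no contradiction to be extracted at the left edge of $I$ in the smooth case.

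The paper locates the contradiction at the \emph{right} edge of the set where $a<1$, and this needs two ideas absent from your proposal. First, it enlarges the window to $[t_1,t_2')$, where $t_2'$ is the supremum of $\{t:\pcdf(t)=\pcdf(t_2)\}$, so that continuity gives $u_P(t_2')=u_P^*$. Second, it shows that if $\{a<1\}$ essentially fills an interval $(q_1,p_K)$, then $\bl\equiv\psafe$ a.e.\ there makes $u_P$ strictly decreasing on it (the principal earns only $\psafe$ from news yet pays the discount), forcing $p_K<t_2'$; hence there is always a transition time $t$ with $a<1$ a.e.\ just to its left and $a=1$ just to its right while $\pcdf$ is still constant. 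At such a transition, Lemma \ref{abounds}(ii) bounds the agent's loss from delaying from $t''<t$ to $t_M>t$ by a quantity that vanishes as $t'',t_M\to t$, while Lemma \ref{abounds}(iii) gives a first-best gain $u_A^{FB}(t_H)-u_A^{FB}(t_M)>0$ bounded away from zero; so $t''$ is strictly dominated, $\apdf(t'')=0$, hence $\bl(t'')=1$ and $a(t'')=1$, contradicting $a<1$ a.e.\ left of $t$. Your left-edge strategy cannot reproduce this comparison because to the left of $c_I$ the agent is already earning the first-best drift and gains nothing by waiting into $I$. (Two smaller points: the set $\{a<1\}$ must be handled as a countable union of open intervals whose components may abut, which is why the paper's two-case structure is needed; and your final reduction should only assume $a=1$ a.e., bounding the continuation term at a residual point by $\mu$ directly, as the paper's Step 3 does.)
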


\begin{lemma}\label{gapexp} Suppose $0<t_1<t_2\leq \tau_P$ and $\pcdf(t_1)=\pcdf(t_2)<1$. In equilibrium, $\pcdf(t)=\pcdf(t_2)$ for all $t\in[0,t_2]$.
\end{lemma}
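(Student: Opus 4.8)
The statement says: if there is a gap in the support of the principal's stopping strategy, $\pcdf(t_1)=\pcdf(t_2)<1$ with $0<t_1<t_2\leq\tau_P$, then in fact $\pcdf$ must be flat on all of $[0,t_2]$, i.e.\ the gap extends all the way back to time $0$. The plan is to bootstrap backwards, repeatedly using Lemma \ref{comp1} and Lemma \ref{comp2} to push the flat region leftward until it reaches $0$. The two earlier lemmas give exactly the two halves of the induction step: a gap in $\pcdf$ forces (via Lemma \ref{comp2}) a gap in $\acdf$ on the same interval, and a gap in $\acdf$ forces (via Lemma \ref{comp1}(ii)) a gap in $\pcdf$ that is strictly longer on the left. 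Iterating should drive the left endpoint to $0$.

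Concretely, I would argue as follows. First, from $\pcdf(t_1)=\pcdf(t_2)<1$, Lemma \ref{comp2} gives $\apdf(t)=0$ on $[t_1,t_2)$, hence $\acdf(t_1)=\acdf(t_2)$. Since $t_1>0$, Lemma \ref{comp1}(ii) then yields some $t_0<t_1$ with $\ppdf(t)=0$ on $[t_0,t_2)$; in particular $\pcdf(t_0)=\pcdf(t_2)<1$. If $t_0=0$ we are done. Otherwise $0<t_0<t_2\leq\tau_P$ and $\pcdf(t_0)=\pcdf(t_2)<1$, so the same pair of lemmas applies again, producing a still-smaller left endpoint. To conclude I would suppose for contradiction that the flat region does not reach $0$: let $t^*\equiv\inf\{s\geq 0:\pcdf(s)=\pcdf(t_2)\}$ and suppose $t^*>0$. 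By right-continuity (or by taking limits of the flat intervals just constructed) $\pcdf(t^*)=\pcdf(t_2)<1$, and since $t^*>0$ the argument above applies once more at $(t^*,t_2)$, producing $t_0<t^*$ with $\pcdf(t_0)=\pcdf(t_2)$, contradicting the definition of $t^*$. Hence $t^*=0$ and $\pcdf$ is constant on $[0,t_2]$.

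One technical point needs care: Lemma \ref{comp1}(ii) as stated gives the existence of \emph{some} $t_0<t_1$, not a uniform lower bound, so a naive "subtract a fixed $\epsilon$ each time" argument does not immediately terminate; this is why I phrase the conclusion through the infimum $t^*$ rather than through a finite iteration. The infimum argument sidesteps this: any hypothetical positive infimum is itself a point where $\pcdf$ equals the flat value (by continuity from the right, using that a left-continuous modification of a CDF is immaterial here since we only need the value at $t^*$ to be $<1$ and equal to $\pcdf(t_2)$ — which follows because $\pcdf$ is non-decreasing and sandwiched), and then one more application of the lemmas pushes strictly past it, a contradiction.

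The main obstacle I anticipate is precisely this termination/limiting step — making sure that the value $\pcdf(t^*)$ is genuinely $\pcdf(t_2)$ (so that $t^*$ is in the flat set, not merely its closure) and that the hypotheses "$t^*>0$" and "$t^*<t_2\leq\tau_P$" of Lemma \ref{comp1}(ii) and Lemma \ref{comp2} are legitimately met at $t^*$. Because $\pcdf$ is monotone and bounded and we have $\pcdf(t_0)=\pcdf(t_2)$ for $t_0$ arbitrarily close to $t^*$ from above, monotonicity pins $\pcdf\equiv\pcdf(t_2)$ on $(t^*,t_2]$, and then the (right-)continuity built into the decomposition $F_P=F_P^*+\sum f_m\mathcal I(t\geq t_m)$ forces $\pcdf(t^*)=\pcdf(t_2)<1$ as well — so $t^*$ does lie in the flat set and the contradiction goes through. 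Everything else is a direct invocation of the two preceding lemmas.
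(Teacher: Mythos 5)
Your proposal is correct and follows essentially the same route as the paper: the paper defines $T_P=\{t:\pcdf(t)=\pcdf(t_2)\}$, uses right-continuity of $\pcdf$ to show $T_P$ has a minimum $\underline{t}_1$ (your $t^*$), and, assuming $\underline{t}_1>0$, applies Lemma \ref{comp2} followed by Lemma \ref{comp1}(ii) to produce $t_0<\underline{t}_1$ with $\pcdf(t_0)=\pcdf(t_2)$, contradicting minimality --- exactly your infimum argument. The only cosmetic difference is that the paper shrinks the right endpoint ($t_2''<t_2'<t_2$, etc.) to avoid issues with possible atoms at $t_2$ when passing from ``$\apdf=0$ on a half-open interval'' to ``$\acdf$ equal at the two endpoints,'' a technicality you gloss over but which does not affect the substance.
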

%
%

\begin{lemma}\label{gapexa} Suppose $0<t_1<t_2$, $\pcdf(t_2)<1$, and $\acdf(t_1)=\acdf(t_2)$. In equilibrium, $\acdf(t)=\acdf(t_2)$ for all $t\in[0,t_2]$.
\end{lemma}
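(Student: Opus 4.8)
The plan is to derive the claim purely by chaining three of the preceding gap‑propagation lemmas, in a ``sandwich'': first turn the gap in $\acdf$ into a gap in $\pcdf$ via Lemma \ref{comp1}, then push that $\pcdf$‑gap all the way back to time $0$ via Lemma \ref{gapexp}, and finally turn the resulting $\pcdf$‑gap back into an $\acdf$‑gap via Lemma \ref{comp2}. The one preliminary point is to show the gap lies strictly below the hard deadline $\tau_P$, since the cited lemmas require their right endpoint to be at most $\tau_P$.

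First I would argue $t_2<\tau_P$. In equilibrium conditions (\ref{PICa}) and (\ref{PICt}) hold, so Lemma \ref{ptop} gives $\pcdf(\tau_P)=1$. Since $\pcdf$ is nondecreasing and $\pcdf(t_2)<1$ by hypothesis, necessarily $t_2<\tau_P$, hence $0<t_1<t_2<\tau_P$.

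Then I would run the sandwich. Because $0<t_1<t_2\le\tau_P$ and $\acdf(t_1)=\acdf(t_2)$, Lemma \ref{comp1}(ii) (using $t_1>0$) yields a $t_0<t_1$, which we may assume lies in $[0,t_1)$, with $\ppdf\equiv 0$ on $[t_0,t_2)$; thus $\pcdf$ is constant there, with value $c:=\lim_{t\uparrow t_2}\pcdf(t)\le\pcdf(t_2)<1$. Fix any $s_1,s_2$ with $t_0<s_1<s_2<t_2$ (so $s_1>0$); then $0<s_1<s_2\le\tau_P$ and $\pcdf(s_1)=\pcdf(s_2)=c<1$, so Lemma \ref{gapexp} gives $\pcdf\equiv c$ on $[0,s_2]$, and letting $s_2\uparrow t_2$ shows $\pcdf\equiv c<1$ on $[0,t_2)$. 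Finally, for each $s\in(0,t_2)$ the interval $[0,s)$ has $\pcdf(0)=\pcdf(s)=c<1$, so Lemma \ref{comp2} gives $\apdf\equiv 0$ on $[0,s)$; taking $s\uparrow t_2$, $\apdf\equiv 0$ on $[0,t_2)$, i.e.\ $\acdf$ is constant on $[0,t_2)$. Since also $\acdf(t_1)=\acdf(t_2)$ with $t_1\in(0,t_2)$, we conclude $\acdf(t)=\acdf(t_2)$ for all $t\in[0,t_2]$.

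The argument introduces nothing new beyond the three workhorse lemmas; conceptually Lemma \ref{gapexa} is just the $\acdf$‑analogue of Lemma \ref{gapexp}, obtained by translating back and forth between the two strategies through Lemmas \ref{comp1}--\ref{comp2}. The only steps requiring care -- and the closest thing to an obstacle -- are the preliminary reduction $t_2<\tau_P$ (without which none of the cited lemmas apply) and the bookkeeping around half‑open intervals and possible atoms of $\pcdf$ at $0$ and at $t_2$, which is why I would propagate the $\pcdf$‑gap through interior points $s_1,s_2,s$ and pass to a limit rather than invoking the lemmas at the endpoints directly.
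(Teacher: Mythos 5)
Your proof is correct, but it is organized differently from the paper's. The paper proves Lemma \ref{gapexa} with a self-contained extremal argument: it takes the minimum $\underline{t}_1$ of the set $\{t:\acdf(t)=\acdf(t_2)\}$ (which exists by right-continuity), supposes $\underline{t}_1>0$, and then alternates Lemma \ref{comp1}(ii) and Lemma \ref{comp2} once to produce a $t_0<\underline{t}_1$ with $\acdf(t_0)=\acdf(t_2)$, contradicting minimality --- exactly mirroring the structure of the proof of Lemma \ref{gapexp}, and never invoking Lemma \ref{gapexp} itself. You instead run a single forward chain: \ref{comp1}(ii) converts the $\acdf$-gap into a $\pcdf$-gap on $[t_0,t_2)$, Lemma \ref{gapexp} (used as a black box) propagates that $\pcdf$-gap all the way to $0$, and \ref{comp2} converts it back into $\apdf\equiv 0$ on $[0,t_2)$. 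This reuses the already-established extremal machinery inside \ref{gapexp} rather than repeating it, at the cost of the extra bookkeeping you correctly flag: checking $t_2<\tau_P$ via Lemma \ref{ptop} (the paper leaves this implicit here, having made the same observation inside the proof of \ref{gapexp}), and applying \ref{gapexp} and \ref{comp2} at interior points $s_1<s_2<t_2$ and passing to the limit so that the hypotheses ($s_1>0$, strict inequality with $\tau_P$) are met. Both arguments rest on the same two workhorse lemmas and are equally valid; yours is slightly more economical, the paper's keeps \ref{gapexp} and \ref{gapexa} as parallel, independently proved twins.
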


%
%

\subsection{Proof of Proposition \ref{strucval}}

The Proposition is proved by establishing a number of claims. Each claim assumes the existence of an equilibrium with beneficial search, and its conclusion describes a property of such an equilibrium.

 Let $\bar{t}_i$, for $i\in\{P,A\}$ denote the maximum of the support of the \ppp{}'s and \aaa{}'s stopping strategies, respectively.

\begin{claim}\label{tsupp} $\bar{t}_A=\bar{t}_P=\tau_P$.
\end{claim}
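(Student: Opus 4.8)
The plan is to establish the equality in a sequence of three comparisons. First I would show $\bar t_A \le \tau_P$: if the \aaa{} fakes at some time $t > \tau_P$ with positive probability, note that for $t > \tau_P$ the \ppp{}'s payoff satisfies $u_P(t) < u_P(\tau_P) \le u_P^{FB}(\tau_P)$ by Lemma \ref{pbounds}(ii), and more importantly the \ppp{} strictly prefers to stop by $\tau_P$; formally, since $u_P^{FB}$ is single-peaked at $\tau_P$ (Lemma \ref{fb}) and faking past $\tau_P$ only makes the \ppp{}'s continuation less attractive, the \ppp{} places no mass beyond $\tau_P$, so $\bar t_P \le \tau_P$; but then for faking times $t>\tau_P$ the \ppp{} has already stopped for sure, so $W_\phi^A(t)=0$ and the \aaa{}'s payoff of faking at such $t$ is strictly dominated by faking earlier (he foregoes the chance of inducing risky and just discounts). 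This forces $\bar t_A \le \tau_P$ as well. The cleaner route: first pin down $\bar t_P \le \tau_P$ directly from Lemma \ref{pbounds}(ii), which says $u_P(t) < u_P(\tau_P)$ for all $t > \tau_P$, contradicting optimality of stopping (\ref{PICt}) at any such $t$.

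Next I would show $\bar t_P \ge \tau_P$ and $\bar t_A \ge \tau_P$, i.e., the supports reach all the way up to $\tau_P$. Suppose $\bar t_A = t^* < \tau_P$. Then after $t^*$ the \aaa{} never fakes, so on $(t^*,\tau_P]$ the news process is undistorted ($\acdf$ constant), and by Lemma \ref{pbounds}(iv)/(i) the \ppp{}'s payoff moves exactly like her first-best payoff scaled by $(1-\sigma\acdf(t^*))>0$: $u_P(t')-u_P(t) = (1-\sigma\acdf(t^*))\{u_P^{FB}(t')-u_P^{FB}(t)\}$ for $t^* \le t < t' \le \tau_P$. Since $u_P^{FB}$ is strictly increasing up to $\tau_P$, $u_P$ is strictly increasing on $[t^*,\tau_P]$, so the \ppp{} strictly prefers to stop at $\tau_P$ over any earlier time — hence $\bar t_P = \tau_P$ and the \ppp{} places no mass below $\tau_P$ that could be optimal unless $u_P$ is flat there, which it is not. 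But if the \ppp{} stops only at $\tau_P$ (mass point there, nothing before), then just before $\tau_P$ the \aaa{} faces an imminent choice of safe with no news, so faking slightly before $\tau_P$ strictly beats not faking — contradicting $\bar t_A = t^* < \tau_P$. (This is the ``preemption/unraveling'' logic; I would make it rigorous using Lemma \ref{abounds}(iii), which gives $u_A(t')-u_A(t) = (1-F_P(t))(u_A^{FB}(t')-u_A^{FB}(t))$ on intervals where the \ppp{} doesn't stop and $a=1$, together with $\tau_A > \tau_P$ so $u_A^{FB}$ is still increasing near $\tau_P$.) Symmetrically, if $\bar t_P < \tau_P$, Lemma \ref{gapexp} (applied with $t_2 = \tau_P$) would force $\pcdf$ constant on $[0,\tau_P]$, meaning the \ppp{} never stops before $\tau_P$; combined with $\pcdf(\tau_P)=1$ from Lemma \ref{ptop} this reinstates the mass point at $\tau_P$, and the same preemption argument applies.

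The **main obstacle** I anticipate is the interaction between the two suprema: ruling out the ``mismatched'' configurations where, say, $\bar t_A < \tau_P \le \bar t_P$ or $\bar t_P < \bar t_A$, and handling the boundary bookkeeping at $\tau_P$ itself (whether mass sits exactly at $\tau_P$, and the tie-breaking convention that the \ppp{} picks safe when news arrives exactly at her intended stopping time). The cleanest way to organize this is: (1) $\bar t_P \le \tau_P$ from Lemma \ref{pbounds}(ii); (2) $\bar t_A \le \bar t_P$, since the \aaa{} gains nothing from faking after the \ppp{} has surely stopped ($W_\phi^A = 0$ there) and strictly loses to discounting — so $\bar t_A \le \bar t_P \le \tau_P$; (3) if $\bar t_A < \tau_P$, then $\acdf$ is constant on $(\bar t_A, \tau_P]$, so by Lemma \ref{comp1}(ii) the \ppp{} places no mass on a neighborhood extending below $\bar t_A$, and iterating (or invoking Lemma \ref{gapexp}) pushes $\pcdf$ flat back toward $0$, contradicting beneficial search (her payoff would then be $\psafe$, since with $\acdf$ eventually constant and $\pcdf$ flat her payoff can't exceed the first-best-at-$0$ value $\psafe$) — wait, more precisely it contradicts optimality because $u_P$ strictly increases on $[\bar t_A,\tau_P]$ so she'd never stop before $\tau_P$, forcing all her mass to $\tau_P$, which revives preemption and contradicts $\bar t_A<\tau_P$; (4) hence $\bar t_A = \tau_P$, and since $\bar t_A \le \bar t_P \le \tau_P$ we get $\bar t_P = \tau_P$ too.
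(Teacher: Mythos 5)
Your architecture matches the paper's: (a) $\bar{t}_P\leq\tau_P$ from Lemma \ref{pbounds}(ii) (the paper packages this as Lemma \ref{ptop}); (b) $\bar{t}_A\leq\bar{t}_P$ because faking after the \ppp{} has surely stopped is strictly worse than preempting; (c) $\bar{t}_A<\tau_P$ is impossible because $\acdf(\cdot)$ constant above $\bar{t}_A$ makes $u_P$ track $u_P^{FB}$ (Lemma \ref{pbounds}(iv)), which pushes the \ppp{}'s mass up to $\tau_P$ and revives the preemption motive via Lemma \ref{abounds}(iii). These are the paper's Steps 2, 6, and 7, with Lemmas \ref{comp1}, \ref{gapexp}, \ref{gapexa} doing the same bookkeeping you invoke.

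The genuine gap is in your step (2), $\bar{t}_A\leq\bar{t}_P$. You claim faking at $t'>\bar{t}_P$ is \emph{strictly} dominated by faking earlier because the \aaa{} ``foregoes the chance of inducing risky.'' That presumes a faking time $t_\epsilon\leq\bar{t}_P$ with $W_\phi^A(t_\epsilon)>0$ at which the \ppp{} actually acts on type-1 news, i.e.\ $a(t_\epsilon)=1$. At this point in the argument nothing rules out $a(t)<1$ on an entire left-neighborhood of $\bar{t}_P$, in which case a fake there induces risky only with probability $a(t)$ and the strict comparison can fail; and under (\ref{AIC}) mere weak dominance does not exclude mass above $\bar{t}_P$ (indifference would permit it). The paper devotes its Step 5 to closing exactly this hole: if $a(t)<1$ throughout $(\bar{t}_P-\epsilon,\bar{t}_P)$, then (\ref{PICa}) forces $\bl(t)=\psafe$ there, so $u_P(t)-u_P(t'')\leq -(\exp(-\rho t'')-\exp(-\rho t))W_\phi^P(t)\psafe<0$ on that interval, contradicting $\bar{t}_P$ being the top of the support of $\pcdf(\cdot)$. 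Only with a $t_\epsilon$ satisfying $a(t_\epsilon)=1$ in hand does the estimate $u_A(t')-u_A(t_\epsilon)<0$ go through (the term $-\exp(-\rho t_\epsilon)W_\phi^A(t_\epsilon)(\mu-\asafe)$ is then bounded away from zero while $v(\bar{t}_P)-v(t_\epsilon)\to 0$). Your step (3) does not suffer from this issue, since there $\apdf=0$ above $\bar{t}_A$ forces $\bl=1$ and hence $a=1$; but step (2) is invoked first, so the hole must be filled. (Minor: the phrase ``just discounts'' is off --- once the \ppp{} has stopped, the \aaa{}'s payoff is locked in and no further discounting occurs; the loss is entirely the foregone chance of inducing risky.) With the Step-5 argument added, your proof is complete and coincides with the paper's.
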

\begin{proof}[Proof of Claim \ref{tsupp}] 

\textit{Step 1}: $\bar{t}_P>0$. Suppose $\bar{t}_P=0$. It follows immediately that $\pcdf(t)=1$ for all $t\geq 0$. That is, the \ppp{} terminates the search at time 0. Regardless of the \aaa{}'s strategy, $u_P(0)=\psafe$. Therefore, the \ppp{} does not benefit from search.

\textit{Step 2}: $\bar{t}_P\leq \tau_P$. Follows immediately from Lemma \ref{ptop}.

\textit{Step 3}: $\bar{t}_A>0$. Suppose $\bar{t}_A=0$, i.e., $\acdf(t)=1$ for $t\geq 0$. It follows immediately that $\apdf(t)=0$ for all $t>0$. From Lemma \ref{comp1}, we have $\ppdf(t)=0$ for all $t\in[0,\tau_P)$. From Lemma \ref{ptop}, we have $\pcdf(t)=\mathcal{I}(t\geq \tau_P)$. From Lemma \ref{abounds} part (iii), the $u_A(t)-u_A(0)=u_A^{FB}(t)-u_A^{FB}(0)$ for $0<t<\tau_P$. Because $t<\tau_P<\tau_A$, Lemma \ref{fb} implies $u_A^{FB}(t)-u_A^{FB}(0)>0$. Thus $\apdf(0)=\delta(0)$ violates \ref{AIC}.

\textit{Step 4:} $\bar{t}_A\geq \bar{t}_P$. Suppose $\bar{t}_A<\bar{t}_P$. For $t\in(\bar{t}_A,\bar{t}_P)$, we have $\apdf(t)=0$, and therefore $\acdf(t)=\acdf(\bar{t}_A)$. Furthermore, $\bar{t}_A<\bar{t}_P$ implies $\pcdf(\bar{t}_A)<1$. Applying Lemma \ref{gapexa}, we have $\acdf(t)=\acdf(\bar{t}_A)$ for all $t\in[0,\bar{t}_A]$. In other words, the CDF of the \aaa{}'s strategy is constant from 0 up to and including the top of the support $\bar{t}_A$. It follows that $\bar{t}_A=0$, contradicting Step 3.

\textit{Step 5:} For any $\epsilon>0$, there exists $t\in(\bar{t}_P-\epsilon,\bar{t}_P)$ such that $a(t)=1$. 

Suppose not. An $\epsilon>0$ exists such that $a(t)<1$ for all $t\in(\bar{t}_P-\epsilon,\bar{t}_P)$. It follows that for all $t\in(\bar{t}_P-\epsilon,\bar{t}_P]$, we have $\apdf(t)>0$ and $\bl(t)=\psafe$. Therefore, for $\bar{t}_P-\epsilon\leq t''<t\leq \bar{t}_P$ we have 
\begin{align*}
	u_P(t)-u_P(t'')=
	&\int_{t''}^t\exp(-\rho s)(w^P_0(s)+w^P_1(s))\psafe ds+\exp(-\rho t)W_\phi^P(t)\psafe-\exp(-\rho t'')W_\phi^P(t'')\psafe\\
\leq\exp(-\rho t'')&\int_{t''}^t(w^P_0(s)+w^P_1(s))\psafe ds+\exp(-\rho t)W_\phi^P(t)\psafe-\exp(-\rho t'')W_\phi^P(t'')\psafe.
\end{align*}
Using Observation 1,  we have
\begin{align*}
u_P(t)-u_P(t'')\leq 
&\exp(-\rho t'')(W_\phi^P(t'')-W_\phi^P(t))\psafe+\exp(-\rho t)W_\phi^P(t)\psafe-\exp(-\rho (t'')W_\phi^P(t'')\psafe\\
=&-\{\exp(-\rho t'')-\exp(-\rho t)\}W_\phi^P(t)\psafe<0,
\end{align*}
where the last inequality follows from $W_\phi^P(t)=(1-\sigma\acdf(t))(1-\rcdf(t))>0$ for all $t\geq 0$.
It follows from (\ref{PICt}) that $\ppdf(t)=0$ for all $t\in(\bar{t}_P-\epsilon,\bar{t}_P]$, contradicting the assumption that $\bar{t}_P$ is the maximum of the support of the \ppp{}'s stopping strategy.

\textit{Step 6}: $\bar{t}_A\leq\bar{t}_P$. Consider any $t'>\bar{t}_P$. From Step 5 we have that for any $\epsilon$, there exists some $t_\epsilon\in(\bar{t}_P-\epsilon,\bar{t}_P)$ such that $a(t_\epsilon)=1$. It follows that
\begin{align*}
u_A(t')-u_A(t_\epsilon)=\underbrace{u_A(t')-u_A(\bar{t}_P)}_{A}+\underbrace{u_A(\bar{t}_P)-u_A(t_\epsilon)}_{B}.
\end{align*}
Focus first on $A$. For $t\in[\bar{t}_P,t']$ we have $\pcdf(t)=1$, and for $t\in(\bar{t}_P,t']$ we have $\ppdf(t)=0$. Thus,
\begin{align*}
	u_A(t')-u_A(\bar{t}_P)=0.
\end{align*}
Next, focus on $B$. From Lemma \ref{abounds}, we have
\begin{align*}
	u_A(\bar{t}_P)-u_A(t_\epsilon)&\leq	(1-\pcdf(t_\epsilon))\{v(\bar{t}_P)-v(t_\epsilon)\}-\exp(-\rho t_\epsilon)W_\phi^A(t_\epsilon)(\mu-\asafe)\Rightarrow\\
\frac{u_A(\bar{t}_P)-u_A(t_\epsilon)}{1-\pcdf(t_\epsilon)}&\leq v(\bar{t}_P)-v(t_\epsilon)-\exp(-\rho t_\epsilon)(1-\rcdf(t_\epsilon))(\mu-\asafe).
\end{align*}
Because $v(\cdot)$ is continuous, for $\epsilon$ sufficiently small, we have $v(\bar{t}_P)-v(t_\epsilon)\approx 0$. Simultaneously, $\exp(-\rho t_\epsilon)(1-\rcdf(t_\epsilon))(\mu-\asafe)>\exp(-\rho \tau_P)(1-\rcdf(\tau_P))(\mu-\asafe)>0$, where the last inequality follows because $\exp(-\rho\cdot)(1-\rcdf(\cdot))$ is decreasing and Step 2 ($t_\epsilon<\bar{t}_P<\tau_P$). Therefore, for $\epsilon$ sufficiently small, there exists $t_\epsilon$ such that RHS is strictly negative. It follows that $B$ is strictly negative. Recalling that $A$ is 0, we have $u_A(t')-u_A(t_\epsilon)<0$. Therefore, a $t_\epsilon$ exists such that $u_A(t')-u_A(t_\epsilon)<0$ for all $t'>\bar{t}_P$. From (\ref{AIC}), we have $\apdf(t')=0$ for all such $t'$. It follows that $\bar{t}_A\leq \bar{t}_P$.

\textit{Step 7:} $\bar{t}_P=\bar{t}_A=\tau_P$. Combining Steps  2, 4, and 6 we have $\bar{t}_P=\bar{t}_A\leq \tau_P$. Suppose that the inequality is strict, i.e., $\bar{t}_P<\tau_P$. Because $\bar{t}_A=\bar{t}_P$, for all $t\in[\bar{t}_P,\tau_P]$ we have $\acdf(t)=1$. Using Lemma \ref{pbounds} part (iv), $u_P(t)-u_P(\bar{t}_P)=(1-\sigma\acdf(\bar{t}_P))(u^{FB}(t)-u^{FB}(\bar{t}_P))$ for all $t\in[\bar{t}_P,\tau_P]$. Note that  $u^{FB}(t)-u^{FB}(\bar{t}_P)>0$ for such $t$ (see Lemma \ref{fb}). Therefore, $u_P(t)>u_P(\bar{t}_P)$. Note further that continuity of $u_P(\cdot)$ (see Lemma \ref{pcont}) implies that $u_P(\bar{t}_P)=u_P^*$, the \ppp{}'s equilibrium payoff. Thus, $u_P(t)>u_P^*$, which contradicts (\ref{PICt}).\end{proof}

 Let $\underline{t}_i$, $i\in\{P,A\}$, denote the bottom of the support of the \ppp{} and \aaa{} stopping strategy, respectively.

\begin{claim}\label{bsupp} $\underline{t}_P=\underline{t}_A=\tau_M$, for some $\tau_M>0$.
\end{claim}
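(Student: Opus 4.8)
\textbf{Proof proposal for Claim \ref{bsupp}.}

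The plan is to establish four things in sequence: (a) $\underline{t}_P > 0$; (b) $\underline{t}_A \leq \underline{t}_P$; (c) $\underline{t}_A \geq \underline{t}_P$, hence $\underline{t}_P = \underline{t}_A =: \tau_M$; and (d) $a(t) = 1$ on the equilibrium support, which is needed for (b)--(c) to go through cleanly. I would interleave (d) with the others as needed. Throughout I use that $u_P(\cdot)$ is continuous (Lemma \ref{pcont}), that the tops of both supports equal $\tau_P$ (Claim \ref{tsupp}), and the single-peakedness of the first-best payoffs (Lemma \ref{fb}).

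\textbf{Step 1: $\underline{t}_P > 0$.} I would argue by contradiction. Suppose $\underline{t}_P = 0$, so the \ppp{} puts stopping mass arbitrarily close to $0$. Since $u_P(\cdot)$ is continuous and $u_P(0) = \psafe$ regardless of the \aaa{}'s strategy, optimality of stopping (\ref{PICt}) forces the \ppp{}'s equilibrium payoff to equal $\psafe$ --- contradicting beneficial search. (Here one must be slightly careful: $\underline{t}_P=0$ does not literally mean an atom at $0$, but continuity of $u_P$ at $0$ together with (\ref{PICt}) evaluated along a sequence $t_n \downarrow 0$ with $\ppdf(t_n) > 0$ gives $u_P^* = \lim u_P(t_n) = u_P(0) = \psafe$.) So $\underline{t}_P > 0$.

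\textbf{Step 2: $a(t) = 1$ near $\underline{t}_P$, and then $\underline{t}_A \leq \underline{t}_P$.} First I would show that on any right-neighborhood inside the support the \ppp{} cannot have $a(t) < 1$ persistently: if $a(t) < 1$ on an interval then $\bl(t) = \psafe$ there, which by the computation already used in Step 5 of Claim \ref{tsupp} makes $u_P$ strictly decreasing on that interval, contradicting that stopping mass sits there. Hence for every $\epsilon > 0$ there is $t \in (\underline{t}_P, \underline{t}_P + \epsilon)$ with $a(t) = 1$. Now suppose for contradiction $\underline{t}_A > \underline{t}_P$. Then on $[\underline{t}_P, \underline{t}_A)$ we have $\acdf$ constant, so by Lemma \ref{pbounds}(iv) $u_P(t) - u_P(\underline{t}_P) = (1 - \sigma\acdf(\underline{t}_P))\{u_P^{FB}(t) - u_P^{FB}(\underline{t}_P)\}$ on that interval; since $\underline{t}_P < \tau_P$ and $u_P^{FB}$ is increasing up to $\tau_P$ (Lemma \ref{fb}), the right side is positive, so $u_P(t) > u_P(\underline{t}_P) = u_P^*$, contradicting (\ref{PICt}). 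Hence $\underline{t}_A \leq \underline{t}_P$.

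\textbf{Step 3: $\underline{t}_A \geq \underline{t}_P$, and conclude.} Suppose instead $\underline{t}_A < \underline{t}_P$. On $[\underline{t}_A, \underline{t}_P)$ the \ppp{} never stops, i.e.\ $\pcdf$ is constant (and $< 1$, since $\underline t_P < \tau_P$), while $\acdf$ is strictly increasing somewhere there (mass of the \aaa{}'s strategy lies in this interval). Apply Lemma \ref{gapexa} with an interval $[t_1, t_2) \subset (\underline t_A, \underline t_P)$ on which $\acdf$ is \emph{not} changing: but such an interval would, by that lemma, force $\acdf$ constant on all of $[0, t_2]$, pushing mass below $\underline t_A$ --- so in fact $\acdf$ must be strictly increasing throughout $(\underline t_A,\underline t_P)$. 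Then I would use Lemma \ref{abounds}(iii) (valid since $a = 1$ a.e.\ on the interval by Step 2's argument, $\pcdf$ constant) to get $u_A(t) - u_A(\underline t_A) = (1 - \pcdf(\underline t_A))(u_A^{FB}(t) - u_A^{FB}(\underline t_A))$ on $[\underline t_A, \underline t_P]$; since $\underline t_A < \tau_P < \tau_A$, the first-best $u_A^{FB}$ is strictly increasing here, so $u_A$ is strictly increasing on this interval, contradicting (\ref{AIC}) (which requires the \aaa{} to be indifferent across the support of $\acdf$, and mass sits in the interval). Hence $\underline t_A \geq \underline t_P$. Combining with Step 2, $\underline t_P = \underline t_A =: \tau_M$, and Step 1 gives $\tau_M > 0$.

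\textbf{Main obstacle.} The delicate point is handling the boundary infimum $\underline t_i$ rigorously without assuming atoms there --- every argument has to be phrased in terms of sequences of points carrying positive density and continuity of the payoff functions, and one must confirm $a(t) = 1$ holds \emph{almost everywhere} on the relevant intervals (not just at isolated points) before invoking the "equality" parts of Lemmas \ref{pbounds} and \ref{abounds}. I expect the bookkeeping around "$\acdf$ strictly increasing on the whole gap interval" (ruling out sub-intervals of constancy via Lemma \ref{gapexa}) to be the fussiest part, though it is conceptually routine given the lemmas.
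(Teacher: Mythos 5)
Your Steps 1 and 2 are sound and track the paper's route: Step 1 is essentially identical, and your Step 2 is an inline re-derivation of the paper's Lemma \ref{comp1} via Lemma \ref{pbounds}(iv), which is exactly how that lemma is proved. The genuine gap is in Step 3. To invoke Lemma \ref{abounds}(iii) on $[\underline{t}_A,\underline{t}_P)$ you must first establish $a(s)=1$ for almost all $s$ in that interval, and you justify this ``by Step 2's argument.'' But that argument only shows that $u_P$ is strictly decreasing on any interval where $a<1$ (hence $\bl=\psafe$), and then derives a contradiction from the presence of \emph{stopping mass} on that interval. The interval $[\underline{t}_A,\underline{t}_P)$ lies strictly below the \ppp{}'s support, so it carries no stopping mass, and a strictly decreasing $u_P$ there is perfectly consistent with (\ref{PICt}) (which only requires $u_P\leq u_P^*$ off the support). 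So the hypothesis of Lemma \ref{abounds}(iii) is not verified, and your Step 3 does not go through as written.

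This is not a bookkeeping issue: it is precisely the content of the paper's Lemma \ref{comp2}, which the paper's own proof of this claim simply cites, and whose proof occupies the bulk of the supporting work. Ruling out $a<1$ on an interval where the \ppp{} does not stop requires an agent-side argument: the set where $a<1$ is open (Lemma \ref{aball}); on a maximal subinterval $(q_1,p_K)$ with $a<1$ a.e., the decreasing-$u_P$ computation forces $p_K$ to lie strictly below the first time at which $u_P$ returns to $u_P^*$; and then just to the left of $p_K$ the \aaa{} fakes with positive density (since $a<1\Rightarrow\bl\leq\psafe\Rightarrow\apdf>0$) yet, by Lemma \ref{abounds}(ii)--(iii) and continuity, strictly prefers to wait until just after $p_K$ where $a=1$, violating (\ref{AIC}). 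Either supply this argument or cite Lemma \ref{comp2} directly (and, symmetrically, Lemma \ref{comp1} in Step 2). Separately, your detour through Lemma \ref{gapexa} in Step 3 is unnecessary: once $u_A$ is shown strictly increasing on $[\underline{t}_A,\underline{t}_P)$, condition (\ref{AIC}) already yields $\apdf=0$ throughout that interval, contradicting that $\underline{t}_A$ is the bottom of the \aaa{}'s support.
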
	
\begin{proof}[Proof of Claim \ref{bsupp}]

\textit{Step 1}: $\underline{t}_P>0$. From Lemma \ref{pcont}, in equilibrium $u_P(\cdot)$ is continuous. It follows that $u_P(\underline{t}_P)=u_P^*$, the \ppp{}'s equilibrium payoff. Note that $u_P(0)=\psafe$, regardless of the \aaa{}'s strategy. Therefore, if $\underline{t}_P=0$, then $u_P^*=\psafe$. In this case, the \ppp{} does not benefit from search.

\textit{Step 2}: $\underline{t}_A\leq\underline{t}_P$. First, suppose $\underline{t}_A=0$. From Step 1, we have $\underline{t}_P>0$; thus Step 2 follows immediately. Second, consider  $\underline{t}_A>0$. Note that $\underline{t}_A\leq\bar{t}_A=\tau_P$, where the equality is Claim \ref{tsupp}. Next, consider any $t'\in(0,\underline{t}_A)$. Because $t'<\underline{t}_A$, we have $\acdf(t')=\acdf(0)=0$. From Lemma \ref{comp1}, we have $\ppdf(t)=0$ for all $t\in[0,t')$. Because $t'$ can be chosen arbitrarily close to $\underline{t}_A$, we have $\ppdf(t)=0$ for all $t\in[0,\underline{t}_A)$. Therefore, $\underline{t}_P\geq\underline{t}_A$.

\textit{Step 3}: $\underline{t}_A\geq\underline{t}_P$. Consider any $t'\in[0,\underline{t}_P)$. For such $t'$, we have $\pcdf(0)=\pcdf(t')=0$. Using Lemma \ref{comp2}, we have $\apdf(t)=0$ for all $t\in[0,t')$. Because $t'$ can be chosen arbitrarily close to $\underline{t}_P$, we have $\apdf(t)=0$ for all $t\in[0,\underline{t}_P)$. Therefore, $\underline{t}_A\geq \underline{t}_P$.

\textit{Step 4}: $\underline{t}_A=\underline{t}_P=\mint>0$. Combining Steps 2 and 3 we have $\underline{t}_A=\underline{t}_P=\mint$, for some $\mint\geq 0$. Using Step 1, we have $\mint>0$.\end{proof}

\begin{claim}\label{asupport}  The support of $\acdf(\cdot)$ is the interval $[\tau_M,\tau_P]$.
\end{claim}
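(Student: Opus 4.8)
The plan is to argue by contradiction: I assume the support of $\acdf(\cdot)$ has a gap and contradict Claim~\ref{bsupp}. By Claims~\ref{tsupp} and \ref{bsupp}, the support of $\acdf(\cdot)$ is a closed subset of $[\tau_M,\tau_P]$ with infimum $\tau_M>0$ and supremum $\tau_P$, so in particular $\tau_M$ and $\tau_P$ themselves lie in the support. If the support is a proper subset of $[\tau_M,\tau_P]$, then some connected component of its complement in $[\tau_M,\tau_P]$ is a nonempty open interval; since $\tau_M$ and $\tau_P$ belong to the (closed) support, every point of this component lies strictly between $\tau_M$ and $\tau_P$. Picking two such points $t_1<t_2$, I obtain $\tau_M<t_1<t_2<\tau_P$ with $\acdf(t_1)=\acdf(t_2)$, i.e.\ the \aaa{} places no faking mass on $[t_1,t_2]$.

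Next I would propagate this gap into the \ppp{}'s stopping strategy and then back to time $0$. Since $t_1>0$, Lemma~\ref{comp1}(ii) supplies some $t_0<t_1$ (with $t_0\geq 0$) such that $\ppdf(t)=0$ for all $t\in[t_0,t_2)$; hence $\pcdf(\cdot)$ is constant on $[t_0,t_2)$, say equal to $p$. Because $t_2<\tau_P=\bar t_P$ (Claim~\ref{tsupp}), we have $\pcdf(t_2)<1$, so $p<1$. If $t_0=0$ we are already done: $\pcdf\equiv\pcdf(0)=0$ on $[0,t_2)$ with $t_2>\tau_M$, contradicting that $\tau_M=\underline t_P$ is the bottom of the support of $\pcdf(\cdot)$. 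If $t_0>0$, choose any $t^{*}\in\bigl(\max\{t_0,\tau_M\},\,t_2\bigr)$ (a nonempty interval because $t_0<t_2$ and $\tau_M<t_2$); then $\pcdf(t_0)=\pcdf(t^{*})=p<1$ with $0<t_0<t^{*}\leq\tau_P$, so Lemma~\ref{gapexp} yields $\pcdf\equiv p$ on $[0,t^{*}]$. Evaluating at $0$ forces $p=\pcdf(0)=0$, hence $\pcdf(t^{*})=0$ while $t^{*}>\tau_M$, again contradicting Claim~\ref{bsupp}. In either case the assumed gap is impossible, so the support of $\acdf(\cdot)$ is exactly $[\tau_M,\tau_P]$.

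The argument is brief because the heavy lifting is already carried by the gap-propagation Lemmas~\ref{comp1} and \ref{gapexp}. The only steps needing care are (a) the topological remark that any gap can be taken strictly interior to $(\tau_M,\tau_P)$—which uses only that $\tau_M$ and $\tau_P$ lie in the closed support—and (b) verifying the hypotheses of the two lemmas, in particular that $t_2<\tau_P$ makes $\pcdf<1$ on the relevant stretch, and that the degenerate case $t_0=0$ is dispatched directly rather than via Lemma~\ref{gapexp}. I expect (b), i.e.\ reconciling the ``some $t_0<t_1$'' delivered by Lemma~\ref{comp1}(ii) with the strict-positivity requirement of Lemma~\ref{gapexp}, to be the main (minor) obstacle. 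Note that the proof does not invoke part~(iii) of Proposition~\ref{strucval} ($a(\cdot)=1$); it rests entirely on the already-established facts $\underline t_A=\underline t_P=\tau_M>0$ and $\bar t_A=\bar t_P=\tau_P$ together with the gap lemmas.
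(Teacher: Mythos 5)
Your proof is correct, and it follows the same overall strategy as the paper's: assume a gap $(t_1,t_2)$ strictly inside $(\tau_M,\tau_P)$ with $\acdf(t_1)=\acdf(t_2)$, propagate that gap back to time $0$, and contradict $\tau_M>0$ from Claim~\ref{bsupp}. The only difference is the propagation vehicle: the paper applies Lemma~\ref{gapexa} directly to the agent's distribution (noting $\pcdf(t_2)<1$ since $t_2<\bar t_P$) to conclude $\acdf(0)=\acdf(t_2)>0$, hence $\underline t_A=0$; you instead convert the agent-side gap into a principal-side gap via Lemma~\ref{comp1}(ii) and then invoke Lemma~\ref{gapexp}, contradicting $\underline t_P=\tau_M>0$. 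Since Lemmas~\ref{gapexa} and~\ref{gapexp} are twins built from the same alternation of Lemmas~\ref{comp1} and~\ref{comp2}, the two arguments are equivalent in substance; your route just costs the extra case analysis on whether $t_0=0$ and the check that $p=\pcdf(0)=0$, both of which you handle correctly.
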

\begin{proof}[Proof of Claim \ref{asupport}]

 It must be shown that $\acdf(t_2)-\acdf(t_1)>0$ for any interval $(t_1,t_2)$ where $\tau_M<t_1< t_2< \tau_P$. Suppose that there exists an interval $(t_1,t_2)$ satisfying the required bounds such that $\acdf(t_1)=\acdf(t_2)$. Furthermore, because $\tau_M<t_2<\tau_P$, the max and min of the support, we have $0<\acdf(t_2)<1$. From Lemma \ref{gapexa}, we have $\acdf(t)=\acdf(t_2)$ for all $t\in[0,t_2)$.  Therefore, $\acdf(0)>0$. By implication $\mint=0$,  which contradicts Step 4.
 \end{proof}

\begin{claim}\label{psupport}  The support of $\pcdf(\cdot)$ is the interval $[\tau_M,\tau_P]$.
\end{claim}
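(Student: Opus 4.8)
The plan is to run the same argument as in the proof of Claim \ref{asupport}, with Lemma \ref{gapexp} playing the role that Lemma \ref{gapexa} played there. Claims \ref{tsupp} and \ref{bsupp} already give $\overline{t}_P=\tau_P$ and $\underline{t}_P=\mint$, so the support of $\pcdf(\cdot)$ is a closed subset of $[\mint,\tau_P]$ containing both endpoints; all that is left is to rule out an interior gap. Concretely, I would show that $\pcdf(t_2)-\pcdf(t_1)>0$ whenever $\mint\le t_1<t_2<\tau_P$, which (together with the two endpoint facts) forces the support to equal $[\mint,\tau_P]$.

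So suppose for contradiction that $\pcdf(t_1)=\pcdf(t_2)$ for some $\mint\le t_1<t_2<\tau_P$ (a maximal gap can always be taken with $t_2<\tau_P$, since $\tau_P$ itself is in the support). First I would check the hypotheses of Lemma \ref{gapexp}: $t_1\ge \mint>0$ by Claim \ref{bsupp}, and $t_2<\tau_P=\overline{t}_P$ forces positive stopping mass in $(t_2,\tau_P]$, whence $\pcdf(t_2)<\pcdf(\tau_P)=1$ using Lemma \ref{ptop}. Lemma \ref{gapexp} then yields $\pcdf(t)=\pcdf(t_2)$ for all $t\in[0,t_2]$, so in particular $\pcdf(0)=\pcdf(t_2)$. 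Next I would argue $\pcdf(t_2)>0$: if $t_1>\mint$ then $[\mint,t_1]$ carries positive mass because $\mint$ is in the support, and if $t_1=\mint$ then, since a neighborhood of $\mint$ to the right lies inside the gap, the only way $\mint$ can be in the support is an atom at $\mint\le t_1$; either way $\pcdf(t_1)>0$, hence $\pcdf(t_2)>0$. Therefore $\pcdf(0)>0$, i.e. the \ppp{} stops at time $0$ with positive probability, so $0$ lies in the support of $\pcdf(\cdot)$ and $\underline{t}_P=0$, contradicting $\mint=\underline{t}_P>0$. This contradiction establishes the claim.

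I do not anticipate a real obstacle: the substantive work is contained in Lemma \ref{gapexp} (whose proof is deferred to the Supplemental Appendix), and the present claim is simply its application, exactly parallel to how Claim \ref{asupport} applies Lemma \ref{gapexa}. The only points needing care are bookkeeping ones — the boundary configuration of a gap abutting $\mint$ (handled above by allowing $t_1=\mint$ and invoking $\mint>0$), a gap abutting $\tau_P$ (handled by shrinking $t_2$ to a value strictly below $\tau_P$, so the possible atom at $\tau_P$ still leaves $\pcdf(t_2)<1$), and the degenerate possibility $\pcdf(t_1)=1$, which is immediately impossible since it would confine the support to $[0,t_1]$ and contradict $\overline{t}_P=\tau_P$.
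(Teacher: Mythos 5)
Your proposal is correct and follows essentially the same route as the paper's proof: assume an interior gap $\pcdf(t_1)=\pcdf(t_2)$ with $t_2<\tau_P$, verify $0<\pcdf(t_2)<1$ from the fact that $\mint$ and $\tau_P$ are the endpoints of the support, invoke Lemma \ref{gapexp} to flatten $\pcdf(\cdot)$ on all of $[0,t_2]$, and conclude $\pcdf(0)>0$, contradicting $\mint=\underline{t}_P>0$ from Claim \ref{bsupp}. Your extra bookkeeping on the boundary cases ($t_1=\mint$, gaps abutting $\tau_P$, $\pcdf(t_1)=1$) is sound but does not change the argument.
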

\begin{proof}[Proof of Claim \ref{psupport}]

 It must be shown that $\pcdf(t_2)-\pcdf(t_1)>0$ for any interval $(t_1,t_2)$ where $\tau_M<t_1< t_2< \tau_P$. Suppose that there exists an interval $(t_1,t_2)$ satisfying the required bounds such that $\pcdf(t_1)=\pcdf(t_2)$. Furthermore, because $\tau_M<t_2<\tau_P$, the max and min of the support, we have $0<\pcdf(t_2)<1$. From Lemma \ref{gapexp}, we have $\pcdf(t)=\pcdf(t_2)$ for all $t\in[0,t_2)$.  Therefore, $\pcdf(0)>0$. By implication $\mint=0$,  which contradicts Step 4.
\end{proof}

\begin{claim}\label{a1} The \ppp{} acts on type-1 news at all times, i.e. $a(t)=1$ for all $t\geq 0$.
\end{claim}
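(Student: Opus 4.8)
The plan is to prove this by contradiction, leveraging the structural facts established in Claims \ref{tsupp}--\ref{psupport} together with the indifference characterization of $a(\cdot)$ in \eqref{PICa} and Lemma \ref{aball}. Suppose $a(t) < 1$ for some $t$. First I would observe that this forces $\bl(t) \le \psafe$, so by \eqref{PICa} it must be that $\bl(t) = \psafe$ whenever $a(t) \in (0,1)$, and $\bl(t) < \psafe$ is the only other option consistent with $a(t) < 1$. I would first rule out the possibility that $t$ lies outside the interval $[\tau_M,\tau_P]$: for $t < \tau_M$ we have $\acdf(t) = 0$ (by Claim \ref{bsupp}/\ref{asupport}), so no fake news has been produced, making type-1 news fully revealing, i.e. $\bl(t) = 1 > \psafe$, hence $a(t) = 1$; for $t > \tau_P$ we have $\pcdf(t) = 1$ (by Claim \ref{tsupp} and Lemma \ref{ptop}) and $\acdf(t) = 1$ as well (since $\bar t_A = \tau_P$), so again type-1 news arising after $\tau_P$ can only be real, giving $\bl(t) = 1$. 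At $t = \tau_P$ itself, $\acdf(\tau_P) = 1$ means there is no atom of fakes at $\tau_P$, so once more type-1 news at $\tau_P$ is real and $\bl(\tau_P) = 1$. Thus if $a(t) < 1$ it must be that $t \in [\tau_M, \tau_P)$.

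Next I would use Lemma \ref{aball}: since $a(t) < 1$ and $\pcdf(t) < 1$ (because $t < \tau_P$ and $\tau_P$ is the top of $\pcdf$'s support, with mass remaining to be placed on $(t,\tau_P]$), parts (i) and (ii) of that lemma give a neighborhood of $t$ on which $a < 1$ throughout — more precisely, $a(t') < 1$ on $[t, t+\epsilon')$ and on $(t-\epsilon'', t]$. On such an interval $a < 1$ means the \aaa{} is indifferent between faking there and elsewhere, so $\apdf > 0$ on that interval; combined with $a < 1$ forcing $\bl = \psafe$ there, I would then run the same computation as in Step 5 of the proof of Claim \ref{tsupp}: using Observation 1 to telescope the $w_0^P + w_1^P$ terms against $W_\phi^P$, one gets $u_P(t') - u_P(t'') \le -\{\exp(-\rho t'') - \exp(-\rho t')\} W_\phi^P(t')\,\psafe < 0$ for $t'' < t'$ in this interval. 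This says $u_P(\cdot)$ is strictly decreasing on the interval, which via \eqref{PICt} forces $\ppdf = 0$ there. I would then invoke the "gap-filling" lemmas (Lemma \ref{gapexp}, in the manner used in Claim \ref{psupport}): a gap in the support of $\pcdf$ strictly inside $[\tau_M,\tau_P)$ propagates all the way down to $0$, forcing $\pcdf(0) > 0$, hence $\tau_M = 0$, contradicting Claim \ref{bsupp}, Step 4.

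I expect the main obstacle to be handling the boundary case $t = \tau_M$ carefully and making sure the neighborhood produced by Lemma \ref{aball} can be taken to lie inside $(\tau_M, \tau_P)$ where the gap-propagation argument applies cleanly — at the left endpoint $\tau_M$ one only gets a right-neighborhood from Lemma \ref{aball}(i), but that right-neighborhood $[\tau_M, \tau_M + \epsilon')$ still has its interior strictly inside $(\tau_M,\tau_P)$, which suffices to invoke Lemma \ref{gapexp} and derive $\pcdf(0)>0$. A secondary subtlety is confirming $\pcdf(t) < 1$ for the relevant $t$, which follows because the support of $\pcdf$ is exactly $[\tau_M,\tau_P]$ with $\tau_P$ the only atom (Claim \ref{psupport} and Lemma \ref{ptop}), so for $t < \tau_P$ there is strictly positive mass on $(t,\tau_P]$. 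Once these boundary issues are pinned down, the contradiction with $\tau_M > 0$ closes the argument and establishes $a(t) = 1$ for all $t \ge 0$.
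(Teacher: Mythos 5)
Your proposal follows essentially the same route as the paper's proof: the substantive case $t\in[\tau_M,\tau_P)$ is handled identically --- Lemma \ref{aball} spreads $a<1$ to an interval, on which $a<1$ forces the conditional payoff of a type-1 arrival down to $\psafe$, the telescoping computation from Step 5 of Claim \ref{tsupp} then shows $u_P$ is strictly decreasing there, so $\ppdf=0$ on the interval, contradicting the support structure of $\pcdf$ (the paper cites Claim \ref{psupport} directly where you re-run Lemma \ref{gapexp} down to $0$; both close the argument). Two caveats. First, for $t<\tau_M$ and $t>\tau_P$ you infer $\bl(t)=1$ directly from $\acdf$ being constant; the paper instead routes this case through Lemma \ref{aball} as well, because the belief in (\ref{belief}) depends on the pointwise density $\apdf(t)$, which is not pinned down by the CDF on a null set --- one must spread $\apdf(t)>0$ to an interval before it contradicts Claim \ref{asupport}. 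Your shortcut is right in spirit but skips the step the paper's detour is designed to handle. Second, your inference that ``$\acdf(\tau_P)=1$ means there is no atom of fakes at $\tau_P$'' is a non sequitur: a CDF can reach $1$ by jumping at $\tau_P$, and the absence of such an atom is exactly Claim \ref{nomassa}, which is proved \emph{after} this claim and uses it. That said, the paper's own coverage of the single point $t=\tau_P$ is equally delicate (Lemma \ref{aball} requires $\pcdf(t)<1$, which fails at $\tau_P$), so this boundary point is a shared loose end rather than a defect peculiar to your argument; everywhere else your plan is sound.
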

\begin{proof}[Proof of Claim \ref{a1}]

First, consider $t<\tau_M$ and suppose that $a(t)<1$ for such $t$. From Lemma \ref{aball}, an interval $(t_1,t_2)$ exists such that $t_1\leq t<t_2$ and $a(t')<1$ for all $t'\in(t_1,t_2)$.  Furthermore,  $a(t')<1$ implies $\bl(t')\leq\psafe$ (consult \ref{PICa}). In turn, $\bl(t')\leq\psafe$ implies $\apdf(t')>0$ (consult \ref{belief}). Hence, for $t'\in(t_1,t_2)$, we have $\apdf(t')>0$. Recall $t_1\leq t<\tau_M$. Therefore, $\acdf(\cdot)$ is strictly increasing on interval $(t_1,\min\{\tau_M,t_2\})$, which contradicts Claim \ref{asupport}. The argument for $t>\tau_P$ is similar.

Next, consider $t\in[\tau_M,\tau_P]$, and suppose $a(t)<1$. From  Lemma \ref{aball}, an interval $(t_1,t_2)\subset[\tau_M,\tau_P]$ exists such that $a(t)<1$ for $t\in(t_1,t_2)$. Furthermore, $a(t)<1\Rightarrow \bl(t)=\psafe$ for $t\in(t_1,t_2)$ (see \ref{PICa}). It follows that for $t_1<t''<t'<t_2$ we have 

\begin{align*}
	u_P(t')-u_P(t'')=\int_{t''}^{t'}\exp(-\rho s)\{w_0^P(s)\psafe+w_1^P(s)\psafe\} ds+\\
	\exp(-\rho t')W^P_\phi(t')\psafe-\exp(-\rho (t'')W^P_\phi(t'')\psafe\leq \\
\exp(-\rho t') \int_{t''}^{t'}\{w_0^P(s)\psafe+w_1^P(s)\psafe\}ds+
	\exp(-\rho t')W^P_\phi(t')\psafe-\exp(-\rho (t'')W^P_\phi(t'')\psafe=\\
	\exp(-\rho t') (W_\phi^P(t'')-W_\phi^P(t'))\psafe+
	\exp(-\rho t')W^P_\phi(t')\psafe-\exp(-\rho t'')W^P_\phi(t'')\psafe=\\
-(\exp(-\rho t'')-\exp(-\rho t'))W_\phi^P(t'')\psafe<0,
\end{align*}
where the last inequality uses $t''<t'$ and $W_\phi^P(\cdot)>0$. It immediately follows for all such $t'$, we must have $\ppdf(t')=0$, contradicting Claim \ref{psupport}.
\end{proof}

%

\begin{claim}\label{nomassa} The \aaa{}'s strategy has no mass points.
\end{claim}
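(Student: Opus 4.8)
The plan is a short proof by contradiction that relies entirely on Claim \ref{a1}. Suppose $\acdf(\cdot)$ has a mass point at some time $t^*$; since every mass point lies in the support, Claim \ref{asupport} gives $t^*\in[\mint,\tau_P]$. The crux is to compute the \ppp{}'s posterior $\bl(t^*)$ conditional on a type-1 arrival at exactly $t^*$. On one hand, the real-news arrival time is drawn from the continuous distribution $\rcdf(\cdot)$, so a \emph{real} type-1 arrival at the single instant $t^*$ has probability zero. On the other hand, the \aaa{} fakes at $t^*$ with positive probability, and because $t^*\le\tau_P$ while the support of $\pcdf(\cdot)$ is $[\mint,\tau_P]$ (Claim \ref{psupport}), the \ppp{} has not already stopped the search with probability one before $t^*$; hence a \emph{fake} type-1 arrival at $t^*$ has strictly positive probability, and conditioning on ``type-1 news at $t^*$'' is legitimate. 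Conditional on that event the arrival is therefore almost surely a fabrication, and since the \aaa{} has no private information about $\omega$, it conveys nothing: $\bl(t^*)=\mu$. (Equivalently, feeding the Dirac component of $\apdf(\cdot)$ into (\ref{belief}), the atomic term dominates both numerator and denominator and the ratio is $\mu$.)

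Given $\bl(t^*)=\mu$ and the standing assumption $\asafe<\mu<\psafe$, we have $\bl(t^*)<\psafe$. By the \ppp{}'s optimality condition for her response to type-1 news --- equilibrium condition (ii), that is, (\ref{PICa}) --- this forces $a(t^*)=0$. But Claim \ref{a1} asserts $a(t)=1$ for every $t\ge 0$, a contradiction. Hence $\acdf(\cdot)$ has no mass points.

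The only step requiring care is the Bayesian computation at the atom: one must confirm that ``type-1 news at $t^*$'' is a positive-probability event --- which is what $t^*\le\tau_P$ together with Claim \ref{psupport} delivers, since then the \ppp{} has not terminated with certainty by $t^*$ --- and one should carry out the computation as a genuine conditional probability rather than a formal manipulation of $\delta(0)$ inside (\ref{belief}). Beyond that, the claim needs no new estimates or lemmas; it follows immediately from Claims \ref{asupport}, \ref{psupport}, and \ref{a1}, and I expect it to be the quickest of the claims underlying Proposition \ref{strucval}.
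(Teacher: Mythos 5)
Your argument is correct and is essentially identical to the paper's proof: at an atom of $\acdf(\cdot)$ the real-news distribution contributes probability zero, so $\bl(t^*)=\mu<\psafe$, forcing $a(t^*)=0$ by (\ref{PICa}) and contradicting Claim \ref{a1}. The extra care you take about the conditioning event being of positive probability is fine but not needed, since the paper defines $\bl(\cdot)$ via (\ref{belief}) with a strictly positive denominator at all times.
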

\begin{proof}[Proof of Claim \ref{nomassa}]

Suppose the \aaa{}'s strategy has a mass point at $t\in[\mint,\tau_P]$. Because distribution of the arrival time of real news has no mass points, an arrival of type-1 news at $t$ has no effect on the \ppp{}'s about the payoff of the risky arm, i.e. $\bl(t)=\mu$. It follows that $a(t)=0$, which contradicts Claim \ref{a1}.
\end{proof}

\begin{claim}\label{nomassp} The \ppp{}'s stopping strategy has no mass points for $t\in[\mint,\tau_P)$.
\end{claim}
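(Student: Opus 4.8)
The plan is to argue by contradiction, supposing the \ppp{}'s stopping strategy places an atom of mass $f_m>0$ at some time $t_m\in[\mint,\tau_P)$. The key structural facts already in hand are: the support of $\acdf(\cdot)$ is exactly $[\mint,\tau_P]$ (Claim \ref{asupport}), the support of $\pcdf(\cdot)$ is exactly $[\mint,\tau_P]$ (Claim \ref{psupport}), $a(\cdot)\equiv 1$ (Claim \ref{a1}), the \aaa{}'s strategy has no mass points (Claim \ref{nomassa}), and the \aaa{}'s optimality condition (\ref{AIC}) pins his payoff down to a constant on $[\mint,\tau_P]$. The idea is that an atom in $\pcdf$ at $t_m$ creates a discrete jump in the \aaa{}'s continuation prospects: by faking an instant before $t_m$ the \aaa{} captures the entire probability mass $f_m$ of the \ppp{} stopping and, since $a(t_m^-)=1$, converts those histories from payoff $\asafe$ (the \ppp{} stops on safe) to payoff $\mu$ (the \aaa{} preempts with a fake type-1 arrival). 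This makes $u_A(\cdot)$ strictly larger just to the left of $t_m$ than at $t_m$ itself, so the \aaa{} would never put mass at or just after $t_m$ — contradicting $t_m$ being interior to the support of $\acdf(\cdot)$.

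Concretely, I would proceed as follows. First, fix $t_m\in[\mint,\tau_P)$ with $f_m>0$, and write $u_A(t)$ from (\ref{uA}) near $t_m$. Using $a\equiv 1$ and the decomposition of $\pcdf$ into an absolutely continuous part plus the atom at $t_m$, compute $\lim_{t\uparrow t_m}u_A(t) - u_A(t_m)$: the difference is driven by the term $w_S^A(s)\asafe$ in the integrand versus the boundary term $\exp(-\rho t)W_\phi^A(t)\mu$ (recall $a(t)=1$ so the boundary payoff is $\mu$). The atom $f_m$ contributes, to the integral version, a jump of $\exp(-\rho t_m)(1-\rcdf(t_m))f_m\,\asafe$ that appears in $u_A(t)$ for $t>t_m$ but is absent (captured instead as a preemption payoff $\exp(-\rho t_m)(1-\rcdf(t_m))f_m\,\mu$) for $t$ just below $t_m$. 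Since $\mu>\asafe$ and $\exp(-\rho t_m)(1-\rcdf(t_m))>0$, this yields a strictly positive downward jump: there is $\epsilon>0$ and $\eta>0$ with $u_A(t)\ge u_A(t_m)+\eta$ for all $t\in(t_m-\epsilon,t_m)$, after noting (by continuity of the absolutely-continuous components, exactly as in the proof of Claim \ref{tsupp}, Step 6) that the remaining terms change by less than $\eta$ over a short enough interval.

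From this downward jump, (\ref{AIC}) forces $\apdf(t)=0$ on $[t_m,t_m+\epsilon')$ for some $\epsilon'>0$ (the \aaa{} strictly prefers faking slightly before $t_m$ to faking at or just after it). But $t_m\in[\mint,\tau_P)$, so $[t_m,t_m+\epsilon')$ meets the interior of $[\mint,\tau_P]$, and $\apdf\equiv 0$ there contradicts Claim \ref{asupport}, which says $\acdf(\cdot)$ is strictly increasing throughout the interior of $[\mint,\tau_P]$. I would also handle the boundary subtlety at $t_m=\mint$ separately: there the atom would still create the jump, and the conclusion $\apdf=0$ just above $\mint$ still contradicts Claim \ref{asupport}. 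The main obstacle — and the step needing the most care — is the bookkeeping in the first computation: making sure the Dirac-$\delta$ contribution of the \ppp{}'s atom is correctly attributed (it enters $u_A$ through $w_S^A$ for faking times strictly after $t_m$, but through the $W_\phi^A(t)$ boundary term, weighted by $\mu$ not $\asafe$, for faking times strictly before $t_m$), and correctly invoking the tie-breaking convention of footnote 7 so that faking exactly at $t_m$ does not capture the atom. Once that jump is established cleanly, the contradiction with Claims \ref{asupport} is immediate.
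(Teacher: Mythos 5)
Your proposal is correct and rests on essentially the same mechanism as the paper's proof: an atom $f_m>0$ of $\pcdf$ at $t_m\in[\mint,\tau_P)$ would create a discrete downward jump of size $\exp(-\rho t_m)(1-\rcdf(t_m))f_m(\mu-\asafe)$ in the agent's payoff at $t_m$, which is incompatible with the agent's indifference over the full support $[\mint,\tau_P)$ of $\acdf(\cdot)$ forced by (\ref{AIC}) and Claim \ref{asupport}. The paper packages the identical idea through the upper bound of Lemma \ref{abounds}(i) and a one-sided limit establishing continuity of $\pcdf$ directly, rather than isolating the atom and its Dirac bookkeeping explicitly, but the underlying argument is the same.
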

 \begin{proof}[Proof of Claim \ref{nomassp}] 
Consider $\mint\leq t<t'<\tau_P$. From Lemma \ref{abounds}, we have 
\begin{align*}
&	u_A(t')-u_A(t)\leq\\
&	(1-\pcdf(t))\{v(t')-v(t)\}+
\exp(-\rho t')W_\phi^A(t')(\mu-\asafe) a(t')-\exp(-\rho t)W_\phi^A(t)(\mu-\asafe) a(t).
\end{align*}
Because $\mint\leq t<t'<\tau_P$, using Claim \ref{asupport} and condition (\ref{AIC}) we have $u_A(t)=u_A(t')$. From Claim $\ref{a1}$, we have $a(t)=a(t')=1$. Substituting, we have
\begin{align*}
&	(1-\pcdf(t))\{v(t')-v(t)\}+
\exp(-\rho t')W_\phi^A(t')(\mu-\asafe) -\exp(-\rho t)W_\phi^A(t)(\mu-\asafe)\geq 0\Rightarrow\\
&W_\phi^A(t')\geq \exp(\rho (t'-t))W_\phi^A(t)-
	\exp(\rho t')(1-\pcdf(t))\frac{v(t')-v(t)}{\mu-\asafe}\\
&1-\pcdf(t')\geq (1-\pcdf(t))\{\underbrace{\exp(\rho (t'-t))\frac{1-\rcdf(t)}{1-\rcdf(t')}}_{A}-
	\underbrace{\exp(\rho t')\frac{v(t')-v(t)}{\mu-\asafe}}_{B}\}.
\end{align*}
Consider the limit of both sides as $t'\rightarrow t^+$. Because $\pcdf(\cdot)$ is right-continuous, the limit on the left hand side exists. Next, note that as $t'\rightarrow t^+$ the limit of term $A$ is 1, and the limit of term $B$ is 0, which follows from the continuity of function $v(\cdot)$. Therefore, the limit of the right hand side exists, and is equal to $1-\pcdf(t)$. Combining these observations, we have $\lim_{t'\rightarrow t^+}(1-\pcdf(t'))\geq 1-\pcdf(t)$. Simplifying, we have $\lim_{t'\rightarrow t^+}\pcdf(t')\leq \pcdf(t)$. Because  $\pcdf(\cdot)$ is increasing, we also have $\lim_{t'\rightarrow t^+}\pcdf(t')\geq \pcdf(t)$. Therefore $\lim_{t'\rightarrow t^+}\pcdf(t')=\pcdf(t)$. Thus, $\pcdf(\cdot)$ is continuous at $t$. 
 \end{proof}

 \subsection{Proof of Proposition \ref{charval} and Corollary 1}
 
\textbf{Corollary \ref{corcs}} is proved in the Supplemental Appendix.

 \begin{proof}[Proof of Proposition \ref{charval}]
 I first show that the distributions presented in the proposition are the only ones that could constitute an equilibrium with beneficial search. That is, if an equilibrium with beneficial search exists, then it must take the form presented in the statement of the proposition. I then show that these distributions, together with $a(\cdot)=1$, are an equilibrium.
 
\textit{Step 1.} If an equilibrium with beneficial search exists, then the distributions must be the ones from the statement Proposition \ref{charval}.

\textit{\PPP{} Strategy.} From Proposition \ref{strucval}, the support of the \ppp{}'s mixed strategy is $[\tau_M,\tau_P]$ for some $\tau_M>0$. It follows that the set of $t$ such that $\ppdf(t)>0$ is dense in $(\tau_M,\tau_P)$. From Lemma \ref{pcont}, $u_P(\cdot)$ is continuous in equilibrium. From (\ref{PICt}) $u_P(t)=u_P^*$ for all $t$ such that $\ppdf(t)>0$. It follows that $u_P(t)=u_P^*$ for all $t\in[\tau_M,\tau_P]$, where $u_P^*$ is the \ppp{}'s equilibrium payoff. Furthermore, according to Proposition \ref{strucval}, $a(\cdot)=1$.
Therefore, for all $t\in[\tau_M,\tau_P]$ we have 
\begin{align*}
u_P(t)-u_P(\tau_M)&=\int^t_{\tau_M}\exp(-\rho s)\{w_0^P(s)\psafe+w_1^P(s)\bl(s)\}+\exp(-\rho t)W_\phi^P(t)\psafe-\exp(-\rho \tau_M)W_\phi^P(\tau_M)\psafe\\
&=\int^t_{\tau_M}\exp(-\rho s)\{((1-\mu)\psafe +\mu)\rpdf(s)(1-\sigma \acdf(s))+\mu\sigma \apdf(s)(1-\rcdf(s))\}ds\\
&+\exp(-\rho t)(1-\rcdf(t))(1-\sigma\acdf(t))\psafe-\exp(-\rho \tau_M)(1-\rcdf(\tau_M))\psafe=0.
\end{align*}
From Proposition \ref{strucval} $\acdf(\cdot)$ has no mass points. It follows that $\acdf(\tau_M)=0$ and  $\acdf(\cdot)$ is absolutely continuous. By implication, for $t\in(\tau_M,\tau_P)$ we have 
\begin{align*}
\exp(-\rho t)(1-\rcdf(t))(1-\sigma\acdf(t))\psafe-\exp(-\rho \tau_M)(1-\rcdf(\tau_M)\psafe=\\
-\int_{\tau_M}^t\exp(-\rho s)\{\rpdf(s)(1-\sigma \acdf(s))+\sigma \apdf(s)(1-\rcdf(s))+\rho(1-\rcdf(s))(1-\sigma \acdf(s)) \}\psafe ds.
\end{align*}
To see this, note that the integrand is equal to the derivative of $\exp(-\rho t)(1-\rcdf(t))(1-\sigma\acdf(t))\psafe$ wherever the derivative exists, and the equality above follows from absolute continuity.
Substituting the previous equality and simplifying, 
\begin{align*}
&\int^t_{\tau_M}\exp(-\rho s)\{\mu(1-\psafe)\rpdf(s)(1-\sigma \acdf(s))+(\mu-\theta)\sigma \apdf(s)(1-\rcdf(s))-\rho\theta(1-\sigma\acdf(s))(1-\rcdf(s))\}ds=\\
&\int^t_{\tau_M}\exp(-\rho s)(1-\sigma\acdf(s))(1-\rcdf(s))(\psafe-\mu)\{\frac{\mu(1-\psafe)\rhaz(s)-\rho\theta}{\psafe-\mu}-\ahaz(s)\}ds=0,
\end{align*}
where $\ahaz{}(s)\equiv\sigma\apdf(s)/(1-\sigma\acdf(s))$. Because the first four terms in the integrand are strictly positive, we have 
\begin{align*}
&\int^t_{\tau_M}\{\frac{\mu(1-\psafe)\rhaz(s)-\rho\theta}{\psafe-\mu}-\ahaz(s)\}ds=0.
\end{align*}
Note that $\ln(1-\sigma\acdf(t))$ is absolutely continuous on $[\tau_M,\tau_P]$ and its derivative is $-\ahaz(t)$ (wherever $\acdf(\cdot)$ is differentiable). By implication
\begin{align*}
\ln(1-\sigma\acdf(t))-\ln(1-\sigma\acdf(\tau_M))=-\int^t_{\tau_M}\frac{\mu(1-\psafe)\rhaz(s)-\rho\theta}{\psafe-\mu}ds.
\end{align*}
Recall that $\acdf(\cdot)$ has no mass points, and the bottom of the support is $\tau_M$. It follows that $\acdf(\tau_M)=0$, and hence
\begin{align*}
	\acdf(t)=\frac{1}{\sigma}(1-\exp\{-\int^t_{\tau_M}\frac{\mu(1-\psafe)\rhaz(s)-\rho\theta}{\psafe-\mu}ds\}).
\end{align*}

\textit{Determining $\tau_M$}. From Proposition \ref{strucval}, in an equilibrium with valuable search (i) the \aaa{}'s strategy has no atoms and (ii), the the top of the support is $\tau_P$. It follows that
\begin{align*}
\frac{1}{\sigma}(1-&\exp\{-\int^{\tau_P}_{\tau_M}\frac{\mu(1-\psafe)\rhaz(s)-\rho\theta}{\psafe-\mu}ds\})=1\Leftrightarrow
1-&\exp\{-\int^{\tau_P}_{\tau_M}\frac{\mu(1-\psafe)\rhaz(s)-\rho\theta}{\psafe-\mu}ds\}=\sigma.
\end{align*}
Next, note that the LHS is a strictly decreasing function of $\mint$ and its value is 0 at $\mint=\tau_P$. Therefore, the previous equation is satisfied by at most one $\mint$. Furthermore, from Proposition \ref{strucval}, in a beneficial search equilibrium $\mint>0$. Such a solution exists if and only if
\begin{equation*}
\exp\{-\int^{\tau_P}_{0}\frac{\mu(1-\psafe)\rhaz(s)-\rho\theta}{\mu-\psafe}ds\}<1-\sigma\Leftrightarrow\sigma<\bar{\sigma}.
\end{equation*}

\textit{Agent Strategy.} As a preliminary step, I show that $u_A(\cdot)$ is continuous on $[\mint,\tau_P)$. Consider $t''\leq t'<\tau_P$. From Proposition \ref{strucval}, we have that $a(\cdot)=1$. Therefore, for such $t'',t'$ we have
\begin{align*}
	u_A(t')-u_A(t'')=\int_{t''}^{t'}\exp(-\rho s)\{(w_0^A(s)+w_S^A(s))\asafe+w_1^A(s)\}ds\\+\exp(-\rho t')W^A_\phi (t')\mu-\exp(-\rho t)W^A_\phi (t)\mu.
\end{align*}
Furthermore, according to Proposition \ref{strucval}, $\pcdf(\cdot)$ has no atoms in $[\tau_M,\tau_P)$. Therefore, $w_0^A(\cdot)$,$w_1^A(\cdot)$, $w_S^A(\cdot)$, $W_\phi^A(\cdot)$ are all continuous. Therefore $u_A(\cdot)$ is continuous on $[\mint,\tau_P)$.

 From Proposition, the support of the \aaa{}'s mixed strategy is $[\tau_M,\tau_P]$; that is, the set of $t$ such that $\apdf(t)>0$ is dense in $(\tau_M,\tau_P)$. As shown above $u_A(\cdot)$ is continuous on $[\tau_M,\tau_P)$. From (\ref{PICt}) $u_A(t)=u_A^*$ for all $t$ such that $\apdf(t)>0$. It follows that $u_A(t)=u_A^*$ for all $t\in[\tau_M,\tau_P)$, where $u_A^*$ is the \aaa{}'s equilibrium payoff.
Furthermore, Proposition \ref{strucval}, $a(t)=1$ for all $t$. By implication, for $\tau_M<t<\tau_P$ we have 
\begin{align*}
	u_A(t)&-u_A(\tau_M)=\\
&	\int_{t}^{\tau_M}\exp(-\rho s)(w_0^A(s)+w_S^A(s))\asafe+w_1^A(s)ds+\exp(-\rho t)W^A_\phi (t)\mu-\exp(-\rho \tau_M)W_\phi^A(\tau_M)\mu\\
&	\int_{t}^{\tau_M}\exp(-\rho s)\{(\mu+(1-\mu)\asafe)\rpdf(s)(1-\pcdf(s))+\ppdf(s)(1-\rcdf(s))\asafe\}ds+\\
&\exp(-\rho t)(1-\rcdf(t))(1-\pcdf(t))\mu-\exp(-\rho \tau_M)(1-\rcdf(\tau_M))\mu=0.
\end{align*}
Because $\pcdf(\cdot)$ has no mass points in $[\tau_M,\tau_P)$, (1) $\pcdf(\tau_M)=0$, and (2) $\pcdf(\cdot)$ is absolutely continuous on this interval. By implication, for $t\in[\tau_M,\tau_P)$ we have 
\begin{align*}
	&\exp(-\rho t)(1-\rcdf(t))(1-\pcdf(t))\mu-\exp(-\rho \tau_M)(1-\rcdf(\tau_M))\mu=\\
	&-\int^{\tau_M}_t\exp(-\rho s)\{\rpdf(s)(1-\pcdf(s))+\ppdf(s)(1-\rcdf(s))+\rho(1-\rcdf(s))(1-\pcdf(s))\}\mu ds.
\end{align*}
To see this, note that the integrand is equal to the derivative of $\exp(-\rho t)(1-\rcdf(t))(1-\pcdf(t))\mu$ wherever the derivative exists, and the equality above follows from absolute continuity.

Combining and simplifying, we have 
\begin{align*}
&	\int^{t}_{\tau_M}\exp(-\rho s)\{\asafe(1-\mu)\rpdf(s)(1-\pcdf(s))-\ppdf(s)(1-\rcdf(s))(\mu-\asafe)-\rho \mu (1-\rcdf(s))(1-\pcdf(s))\}ds=\\
&	\int^{t}_{\tau_M}\exp(-\rho s)(1-\rcdf(s))(1-\pcdf(s))(\mu-\beta)\{\frac{\asafe(1-\mu)\rhaz(s)-\rho\mu}{\mu-\beta}-\phaz(s)\}ds=0.
\end{align*}
Where $\phaz{}(s)\equiv\apdf(s)/(1-\acdf(s))$. Because the first four terms in the integrand are strictly positive, we have 
\begin{align*}
&\int^t_{\tau_M}\frac{\asafe(1-\mu)\rhaz(s)-\rho\mu}{\mu-\beta}-\phaz(s)ds=0.
\end{align*}
Note that $\ln(1-\pcdf(t))$ is absolutely continuous on $[\tau_M,\tau_P)$ and its derivative is $-\phaz(t)$ (wherever $\acdf(\cdot)$ is differentiable). By implication
\begin{align*}
\ln(1-\pcdf(t))-\ln(1-\pcdf(\tau_M))=-\int^t_{\tau_M}\frac{\asafe(1-\mu)\rhaz(s)-\rho\mu}{\mu-\beta}ds.
\end{align*}
Recall that $\pcdf(\cdot)$ has no mass points on $[\mint,\tau_P)$, and the bottom of the support is $\tau_M$. It follows that $\pcdf(\tau_M)=0$, and hence
\begin{align*}
	\pcdf(t)=1-\exp\{-\int^t_{\tau_M}\frac{\asafe(1-\mu)\rhaz(s)-\rho\mu}{\mu-\asafe}ds\},
\end{align*}
for $t\in[\tau_M,\tau_P)$. Proposition \ref{strucval} requires that the support of the \ppp{}'s mixed strategy is $[\tau_M,\tau_P]$, and a mass point on $\tau_P$ is allowed. In particular, because $\pcdf(\cdot)$ has a mass point at $\tau_P$, \aaa{}'s payoff $u_A(\cdot)$ has a downward jump at $\tau_P$. It it is straightforward to verify that $\apdf(\tau_P)=0$, and therefore the \aaa{}'s strategy does not violate (\ref{AIC}).

\textit{Sufficiency.} The preceding calculations establish indifference for the \ppp{} in $[\mint,\tau_P]$ and for the \aaa{} in $[\mint, \tau_P)$. To establish sufficiency, it remains to show (1) the proposed distributions are valid CDFs, (2) for $t$ at which $f_i(t)=0$, each player's payoff does not exceed $u_i^*$, and (3) under the proposed distribution for $\acdf(\cdot)$, we have $a(\cdot)=1$ in equilibrium. (1) follows from elementary calculations. (2) is immediate. (3) is proved as part of Corollary \ref{corbel}, below.
\end{proof}

\begin{proof}[Proof of Corollary \ref{corbel}] Any type-1 arrival at time $t\notin[\mint,\tau_P]$ reveals $\omega=1$, because the likelihood of a fake is 0. Consider times $t\in[\mint,\tau_P]$. At such a time, the \ppp{}'s posterior belief is given by (\ref{belief}),
\begin{align*}
	\bl(t)=\frac{\mu\rpdf(t)(1-\apri\acdf(t)+\mu\apri\apdf(t)(1-\rcdf(t))}{\mu\rpdf(t)(1-\apri\acdf(t))+\apri\apdf(t)(1-\rcdf(t))}=\frac{\mu[\rhaz(t)+\ahaz(t)]}{\mu\rhaz(t)+\ahaz(t)},
\end{align*}
where $\ahaz(t)\equiv \sigma\apdf(t)/(1-\sigma\acdf(t))$. Using Proposition \ref{charval}, it is straightforward to show that in an equilibrium with beneficial search,
\begin{align*}
	\ahaz(t)=\frac{\mu(1-\psafe)\rhaz(s)-\rho\psafe}{\psafe-\mu}.
\end{align*}
Routine simplification yields the expression in the proposition. Direct substitution establishes continuity at $\tau_P$. That the belief is increasing follows from the assumption that $\rhaz(\cdot)$ is decreasing. Finally, it must be demonstrated that an $\epsilon>0$ exists such that $\mu_1(t)\geq \psafe+\epsilon$ for $t\in[\mint,\tau_P]$.  Note that $\rhaz(t)\geq \rho\theta/(\mu(1-\theta))$ for $t\in[\mint,\tau_P]$, and hence, 
\begin{align*}
\rhaz(t)\mu(1-\mu)-\rho\psafe\geq \rho\psafe(\frac{1-\mu}{1-\psafe}-1)>0.
\end{align*}
Let
\begin{align*}
\epsilon\equiv (\psafe-\mu)\frac{\rho\psafe}{\rhaz(\tau_M)\mu(1-\mu)-\rho\psafe}.
\end{align*}
As shown above, $\epsilon>0$. Furthermore, for $t\in[\tau_M,\tau_P]$, we have $\mu_1(t)>\mu_1(\mint)=\psafe+\epsilon$.
\end{proof}

 \subsection{Proof of Proposition \ref{charnot}}

 As argued in the text, an equilibrium in which the \ppp{} does not benefit from search exists if and only if there exists an \aaa{} faking strategy $\acdf(\cdot)$, such that $u_P(t)\leq\psafe$ for all $t\geq 0$, with $a(\cdot)$ chosen optimally (as in (\ref{PICa})). That is such an equilibrium exists if and only if there exists $\acdf(\cdot)$, such that
 \begin{align}\label{nocon}
u_P(t)=\int_0^t &\exp(-\rho s)\{w^P_0(s)\psafe+w^P_1(s)((1-a(s))\theta+a(s)\bl(s))\}ds
\\\nonumber	&+\exp(-\rho t)W^P_\phi(t))\psafe\leq \psafe, 	
 \end{align}
  for all $t\geq 0$, where $a(t)=\mathcal{I}(\mu_1(t)\geq \psafe)$. 
  
  First, note that attention can be restricted to $t\leq\tau_P$.
  
 \textit{Step 1. An equilibrium without beneficial search exists if and only if there exists a faking strategy $\acdf(\cdot)$ such that $u_P(t)\leq\psafe$ for $t\in[0,\tau_P]$, with $a(\cdot)$ chosen optimally.} The necessary and sufficient condition (\ref{nocon}) must hold for all $t\geq 0$, and therefore must hold for $t\in[0,\tau_P]$. The proposed condition is therefore necessary. To see that it is sufficient, Recall Lemma \ref{pbounds} (ii), which establishes that if (\ref{PICa}), then $u(t)<u(\tau_P)$ for $t>\tau_P$. Therefore, if $u(\tau_P)\leq\psafe$, then for $t>\tau_P$ we have $u(t)<u(\tau_P)\leq\psafe$.

 \textit{Part (i). Show that if an equilibrium without beneficial search exists, then $\sigma\geq \bar{\sigma}$.}
  
Consider a constrained version of the \ppp{}'s objective function, where $a(\cdot)$ is chosen suboptimally. In particular, the \ppp{} is constrained to $a(t)=1$ unless she is certain that the news is fake, as would occur at an atom of the faking strategy. That is, the \ppp{} is constrained to select $a(t)=\mathcal{I}(\mu_1(t)>0)$.  
Let $\underline{u}_P(t)$ be the \ppp{}'s payoff in this constrained benchmark. Obviously, $\underline{u}_P(t)\leq u_P(t)$. 
 
 \begin{claim}\label{pseudo} If an equilibrium without beneficial search exists, then $\underline{u}_P(t)\leq\psafe$ for all $t\in[0,\tau_P]$.
 \end{claim}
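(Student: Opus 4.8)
The plan is to read the claim off two facts already in hand: the pointwise inequality $\underline{u}_P(t)\le u_P(t)$ noted just above, and the characterization of an equilibrium without beneficial search. First I would recall why $\underline{u}_P(t)\le u_P(t)$: the constrained rule $a(t)=\mathcal{I}(\mu_1(t)>0)$ is one admissible choice of news response in the maximization that defines $u_P(t)$ — it agrees with the optimal rule (\ref{PICa}) at every $s$ with $\mu_1(s)\ge\psafe$ and is (weakly) worse at every other $s$ — so substituting it into (\ref{uP}) cannot raise the integrand at any point, whence $\underline{u}_P(t)\le u_P(t)$ for the \aaa{}'s equilibrium faking strategy and every stopping time $t$.

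Next I would invoke the standing hypothesis. If an equilibrium without beneficial search exists, then (as argued in the text preceding Step 1, using $u_P(0)=\psafe$ and optimality of the \ppp{}'s stopping rule (\ref{PICt})) the \ppp{}'s equilibrium payoff equals $\psafe$ and $u_P(t)\le\psafe$ for all $t\ge 0$; restricting attention to $t\in[0,\tau_P]$ as in Step 1, and evaluating against the equilibrium faking strategy with $a(\cdot)$ chosen optimally, this gives $u_P(t)\le\psafe$ on $[0,\tau_P]$. Chaining the two inequalities yields $\underline{u}_P(t)\le u_P(t)\le\psafe$ for all $t\in[0,\tau_P]$, which is exactly the claim.

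There is no real obstacle here; the only place that warrants a sentence of care is the first step, namely verifying that the constrained response rule is genuinely dominated by (\ref{PICa}) pointwise, which is immediate once one checks that $\mu_1(s)>0$ wherever type-1 news can be real and that the two rules coincide on the set $\{\mu_1(s)\ge\psafe\}$. The reason the claim is worth isolating is not its difficulty but its use downstream: unlike $u_P(\cdot)$, the constrained payoff $\underline{u}_P(\cdot)$ has a clean closed form — essentially the ``naive search'' payoff adjusted for the atoms of $\acdf(\cdot)$ — so the bound $\underline{u}_P(t)\le\psafe$ is the tractable inequality from which the parametric condition $\sigma\ge\bar{\sigma}$ will be extracted in the remainder of the proof of Proposition \ref{charnot}.
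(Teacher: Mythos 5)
Your proof is correct and follows the paper's own (one-line) argument exactly: the paper likewise chains $\underline{u}_P(t)\leq u_P(t)\leq\psafe$ on $[0,\tau_P]$, using the pointwise dominance of the optimal news response over the constrained rule and the characterization of non-beneficial-search equilibria via $u_P(t)\leq\psafe$. Your added verification that the constrained rule $a(t)=\mathcal{I}(\mu_1(t)>0)$ is pointwise weakly dominated in the integrand of (\ref{uP}) is a sound elaboration of what the paper states as ``obvious.''
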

 \begin{proof}[Proof of Claim \ref{pseudo}] If an equilibrium without beneficial search exists, then for all $t\leq \tau_P$, we have $u_P(t)\leq\theta$, and therefore, $\underline{u}_P(t)\leq u_P(t)\leq\theta$ for all such $t$.
 \end{proof}
 
 Next, I show that attention can be confined to distributions $\acdf(\cdot)$ without atoms.
 \begin{claim}\label{noatoms} If a distribution $\acdf(\cdot)$ exists such that $\underline{u}_P(t)\leq\psafe$ for all $t\in [0,\tau_P]$, then a continuous distribution $\acdf^*(\cdot)$ exists such that the same inequality holds for $t\in[0,\tau_P]$.
 \end{claim}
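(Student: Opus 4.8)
The plan is to obtain a closed form for $\underline{u}_P(\cdot)$ that makes it a monotone functional of the faker's survival curve, and then to smooth atoms away in the favorable direction. Write $\hat{G}(s)\equiv 1-\sigma\acdf(s)$, so $\hat G$ is non-increasing, $\hat G(0)=1$, and $d\hat G=-\sigma\apdf(s)\,ds$ in the distributional sense, with downward jumps at the atoms of $\acdf$. First note that, because $\mu>0$, $\rpdf(\cdot)>0$ a.e.\ on $\mathbb{R}_+$, and $1-\sigma\acdf(\cdot)\geq 1-\sigma>0$, $1-\rcdf(\cdot)>0$, the numerator of $\mu_1(\cdot)$ in \eqref{belief} is strictly positive a.e.\ and at every atom of $\acdf$; hence the constrained rule $a(t)=\mathcal{I}(\mu_1(t)>0)$ forces $a\equiv 1$ except on a Lebesgue-null set carrying no atom of $w_1^P$, so that set does not affect $\underline{u}_P$. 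Substituting $a\equiv 1$ into \eqref{uP}, collecting the real-news terms, and integrating $-\mu\int_0^t e^{-\rho s}(1-\rcdf(s))\,d\hat G(s)$ by parts yields
\[
\underline{u}_P(t)=\mu+(\psafe-\mu)e^{-\rho t}(1-\rcdf(t))\hat G(t)+\int_0^t e^{-\rho s}\big[(1-\mu)\psafe\,\rpdf(s)-\mu\rho(1-\rcdf(s))\big]\hat G(s)\,ds .
\]
Since $\rhaz$ is decreasing and $\rhaz(\tau_P)=\phi_P$ (Lemma \ref{fb}), we have $\rhaz(s)\geq\phi_P$ on $[0,\tau_P]$, and an elementary comparison using $\mu<\psafe$ gives $\phi_P=\tfrac{\rho\psafe}{\mu(1-\psafe)}>\tfrac{\mu\rho}{(1-\mu)\psafe}$; hence the bracket $(1-\mu)\psafe\,\rpdf(s)-\mu\rho(1-\rcdf(s))=(1-\rcdf(s))\big[(1-\mu)\psafe\,\rhaz(s)-\mu\rho\big]$ is strictly positive on $[0,\tau_P]$. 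Therefore, for each $t\in[0,\tau_P]$, $\underline{u}_P(t)$ is nondecreasing in $\hat G(\cdot)$: pointwise domination $\hat G_1\leq\hat G_2$ on $[0,t]$ implies $\underline{u}_P[\hat G_1](t)\leq\underline{u}_P[\hat G_2](t)$.

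Given $\acdf$ with atoms $\{(t_m,f_m)\}_{m\in M}$ and $\underline{u}_P(t)\leq\psafe$ on $[0,\tau_P]$, I then construct $\acdf^*$ as follows. For each $m$ with $t_m>0$, replace the point mass $f_m$ at $t_m$ by the uniform density $f_m/\epsilon_m$ on $[t_m-\epsilon_m,t_m]$, choosing $\epsilon_m\in(0,t_m]$ summable so that $\acdf^*$ is absolutely continuous (hence continuous) and a valid CDF with unchanged total mass. Because mass is only moved to the left, the uniform CDF on $[t_m-\epsilon_m,t_m]$ dominates $\mathcal{I}(\cdot\geq t_m)$, so $\acdf^*\geq\acdf$ pointwise, i.e.\ $\hat G^*\leq\hat G$ pointwise; the monotonicity of Step 1 then gives $\underline{u}^*_P(t)\leq\underline{u}_P(t)\leq\psafe$ for all $t\in[0,\tau_P]$. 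This settles the claim whenever $\acdf$ has no atom at $0$.

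The remaining case --- an atom $f_0$ at $t=0$ --- is the one real obstacle, since such an atom cannot be spread leftward and spreading it over $[0,\epsilon_0]$ puts $\hat G^*$ strictly above $\hat G$ on $(0,\epsilon_0)$, so the monotonicity argument fails there. The plan is a compensating modification. On $(0,\epsilon_0)$ the density $f_0/\epsilon_0$ blows up as $\epsilon_0\to0$, which (from $\underline{u}_P'(t)=e^{-\rho t}(1-\rcdf(t))\{\mu(1-\psafe)(\rhaz(t)-\phi_P)\hat G(t)+(\psafe-\mu)\hat G'(t)\}$, the $\hat G'=-\sigma\apdf$ term dominating) makes $\underline{u}^*_P$ strictly decreasing on $(0,\epsilon_0)$; since $\underline{u}^*_P(0)=\psafe$, this gives $\underline{u}^*_P<\psafe$ on $(0,\epsilon_0]$. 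The only harm is an $O(\epsilon_0)$ upward shift of $\underline{u}^*_P$ on $[\epsilon_0,\tau_P]$ coming from the excess $\hat G^*(s)-\hat G(s)$ (which is $O(1)$) over a window of width $\epsilon_0$; to absorb it, I relocate a vanishingly small mass of order $\sqrt{\epsilon_0}$ from the far tail of $\acdf$ into a narrow window just after $\epsilon_0$. This extra faking strictly lowers $\hat G^*$ on the remainder of $[0,\tau_P]$, decreasing $\underline{u}^*_P$ there by an amount of order $\sqrt{\epsilon_0}\gg\epsilon_0$, which swamps the error, and it also keeps $\underline{u}^*_P$ decreasing across the second window. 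Thus for $\epsilon_0$ small the modified $\acdf^*$ is continuous and satisfies $\underline{u}^*_P\leq\psafe$ on $[0,\tau_P]$. The bookkeeping in this last step --- checking that tail mass is always available (it is, $\acdf$ being a probability measure), that the compensation dominates uniformly on $[\epsilon_0,\tau_P]$, and that no new atoms are created --- is where the work lies; Steps 1--2 are routine once the integration by parts is done.
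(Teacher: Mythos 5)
Your route is genuinely different from the paper's, and its first half is sound: writing $\underline{u}_P(t)=\mu+(\psafe-\mu)e^{-\rho t}(1-\rcdf(t))\hat G(t)+\int_0^t e^{-\rho s}(1-\rcdf(s))[(1-\mu)\psafe\rhaz(s)-\rho\mu]\hat G(s)\,ds$ with both coefficients of $\hat G$ positive on $[0,\tau_P]$ does make $\underline{u}_P$ monotone in the survival curve, and spreading an atom at $t_m>0$ leftward onto $[t_m-\epsilon_m,t_m]$ raises $\acdf$ pointwise, so the inequality is preserved. (One caveat: you read the constrained rule as $a\equiv 1$ at atoms because $\mu_1(t_m)=\mu>0$, whereas the paper's intended benchmark has the principal recognize an atom as a sure fake and select safe there; your version of $\underline{u}_P$ is weakly \emph{smaller}, so the hypothesis still transfers and nothing breaks, but the assertion that the exceptional set ``carries no atom of $w_1^P$'' is not the reason — the atoms of $\acdf$ do carry atoms of $w_1^P$.)

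The genuine gap is the atom at $t=0$, and you have correctly identified it as the one case your monotonicity argument cannot reach: mass at $0$ cannot be pushed left, and pushing it right puts $\hat G^*$ strictly \emph{above} $\hat G$ on $(0,\epsilon_0)$. The compensation scheme you sketch is not a proof. In particular, it requires relocatable mass of order $\sqrt{\epsilon_0}$ sitting strictly to the right of the smoothing window; you assert this is available ``$\acdf$ being a probability measure,'' but nothing rules out $\acdf$ concentrating essentially all of its mass at and arbitrarily near $0$ (showing such distributions are incompatible with $\underline{u}_P\leq\psafe$ would itself be a lemma you have not supplied). The uniform-in-$t$ domination of the $O(\epsilon_0)$ error by the $O(\sqrt{\epsilon_0})$ correction, and the behavior of $\underline{u}_P^*$ across the second window, are likewise left as ``bookkeeping.'' The paper avoids all of this with a decomposition rather than a smoothing: it writes the faking strategy as a compound lottery over the atoms $\{(t_m,f_m)\}$ and the absolutely continuous remainder $\acdf^*$, computes the conditional payoffs, and observes that conditional on the faker choosing atom $t_m$ the principal's payoff is $u_P^{FB}(\min\{t,t_m\})\geq u_P^{FB}(0)=\psafe$ (single-peakedness of $u_P^{FB}$ on $[0,\tau_P]$). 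Hence $\underline{u}_P(t)>\sum_m f_m\psafe+(1-\sum_m f_m)\underline{u}_P^*(t)$, and $\underline{u}_P(t)\leq\psafe$ forces $\underline{u}_P^*(t)<\psafe$: the continuous component alone, renormalized, already works. An atom at $t=0$ is handled identically ($u_P^{FB}(\min\{t,0\})=\psafe$), which is precisely the case your pointwise-domination argument cannot absorb — note that deleting an atom and renormalizing does \emph{not} produce a pointwise-dominating CDF, so the paper's convexity argument and your monotonicity argument are not interchangeable there. To complete your proof you would either need to carry out the compensation construction in full (including the existence of the mass it borrows), or splice in the paper's conditioning argument for the time-$0$ atom.
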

 \begin{proof}[Proof of Claim \ref{noatoms}] Suppose a distribution $\acdf(\cdot)$ exists such that $\underline{u}_P(t)\leq\psafe$ for all $t\in[0,\tau_P]$. Let $m=1,..,M$ where $M\leq \infty$ index the set of atoms of the \aaa{}'s strategy $\acdf(\cdot)$, with $t_m$ representing the time of atom $m$ and $f_m$ representing the associated probability. Recall that $\acdf(\cdot)$ is absolutely continuous except at atoms. We can therefore represent the \aaa{}'s faking strategy as a compound lottery. In the first stage, the \aaa{} selects atom $t_m$ with probability $f_m$, and continues to the second stage with probability $1-\sum_{m=1}^Mf_m$. If the second stage is reached, the \aaa{} selects his faking time by drawing from some absolutely continuous distribution $\acdf^*(\cdot)$. Denoting by $T_A$ the \aaa{}'s faking time, we have 
 \begin{align*}
 	T_A=\begin{cases}
 		t_1\text{ with probability }f_1\\
 		t_2\text{ with probability }f_2\\
 		 		...\\
 		t_M\text{ with probability }f_M\\
 		T_A^*\text{ with probability }1-\sum_{m=1}^Mf_m
 	\end{cases}
 \end{align*}
 where $T_A^*\sim F_A^*(\cdot)$. 
 
 Consider the \ppp{}'s expected payoff in the constrained benchmark, $\underline{u}_P(t)$, for $t\leq\tau_P$. I first consider this payoff conditional on the \aaa{}'s type and the realization of the first stage lottery. Then I apply the Law of Iterated expectations to calculate the payoff. 
 
 Suppose that the \aaa{} is manipulative and the first stage realization is atom $t_m$. By implication, all arrivals at $s\leq t_m$ are real, which implies that all arrivals at $s\leq \min\{t,t_m\}$ are real. Furthermore, for almost all such arrivals, we have $a(s)=1$ ($a(s)=0$ only at atoms, which are countable). Moreover, if time $\min\{t,t_m\}$ is reached without a real arrival, then the \aaa{} fakes at atom $t_m$, the \ppp{} infers that this arrival is fake $\mu_1(t_m)=0$, and selects safe. Thus, conditional on the \faker{} \aaa{} and realized faking time $t_m$, we have
 \begin{align*}
& 	\underline{u}_p(t|\text{\faker},t_m)=\\&\int_0^{\min\{t,t_m\}}\exp(-\rho s)\rpdf(s)(\mu+(1-\mu)\psafe)ds+\exp(-\rho\min\{t,t_m\})(1-\rcdf(\min\{t,t_m\}))\psafe\\
&=u_P^{FB}(\min\{t,t_m\}).
 \end{align*}

Next, suppose that the \aaa{} is \faker{} and the realization of the first stage lottery is $T_A^*$. In this case, the \aaa{}'s faking time is distributed according to absolutely continuous distribution $\acdf^*(\cdot)$, and the \ppp{} selects $a(\cdot)=1$ for almost all $s\leq t$. Thus,  \ppp{}'s conditional expected payoff is the same as waiting until time $t$ in the constrained benchmark, when the \aaa{} plays strategy $\acdf^*(\cdot)$ and is \faker{} with probability 1. In particular, 
\begin{align*}
&	\underline{u}_P(t|\text{\faker},T_A^*)=\\ &\int_0^t\exp(-\rho s)\{((1-\mu)\psafe+\mu)\rpdf(s)(1-\acdf^*(s))+\mu\apdf^*(s)(1-\rcdf(s))\}ds+\exp(-\rho t)(1-\rcdf(t))(1-\acdf^*(t))\psafe.
\end{align*}

Finally, suppose that the \aaa{} is not the \faker{} type. In this case, the conditional payoff of waiting until time $t$ is the same as in the first best benchmark, since all arrivals $s\leq t$ are real and $a(s)=1$ for almost all of them. That is $\underline{u}_P(t|\text{not \faker{}})=u_P^{FB}(t)$.

Applying the Law of Iterated Expectations, we have
\begin{align*}
&	\underline{u}_P(t)=\\&(1-\sigma)\underline{u}_P(t|\text{not \faker{}})+\sigma\sum_{m=1}^M f_m \underline{u}_p(t|\text{\faker},t_m)
	+\sigma(1-\sum_{m=1}^M f_m)\underline{u}_P(t|\text{\faker},T_A^*)\\
	&=(1-\sigma)u_P^{FB}(t)+\sigma\sum_{m=1}^M f_m u^{FB}_P(\min\{t,t_m\})+\sigma(1-\sum_{m=1}^M f_m)\underline{u}_P(t|\text{\faker},T_A^*)\\
		&\sum_{m=1}^M f_m\{(1-\sigma)u_P^{FB}(t)+\sigma u^{FB}_P(\min\{t,t_m\})\}+(1-\sum_{m=1}^M f_m)\{(1-\sigma)u_P^{FB}(t)+\sigma\underline{u}_P(t|\text{\faker},T_A^*)\}.
\end{align*}
Consider the last term. We have,
\begin{align*}
&(1-\sigma)u_P^{FB}(t)+\sigma\underline{u}_P(t|\text{\faker},T_A^*)=\\
&(1-\sigma)\Big(\int_0^t\exp(-\rho s)\{(\mu+(1-\mu)\psafe)\rpdf(s)\}ds+\exp(-\rho t)(1-\rcdf(t))\psafe\Big)+\\
&\sigma\Big(\int_0^t\exp(-\rho s)\{((1-\mu)\psafe+\mu)\rpdf(s)(1-\acdf^*(s))+\mu\apdf^*(s)(1-\rcdf(s))\}ds+\exp(-\rho t)(1-\rcdf(s))(1-\acdf^*(s))\psafe\Big)=\\
&\int_0^t\exp(-\rho s)\{((1-\mu)\psafe+\mu)\rpdf(s)(1-\sigma\acdf^*(s))+\mu\sigma\apdf^*(s)(1-\rcdf(s))\}ds+\exp(-\rho t)(1-\rcdf(t))(1-\sigma\acdf^*(t))\psafe.
\end{align*}
Note that the last line is simply the \ppp{}'s expected payoff in the constrained benchmark if she selects stopping time $t$ and the \aaa{}'s strategy is $\acdf^*(\cdot)$, which is absolutely continuous. Let this payoff be denoted $\underline{u}^*_P(t)$. Substituting into the Law of Iterated Expectations, we have 
\begin{align*}
	\underline{u}_P(t)=
		&\sum_{m=1}^M f_m\{(1-\sigma)u_P^{FB}(t)+\sigma u^{FB}_P(\min\{t,t_m\})\}+(1-\sum_{m=1}^M f_m)\underline{u}^*_P(t).
\end{align*}
Next, note that Lemma \ref{fb} and A1 together imply that, in the \ppp{}'s first-best problem, her optimal search time is $\tau_P>0$, and her payoff is single-peaked. It follows that for $t\in(0,\tau_P)$ we have $u_P^{FB}(t)>u_P^{FB}(0)=\psafe$. Using this observation to bound the first summation, we have
\begin{align*}
\underline{u}_P(t)>\sum_{m=1}^M f_m\psafe +(1-\sum_{m=1}^M f_m)\underline{u}^*_P(t).
\end{align*}
Because $\underline{u}_P(t)\leq\psafe$ for all $t\in[0,\tau_P]$, we have 
\begin{align*}
\psafe>\sum_{m=1}^M f_m\psafe +(1-\sum_{m=1}^M f_m)\underline{u}^*_P(t).
\end{align*}
Note first that this inequality immediately implies $\sum_{m=1}^M f_m<1$. Simplifying, we have
\begin{align*}
(1-\sum_{m=1}^M f_m)\underline{u}^*_P(t)<(1-\sum_{m=1}^M f_m)\psafe\Rightarrow \underline{u}^*_P(t)<\psafe.
\end{align*}
 Recalling that $u_P^*(t)$ is the \ppp{}'s expected payoff in the constrained benchmark when the \aaa{} uses absolutely continuous strategy $\acdf^*(\cdot)$, the claim is proved.\end{proof}
Combining Claims \ref{pseudo} and \ref{noatoms}, we have

\begin{claim}\label{con} If an equilibrium without beneficial search exists, then an absolutely continuous distribution $\acdf(\cdot)$ exists such that $\underline{u}_P(t)\leq \psafe$ for all $t\in[0,\tau_P]$.
\end{claim}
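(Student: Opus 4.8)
The plan is to obtain Claim~\ref{con} as an immediate consequence of the two preceding claims: it is a chaining argument, and the only thing that needs care is a wording point. I would not introduce any new machinery.

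\textbf{Step one.} Invoke Claim~\ref{pseudo}. Assuming an equilibrium without beneficial search exists, its equilibrium faking strategy $\acdf(\cdot)$, together with the constrained action rule $a(t)=\mathcal{I}(\mu_1(t)>0)$ that defines $\underline{u}_P$, satisfies $\underline{u}_P(t)\le\psafe$ for all $t\in[0,\tau_P]$. In particular, \emph{some} distribution $\acdf(\cdot)$ has this property. \textbf{Step two.} Feed that distribution into Claim~\ref{noatoms}: its hypothesis is exactly ``there exists a distribution with $\underline{u}_P(t)\le\psafe$ on $[0,\tau_P]$,'' which Step one has just supplied, so the conclusion yields an absolutely continuous distribution $\acdf^{*}(\cdot)$ for which $\underline{u}_P(t)\le\psafe$ on $[0,\tau_P]$ (with $\underline{u}_P$ recomputed against $\acdf^{*}$). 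Renaming $\acdf^{*}$ as $\acdf$ is precisely the statement of Claim~\ref{con}.

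The one subtlety — and it is a bookkeeping point rather than a genuine obstacle — is the mismatch between ``continuous distribution'' in the statement of Claim~\ref{noatoms} and ``absolutely continuous distribution'' in Claim~\ref{con}: the construction in the proof of Claim~\ref{noatoms} actually produces an absolutely continuous $\acdf^{*}$, namely the residual draw $F_A^{*}(\cdot)$ obtained after splitting the atoms off into the first stage of the compound lottery, so the stronger conclusion is already in hand and the two statements line up. I would state this in a single sentence so the reader need not reconstruct it, and then record that Claim~\ref{con} is the step that reduces the proof of Part~(i) (existence of a non-beneficial-search equilibrium implies $\sigma\ge\bar\sigma$) to analysing an \emph{absolutely continuous} faking strategy whose constrained payoff never exceeds $\psafe$.
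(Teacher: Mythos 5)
Your proposal is correct and is exactly the paper's argument: the paper introduces Claim~\ref{con} with the single phrase ``Combining Claims \ref{pseudo} and \ref{noatoms},'' i.e., precisely the chaining you describe. Your remark that the construction in Claim~\ref{noatoms} actually delivers the absolutely continuous residual $F_A^*(\cdot)$, so the ``continuous'' versus ``absolutely continuous'' wording gap is harmless, is also consistent with what the paper does.
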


To complete the proof of Part (i), we show  the following result.

\begin{claim}\label{con} If an absolutely continuous distribution $\acdf(\cdot)$ exists such that $\underline{u}_P(t)\leq \psafe$ for all $t\in[0,\tau_P]$, then $\sigma\leq\bar{\sigma}$.
\end{claim}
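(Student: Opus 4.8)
The plan is to reduce the claimed inequality to a statement about the agent's cumulative faking probability by time $\tau_P$, via an integrating‑factor rewriting of the principal's constrained payoff. Write $D(t):=\underline{u}_P(t)-\psafe$; since $\acdf$ is absolutely continuous we have $a(\cdot)=1$ in the constrained benchmark and $D(0)=\psafe-\psafe=0$. Differentiating under the integral sign, I would show
\[
D'(t)=(\psafe-\mu)\,e^{-\rho t}(1-\rcdf(t))(1-\sigma\acdf(t))\,\bigl[h(t)-\rhaz_A(t)\bigr],\qquad h(t):=\frac{\mu(1-\psafe)\rhaz(t)-\rho\psafe}{\psafe-\mu},
\]
where $\rhaz_A(t):=\sigma\apdf(t)/(1-\sigma\acdf(t))$ is the (scaled) hazard of faking. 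By (A1) and the defining identity $\rhaz(\tau_P)=\rho\psafe/(\mu(1-\psafe))$, the reference hazard $h$ is strictly positive on $[0,\tau_P)$ and $h(\tau_P)=0$; the strategy with $\rhaz_A\equiv h$, i.e. $\acdf(t)=\sigma^{-1}\bigl(1-e^{-\psi(t)}\bigr)$ where $\psi(t):=\int_0^t h(s)\,ds$, is precisely the one making $D\equiv 0$, and $e^{\psi(\tau_P)}=(1-\bar\sigma)^{-1}$ by the definition of $\bar\sigma$.

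Next I would set $\Xi(t):=(1-\sigma\acdf(t))\,e^{\psi(t)}$ and $k(t):=(\psafe-\mu)e^{-\rho t}(1-\rcdf(t))e^{-\psi(t)}$, so that the displayed derivative collapses to $D'(t)=k(t)\,\Xi'(t)$. One checks that $k>0$, that $k'<0$ because $\rho+\rhaz(t)+h(t)>0$ on $[0,\tau_P)$, that $k(0)=\psafe-\mu$ and $\Xi(0)=1$, and, integrating by parts,
\[
D(t)=k(t)\Xi(t)-(\psafe-\mu)+\int_0^t|k'(s)|\,\Xi(s)\,ds .
\]
Because $\acdf(\tau_P)\le 1$ we get $\Xi(\tau_P)=\dfrac{1-\sigma\acdf(\tau_P)}{1-\bar\sigma}\ge\dfrac{1-\sigma}{1-\bar\sigma}$, and $\Xi(\tau_P)\le 1$ is equivalent to $\sigma\acdf(\tau_P)\ge\bar\sigma$, which together with $\acdf(\tau_P)\le 1$ gives $\sigma\ge\bar\sigma$ — the inequality needed to complete Part (i). So the whole claim reduces to: $D(t)\le 0$ on $[0,\tau_P]$ forces $\Xi(\tau_P)\le 1$.

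The argument is clean when $\Xi\ge 1$ throughout $[0,\tau_P]$: then $\int_0^{\tau_P}|k'|\,\Xi\ge\int_0^{\tau_P}|k'|=k(0)-k(\tau_P)$, so $D(\tau_P)\ge k(\tau_P)(\Xi(\tau_P)-1)$, and $D(\tau_P)\le 0$ yields $\Xi(\tau_P)\le 1$. For the general case I would let $t^{\dagger}$ be the largest $t$ with $\Xi\ge 1$ on $[0,t]$; running the same estimate on $[0,t^{\dagger}]$ (where $\Xi(t^{\dagger})=1$) forces $D(t^{\dagger})\ge 0$, hence $D(t^{\dagger})=0$, and then $\int_0^{t^{\dagger}}|k'|(\Xi-1)=0$ with $\Xi-1\ge 0$ forces $\Xi\equiv 1$ on $[0,t^{\dagger}]$, i.e. $\acdf$ agrees there with the reference strategy $\sigma^{-1}(1-e^{-\psi})$. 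One then works across the successive maximal subintervals of $(t^{\dagger},\tau_P)$ on which $\Xi$ dips below $1$: the budget bound $\Xi(s)\ge 1-\sigma$ (from $\acdf\le 1$ and $\psi\ge 0$) caps how far $D$ can fall during such an excursion, while the requirement $D\le 0$ at the right endpoint of the last "$\Xi\ge 1$" stretch forces any recovery to undershoot; aggregating these constraints is intended to reproduce $\sigma\acdf(\tau_P)\ge\bar\sigma$, hence $\sigma\ge\bar\sigma$.

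The main obstacle is exactly this last step. Pointwise $\Xi\ge 1$ is \emph{not} implied by $D\le 0$: the agent can fake ahead of the reference early, pushing $D$ strictly below $0$, and fall behind later, so the clean estimate does not apply directly. The delicate point is that $D(t)\le 0$ must be used at \emph{every} $t$ — not just at $\tau_P$ — in combination with the hard cap $\acdf\le 1$, and these must be shown to aggregate into $\sigma\acdf(\tau_P)\ge\bar\sigma$; carrying this out requires careful bookkeeping of the gains and losses of $D$ across the crossings of $\Xi=1$, using monotonicity of $k$ and the lower bound $\Xi\ge 1-\sigma$. A cleaner alternative I would try is a variational/exchange argument showing that among all feasible faking strategies the reference strategy $\acdf(t)=\sigma^{-1}(1-e^{-\psi(t)})$ is the one demanding the smallest $\sigma$ consistent with $D\le 0$ on $[0,\tau_P]$, and that it is feasible there precisely when $\sigma\ge\bar\sigma$.
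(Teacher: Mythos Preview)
Your setup is correct and in fact coincides with the paper's: the derivative identity
\[
D'(t)=(\psafe-\mu)\,e^{-\rho t}(1-\rcdf(t))(1-\sigma\acdf(t))\,[h(t)-\rhaz_A(t)],
\]
the integrating–factor substitution $\Xi(t)=(1-\sigma\acdf(t))e^{\psi(t)}$, and the reduction of the claim to $\Xi(\tau_P)\le 1$ are all right. The gap is exactly where you flag it, and your excursion bookkeeping does not close it. The ``$\Xi\ge 1$ throughout'' case is the trivial one; for general $\Xi$ your sketch (budget bound $\Xi\ge 1-\sigma$, aggregating gains and losses of $D$ across crossings) is not a proof, and the variational/exchange alternative is only a heuristic.

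The paper closes the gap in one stroke by recognizing that the hypothesis is a Gronwall-type integral inequality. Set $w(t)=e^{-\rho t}(1-\rcdf(t))(1-\sigma\acdf(t))$ and $b(s)=h(s)+\rhaz(s)+\rho$; then $b>0$ on $[0,\tau_P]$, and a short rearrangement (the same integration by parts you already did, or the paper's version eliminating $\apdf$) shows that $D(t)\le 0$ is \emph{equivalent} to
\[
w(t)\ \le\ 1-\int_0^t b(s)\,w(s)\,ds\qquad\text{for all }t\in[0,\tau_P].
\]
The paper then invokes a reverse Gronwall bound (G\"ollwitzer's inequality, cited from Bainov--Simeonov), which yields
\[
w(t)\ \le\ \exp\Bigl(-\int_0^t b(s)\,ds\Bigr)\qquad\text{for all }t\in[0,\tau_P],
\]
i.e.\ $\Xi(t)\le 1$ for \emph{every} $t$, not just at $\tau_P$. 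Evaluating at $t=\tau_P$ gives $1-\sigma\acdf(\tau_P)\le 1-\bar\sigma$, hence $\sigma\ge\sigma\acdf(\tau_P)\ge\bar\sigma$. So the missing idea is simply to recast $D\le 0$ as the integral inequality for $w$ and apply Gronwall; once you do, neither the excursion analysis nor the variational argument is needed.
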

\begin{proof}[Proof of Claim \ref{con}].Suppose an absolutely continuous distribution $\acdf(\cdot)$ exists, such that $\underline{u}_P(t)\leq \psafe$ for $t\in[0,\tau_P]$. By direct substitution, we have 
\begin{align*}
\underline{u}_P(t)=\int^t_{0}\exp(-\rho s)\{((1-\mu)\psafe +\mu)\rpdf(s)(1-\sigma \acdf(s))+\mu\sigma \apdf(s)(1-\rcdf(s))\}ds\\
+\exp(-\rho t)(1-\rcdf(t))(1-\sigma\acdf(t))\psafe\leq \psafe.
\end{align*}
Because $\apdf(\cdot)$ is integrable and $\rcdf(\cdot)$ is differentiable, we can
integrate by parts to eliminate $\sigma f(\cdot)$ from the integrand. In particular,
%
%

\begin{align*}
&\int^t_{0}\exp(-\rho s)\sigma \apdf(s)(1-\rcdf(s))ds=-\int^t_{0}\exp(-\rho s)(1-\rcdf(s))\}d(1-\sigma \acdf(s)))=\\
&-(1-\sigma \acdf(s)))\exp(-\rho s)(1-\rcdf(s))]_{0}^t-\int_0^t\exp(-\rho s)(1-\sigma \acdf(s))(\rho (1-\rcdf(s))+\rpdf(s))ds=\\
&1-(1-\sigma \acdf(t)))\exp(-\rho t)(1-\rcdf(t))]-\int_0^t\exp(-\rho s)(1-\sigma \acdf(s))(\rho (1-\rcdf(s))+\rpdf(s))ds.
\end{align*}
Substituting into $\underline{u}_P(t)$, we have 
\begin{align*}
\underline{u}_P(t)=\int^t_{0}\exp(-\rho s)((1-\mu)\psafe +\mu)\rpdf(s)(1-\sigma \acdf(s))ds
+\exp(-\rho t)(1-\rcdf(t))(1-\sigma\acdf(t))\psafe+\\
\mu\Big(1-(1-\sigma \acdf(t)))\exp(-\rho t)(1-\rcdf(t))-\int_0^t\exp(-\rho s)(1-\sigma \acdf(s))(\rho (1-\rcdf(s))+\rpdf(s))ds\Big).
\end{align*}
Simplifying, we have 
\begin{align*}
&\underline{u}_P(t)=\\&\int^t_{0}\exp(-\rho s)(1-\sigma \acdf(s))\{(1-\mu)\psafe\rpdf(s)-\rho\mu(1-\rcdf(s))\}ds
+\exp(-\rho t)(1-\rcdf(t))(1-\sigma\acdf(t))(\psafe-\mu)+\mu=\\
&\int^t_{0}\exp(-\rho s)(1-\rcdf(s))(1-\sigma \acdf(s))\{(1-\mu)\psafe\rhaz(s)-\rho\mu\}ds
+\exp(-\rho t)(1-\rcdf(t))(1-\sigma\acdf(t))(\psafe-\mu)+\mu.
\end{align*}

Let $w(t)\equiv \exp(-\rho s)(1-\rcdf(s))(1-\sigma \acdf(s))$. We have 
\begin{align*}
&\underline{u}_P(t)=
\int^t_{0}w(s)\{(1-\mu)\psafe\rhaz(s)-\rho\mu\}ds
+w(t)(\psafe-\mu)+\mu\leq\theta\Rightarrow\\
&w(t)\leq 1-\int_0^t w(s)\frac{(1-\mu)\psafe\rhaz(s)-\rho\mu}{\psafe-\mu}ds\Rightarrow\\
&w(t)\leq 1-\int_0^t b(s)w(s)ds,
\end{align*}
where
\begin{align*}
	b(s)=\frac{(1-\mu)\psafe\rhaz(s)-\rho\mu}{\psafe-\mu}.
\end{align*}
Note that for $t\in[0,\tau_P]$ we have $H_R(s)\geq \phi_P$, 
\begin{align*}
	b(s)\geq & \frac{(1-\mu)\psafe\phi_P-\rho\mu}{\psafe-\mu}=\frac{(1-\mu)\psafe}{\psafe-\mu}(\phi_P-\frac{\rho\mu}{\psafe(1-\mu)})=\frac{\rho(1-\mu)\psafe}{\psafe-\mu}(\frac{\psafe}{\mu(1-\psafe)}-\frac{\mu}{\psafe(1-\mu)})\\
	=&\frac{(\psafe-\mu)(\psafe+\mu(1-\psafe))}{\psafe(1-\psafe)\mu(1-\mu)}>0.
	\end{align*}

Next, consider the following result, presented in  \citet{BS1992} (Corollary 1.12).
\begin{claim}\label{Goll}(G\"ollwitzer (1969)) Let $y(t)$ and $b(t)$ be continuous functions in $J=[\alpha,\beta]$, let $q(t)$ be a nonnegative function in $J$, and suppose $b(t)$ is also non-negative in $J$. If a function $u(t)$ in $J$ is such that
\begin{align*}
u(t)\leq y(r)-q(t)\int_r^t b(s)y(s)ds
\end{align*}
then 
\begin{align*}
u(t)\leq y(r)\exp(-q(t)\int_r^t b(s)ds)	
\end{align*}
where $\alpha\leq r<t\leq \beta$.
\end{claim}
To apply this result, set $u(\cdot)=w(\cdot)$, $y(\cdot)=w(\cdot)$, $r=\alpha=0$, $\beta=\tau_P$, $q(t)=1$. Note that $y(0)=w(0)=1$. Furthermore, $b(\cdot)>0$ on interval $[0,\tau_P]$ as shown above.  Furthermore, because $\acdf(\cdot)$ is continuous, so is $w(\cdot)=y(\cdot)$. Therefore,
\begin{align*}
&w(t)\leq 1-\int_0^t b(s)w(s)ds\Rightarrow w(t)\leq \exp(-\int_{0}^tb(s)ds\}.
\end{align*}
Next, substitute $w(\cdot)$ and $b(\cdot)$.
\begin{align*}
&\exp(-\rho t)(1-\rcdf(t))(1-\sigma\acdf(t))\leq \exp(-\int_{0}^t\frac{(1-\mu)\psafe\rhaz(s)-\rho\mu}{\theta-\mu}ds).
\end{align*}
Next, substitute $1-\rcdf(t)=\exp(-\int_0^t\rhaz(s)ds)$ and simplify,
\begin{align*}
&\exp(-\int_0^t\{\rhaz(s)+\rho\} ds)(1-\sigma\acdf(t))\leq \exp(-\int_{0}^t\frac{(1-\mu)\psafe\rhaz(s)-\rho\mu}{\theta-\mu}ds)\Rightarrow\\
&1-\sigma\acdf(t)\leq \exp(-\int_{0}^t\{\frac{(1-\mu)\psafe\rhaz(s)-\rho\mu}{\theta-\mu}-\rhaz(s)-\rho\} ds)\Rightarrow\\
&1-\sigma\acdf(t)\leq \exp(-\int_{0}^t\{\frac{\mu(1-\psafe)\rhaz(s)-\rho\psafe}{\theta-\mu}\} ds).
\end{align*}
From Claim \ref{Goll}, the preceding holds for $t\in[0,\tau_P]$. Consider $t=\tau_P$.
\begin{align*}
&1-\sigma\acdf(\tau_P)\leq \exp(-\int_{0}^{\tau_P}\{\frac{\mu(1-\psafe)\rhaz(s)-\rho\psafe}{\theta-\mu}\} ds\Rightarrow\\
&1-\sigma\acdf(\tau_P)\leq 1-\bar{\sigma}\Rightarrow\\
&\bar{\sigma}\leq \sigma \acdf(\tau_P)\leq \sigma.
\end{align*}
This completes the proof of Part (i).

Part (ii) It is readily verified that the proposed CDF is increasing and that for $\sigma>\bar{\sigma}$, the proposed CDF attains 1 at some $\tau_X>\tau_P$. Furthermore, it is straightforward to show that $u_P(t)=\psafe$, for all $t\leq\tau_X$. Thus, the proposed \aaa{} strategy, coupled with immediate stopping constitute an equilibrium without beneficial search. \end{proof}

\subsection{Proofs for Section \ref{rem}}
\begin{proof}[Proof of Proposition \ref{naive}] 

\textit{Step 1. Commitment to naive search delivers \ppp{} a higher payoff than the unique equilibrium if $\sigma<\bar{\sigma}$.} If $\sigma<\bar{\sigma}$, then the unique equilibrium without commitment has beneficial search. Consider the payoff difference as a function of $\sigma$,
\begin{align*}
\Delta(\sigma)=u_P^N(\sigma)-u^{FB}_P(\mint(\sigma)).
\end{align*}
Note that for $\sigma=0$, we have $u_P^N(0)=u_P^{FB}(\mint(0))=u_P^{FB}(\tau_P)$. Thus, $\Delta(0)=0$. Furthermore,
\begin{align*}
	\frac{du_P^{N}(\sigma)}{d\sigma}=-\exp(-\rho \tau_P)(1-\rcdf(\tau_P))(\psafe-\mu).
\end{align*}
Moreover, 
\begin{align*}
	\frac{du_P^{FB}(\mint(\sigma))}{d\sigma}=\frac{du_P^{FB}(\mint)}{d\mint}\frac{d\mint}{d\sigma}.
\end{align*}
From proof of Lemma \ref{fb}, we have 
\begin{align*}
\frac{du_P^{FB}(\mint)}{d\mint}
&=\exp(-\rho \mint)(1-\rcdf(\mint))\mu(1-\psafe)\{\rhaz(\mint)-\frac{\rho\psafe}{\mu(1-\psafe)}\}.
\end{align*}
From Proposition \ref{charval}, we have the following implicit function relating $\mint$ and $\sigma$.
\begin{align*}
\exp\{-\int_{\mint}^{\tau_P}\frac{\mu(1-\psafe)\rhaz(s)-\rho\psafe}{\psafe-\mu}ds\}=1-\sigma.
\end{align*}
Differentiating, with respect to $\sigma$, we have 
\begin{align*}
\frac{d\mint}{d\sigma}\{\frac{\mu(1-\psafe)\rhaz(\mint)-\rho\psafe}{\psafe-\mu}\}\exp\{-\int_{\mint}^{\tau_P}\frac{\mu(1-\psafe)\rhaz(s)-\rho\psafe}{\psafe-\mu}\}=-1.
\end{align*}
Simplifying, we have 
\begin{align*}
\frac{d\mint}{d\sigma}\{\frac{\mu(1-\psafe)\rhaz(\mint)-\rho\psafe}{\psafe-\mu}\}(1-\sigma)=-1\Rightarrow\\
\frac{d\mint}{d\sigma}=\{\frac{\mu(1-\psafe)\rhaz(\mint)-\rho\psafe}{\psafe-\mu}\}(1-\sigma)=-\frac{\psafe-\mu}{1-\sigma}\frac{1}{\mu(1-\psafe)(\rhaz(\mint)-\frac{\rho\psafe}{\mu(1-\psafe)})}.
\end{align*}
Combining these expressions, 
\begin{align*}
&	\frac{du_P^{FB}(\mint(\sigma))}{d\sigma}=\\
&	\Big(\exp(-\rho \mint)(1-\rcdf(\mint))\mu(1-\psafe)\{\rhaz(\mint)-\frac{\rho\psafe}{\mu(1-\psafe)}\}\Big)\Big(-\frac{\psafe-\mu}{1-\sigma}\frac{1}{\mu(1-\psafe)(\rhaz(\mint)-\frac{\rho\psafe}{\mu(1-\psafe)})}\Big)=\\
	&-\frac{\psafe-\mu}{1-\sigma}\exp(-\rho \mint)(1-\rcdf(\mint)).
\end{align*}
It follows that
\begin{align*}
	\frac{d\Delta}{d\sigma}=-\exp(-\rho \tau_P)(1-\rcdf(\tau_P))(\psafe-\mu)+\frac{\psafe-\mu}{1-\sigma}\exp(-\rho \mint)(1-\rcdf(\mint)).
\end{align*}
Recalling that (1) $\mint<\tau_P$, (2) $\exp(-\rho\cdot)(1-\rcdf(\cdot))$ is a strictly decreasing function, we have 
\begin{align*}
	\frac{d\Delta}{d\sigma}>(\psafe-\mu)\exp(-\rho \mint)(1-\rcdf(\mint))(\frac{1}{1-\sigma}-1)>0.
\end{align*}
Hence, $\Delta(0)=0$ and $\Delta(\cdot)$ is increasing in $\sigma$ for $\sigma<\bar{\sigma}$. Therefore for all $\sigma\in(0,\bar{\sigma}]$, we have $\Delta(\sigma)>0$.

\textit{Step 2.  A $\widetilde{\sigma}\leq 1$ exists such that commitment to naive search delivers \ppp{} a higher payoff than any equilibrium without commitment if $\bar{\sigma}<\sigma<\widetilde{\sigma}$}. Without commitment, the \ppp{}'s payoff in any equilibrium is $u_P^{FB}(0)=\psafe$ for $\sigma>\bar{\sigma}$. Thus, 
\begin{align*}
	\Delta(\sigma)=u_P^N(\sigma)-\psafe.
\end{align*}
Note that $\Delta(\cdot)$ is continuous at $\bar{\sigma}$, because $\tau_M(\sigma)\rightarrow 0$ as $\sigma\rightarrow\bar{\sigma}^-$, and $u_P^{FB}(0)=\psafe$. It follows from Step 1 that $\Delta(\bar{\sigma})>\psafe$. Furthermore, $\Delta(\sigma)$ is linear for $\sigma>\bar{\sigma}$. The result follows.\end{proof}

\begin{proof}[Proof of Proposition \ref{delint}]

To prove the first claim of the proposition it is sufficient to show that $U'(\psafe|\psafe)<0$, where $U'(\cdot|\psafe)$ is the derivative of $U(\cdot|\cdot)$ with respect to $\psafe _I$. In this case, the \ppp{}'s payoff of delegating $U(\psafe_I|\psafe)$ is decreasing in $\psafe_I$ at $\psafe_I=\psafe$. Thus, for an interval of $\psafe_I$ just below $\psafe$, we have $U(\psafe_I|\psafe)>U(\psafe|\psafe)$.

  To simplify the notation of the proof, let $u_t^{FB}(\cdot)\equiv du_P^{FB}(t)/dt$ and $u'_{I}(t|\psafe_I,\psafe)=\partial^2 u_P(t|\psafe_I,\psafe)/\partial t\partial\psafe_I$, i.e., the derivative of $u_P'(t|\psafe_I,\psafe)$
 with respect to $\psafe_I$.
 
\textit{Step 1. Recover the expression in the text.} 
 \begin{align*}U(\psafe_I|\psafe)&=\int_{\mint(\psafe_I)}^{\tau_P(\psafe_I)}\ppdf(s|\psafe_I)u_P(s|\psafe_I,\psafe)ds+(1-\pcdf(\tau_P^-(\psafe_I)|\psafe_I))u_P(\tau_P(\psafe_I)|\psafe_I,\psafe)
 \end{align*}
 Note that $\ppdf(\cdot|\psafe_I)$ is continuous, and $u_P(\cdot|\psafe_I,\psafe)$ is continuous and  differentiable on $[\tau_M(\psafe_I),\tau_P(\psafe_I)]$. Integrating by parts, we have 
 \begin{align*}
	&=-\int_{\mint(\psafe_I)}^{\tau_P(\psafe_I)}u_P(s|\psafe_I,\psafe)d[1-\pcdf(s|\psafe_I)]+(1-\pcdf(\tau_P^-(\psafe_I)|\psafe_I))u_P(\tau_P(\psafe_I)|\psafe_I,\psafe)\\
	&=(1-\pcdf(\mint(\psafe_I))|\psafe_I)u(\mint(\psafe_I)|\psafe_I)-(1-\pcdf(\tau_P^-(\psafe_I)|\psafe_I))u_P(\tau_P(\psafe_I)|\psafe_I,\psafe)\\
	&+\int_{\mint(\psafe_I)}^{\tau_P(\psafe_I)}(1-\pcdf(s|\psafe_I))u'_P(s|\psafe_I,\psafe)ds+(1-\pcdf(\tau_P^-(\psafe_I)|\psafe_I))u_P(\tau_P(\psafe_I)|\psafe_I,\psafe)\\
	&=\int_{\mint(\psafe_I)}^{\tau_P(\psafe_I)}(1-\pcdf(s|\psafe_I))u'_P(s|\psafe_I,\psafe)ds+(1-\pcdf(\mint(\psafe_I)|\psafe_I)u_P(\mint(\psafe_I)|\psafe_I).
	\end{align*}
From Proposition \ref{charval}, we have $(1-\pcdf(\mint(\psafe_I)|\psafe_I)=1$ and $u_P(\mint(\psafe_I)|\psafe_I)=u_P^{FB}(\mint(\psafe_I))$.

 \noindent\textit{Step 2.	Differentiate $U(\psafe_I|\psafe)$ with respect to $\psafe_I$.} 
\begin{align*}
U'(\psafe_I|\psafe)= (1-\pcdf(\tau_P^-(\psafe_I)|\psafe_I))u'_P(\tau_P^-(\psafe_I)|\psafe_I,\psafe)\tau_P'(\psafe_I)-(1-\pcdf(\tau_M(\psafe_I)|\psafe_I))u'_P(\mint(\psafe_I)|\psafe_I,\psafe)\tau_M'(\psafe_I)\\
+\int_{\mint(\psafe_I)}^{\tau_P(\psafe_I)}(1-\pcdf(s|\psafe_I))u'_I(s|\psafe_I,\psafe))-\frac{d\pcdf(s|\psafe_I)}{d\psafe_I}u_P'(s|\psafe_I,\psafe)ds
+u^{FB}_t(\mint(\psafe_I))\mint'(\psafe_I).
\end{align*}

 \noindent\textit{Step 3.	Evaluate at $\psafe_I=\psafe$.} 
 
 Note that at $\psafe_I=\psafe$, we have $u_P'(\cdot|\psafe,\psafe)=0$ on $[\mint,\tau_P]$. That is, the \ppp{}'s equilibrium payoff is constant inside the support, and furthermore, all equilibrium objects are continuous in $\psafe_I$. Therefore,
\begin{align*}
U'(\psafe|\psafe)= 
\int_{\mint}^{\tau_P}(1-\pcdf(s))u'_I(s|\psafe,\psafe)ds
+u^{FB}_t(\mint)\mint'(\psafe).
\end{align*}

\textit{Step 4. Derive an expression for $u'_I(s|\psafe,\psafe)$.} 

Substituting the definitions in the text,
 \begin{align*}
u_P(t|\psafe_I,\psafe)&=u_P(\tau_M(\psafe_I)|\psafe_I,\psafe)\\
&+\int^t_{\tau_M(\psafe_I)}\exp(-\rho s)\{((1-\mu)\psafe +\mu)\rpdf(s)(1-\sigma \acdf(s|\psafe_I))+\mu\sigma \apdf(s|\psafe_I)(1-\rcdf(s))\}ds\\
&+\exp(-\rho t)(1-\rcdf(t))(1-\sigma\acdf(t|\psafe_I))\psafe-\exp(-\rho \tau_M(\psafe_I))(1-\rcdf(\tau_M(\psafe_I))\psafe.
\end{align*}
From Proposition \ref{charval} $\acdf(\cdot)$ has no mass points. It follows that $\acdf(\tau_M)=0$ and  $\acdf(\cdot)$ is absolutely continuous. By implication, for $t\in(\tau_M(\psafe_I),\tau_P(\psafe_I))$ we have 
\begin{align*}
\exp(-\rho t)(1-\rcdf(t))(1-\sigma\acdf(t|\psafe_I))\psafe-\exp(-\rho \tau_M(\psafe_I))(1-\rcdf(\tau_M(\psafe_I))\psafe=\\
-\int_{\tau_M(\psafe_I)}^t\exp(-\rho s)\{\rpdf(s)(1-\sigma \acdf(s|\psafe_I))+\sigma \apdf(s|\psafe_I))(1-\rcdf(s))+\rho(1-\rcdf(s))(1-\sigma \acdf(s|\psafe_I)) \}\psafe ds.
\end{align*}
To see this, note that the integrand is equal to the derivative in $t$ of $\exp(-\rho t)(1-\rcdf(t))(1-\sigma\acdf(t|\psafe_I))\psafe$ wherever the derivative exists, and the equality above follows from absolute continuity.
Substituting the previous equality and simplifying, 
\begin{align*}
u_P(t|\psafe_I,\psafe)&=u_P(\tau_M(\psafe_I)|\psafe_I,\psafe)+\\
&\int^t_{\tau_M(\psafe_I)}\exp(-\rho s)\{\mu(1-\psafe)\rpdf(s)(1-\sigma \acdf(s|\psafe_I))+(\mu-\theta)\sigma \apdf(s|\psafe_I)(1-\rcdf(s))\\
&\quad\quad\quad\quad\quad-\rho\theta(1-\sigma\acdf(s||\psafe_I))(1-\rcdf(s))\}ds=\\
&\int^t_{\tau_M\psafe_I)}\exp(-\rho s)(1-\sigma\acdf(s|\psafe_I))(1-\rcdf(s))(\psafe-\mu)\{\frac{\mu(1-\psafe)\rhaz(s)-\rho\theta}{\psafe-\mu}-\ahaz(s|\psafe_I)\}ds,
\end{align*}
where $\ahaz(s|\psafe_I)=\sigma\apdf(s|\psafe_I)/(1-\sigma\acdf(s|\psafe_I))$. From Proposition \ref{charval}, it is readily verified that in equilibrium,
\begin{align*}
	\ahaz(s|\psafe_I)=\frac{\mu(1-\psafe_I)\rhaz(s)-\rho\theta_I}{\psafe_I-\mu}.
\end{align*}
It is also helpful to denote the derivative of $\ahaz(s|\psafe_I)$ with respect to $\psafe_I$. Let
\begin{align*}
	h_A(s|\psafe_I)\equiv&\frac{d\ahaz(s|\psafe_I)}{d\psafe_I}
=\frac{-(\mu\rhaz(s)+\rho)(\psafe_I-\mu)-\mu(1-\psafe_I)\rhaz(s)+\rho\psafe}{(\psafe_I-\mu)^2}\\
	=&-\frac{\mu(1-\mu)}{(\psafe_I-\mu)^2}(\rhaz(s)-\frac{\rho}{1-\mu}).
\end{align*}
With this notation, we have
\begin{align*}
u_P(t|\psafe_I,\psafe)&=u_P(\tau_M(\psafe_I)|\psafe_I,\psafe)+\\&
\int^t_{\tau_M(\psafe_I)}\exp(-\rho s)(1-\sigma\acdf(s|\psafe_I))(1-\rcdf(s))(\psafe-\mu)\{\ahaz(s|\psafe)-\ahaz(s|\psafe_I)\}ds.
\end{align*}
Differentiating with respect to $t$, we have 
\begin{align*}
u'_P(t|\psafe_I,\psafe)=
\exp(-\rho t)(1-\sigma\acdf(t|\psafe_I))(1-\rcdf(t))(\psafe-\mu)\{\ahaz(t|\psafe)-\ahaz(t|\psafe_I)\}.
\end{align*}
Differentiating with respect to $\psafe_I$ and then evaluating at $\psafe_I=\psafe$, we have 
\begin{align*}
u'_I(t|\psafe,\psafe)=
-\exp(-\rho t)(1-\sigma\acdf(t)(1-\rcdf(t))(\psafe-\mu)h_A(t|\psafe).
\end{align*}
Note that the term involving differentiation of $1-\sigma\acdf(t|\psafe_I)$ vanishes at $\psafe_I=\psafe$, because $\ahaz(t|\psafe)-\ahaz(t|\psafe)=0$.

\textit{Step 5. Derive an expression for $u_t^{FB}(\tau_M)\tau_M'(\psafe)$}.

Differentiating $u_P^{FB}(\cdot)$ with respect to $t$ and substituting $\mint$ gives
\begin{align*}
u_t^{FB}(\tau_M)=\frac{du_P^{FB}(\mint)}{dt}
&=\exp(-\rho \mint)(1-\rcdf(\mint))\mu(1-\psafe)\{\rhaz(\mint)-\frac{\rho\psafe}{\mu(1-\psafe)}\}\\
&=\exp(-\rho \mint)(1-\rcdf(\mint))\{\mu(1-\psafe)\rhaz(\mint)-\rho\psafe\}\\
&=\exp(-\rho \mint)(1-\rcdf(\mint))(\psafe-\mu)\ahaz(\tau_M|\psafe)
\end{align*}
(see also the proof of Lemma \ref{fb}).

To find $\mint'(\psafe)$ recall that, Proposition  \ref{charval} gives an implicit function relating $\mint$, $\tau_P$, and $\sigma$,
\begin{align*}
\exp\{-\int_{\mint}^{\tau_P}\frac{\mu(1-\psafe)\rhaz(s)-\rho\psafe}{\psafe-\mu}ds\}=1-\sigma\iff \exp\{-\int_{\mint}^{\tau_P}\ahaz(s|\psafe)ds\}=1-\sigma.
\end{align*}
Differentiating with respect to $\theta$, we have 
\begin{align*}
\exp\{-\int_{\mint}^{\tau_P}\ahaz(s|\psafe)ds\}\Big(\ahaz(\mint|\psafe)\tau_M'(\psafe)-\ahaz(\tau_P|\psafe)\tau_P'(\psafe)-\int_{\tau_M(\psafe)}^{\tau_P(\psafe)}h_A(s|\psafe)ds\Big)=0.
\end{align*}
Note that $\rhaz(\tau_P)=\rho\psafe/(\mu(1-\psafe))$, which implies $\ahaz(\tau_P|\psafe)=0$. By implication
\begin{align*}
\ahaz(\mint|\psafe)\tau_M'(\psafe)-\int_{\tau_M}^{\tau_P}h_A(s|\psafe)ds=0\Rightarrow \tau_M'(\psafe)=\frac{\int_{\tau_M}^{\tau_P}h_A(s|\psafe)ds}{\ahaz(\mint|\psafe)}.
\end{align*}
Combining these calculations, we have 
\begin{align*}
	u_t^{FB}(\tau_M)\tau_M'(\psafe)=\exp(-\rho \mint)(1-\rcdf(\mint))(\psafe-\mu).
\int_{\tau_M}^{\tau_P}h_A(s|\psafe)ds.
\end{align*}

\textit{Step 6. Show that $U'(\psafe|\psafe)<0$}.

From Step 3,
\begin{align*}
U'(\psafe|\psafe)= 
\int_{\mint}^{\tau_P}(1-\pcdf(s))u'_I(s|\psafe,\psafe)ds
+u^{FB}_t(\mint)\mint'(\psafe).
\end{align*}
Substituting the expressions derived in Steps 4 and 5, we have 
\begin{align*}
U'(\psafe|\psafe)= -&\int_{\mint}^{\tau_P}\exp(-\rho s)(1-\sigma\acdf(s)(1-\rcdf(s))(\psafe-\mu)(1-\pcdf(s))h_A(s|\psafe)ds	+\\
&\exp(-\rho \mint)(1-\rcdf(\mint))(\psafe-\mu).
\int_{\tau_M}^{\tau_P}h_A(s|\psafe)ds.
\end{align*}
Let $Q(s)\equiv \exp(-\rho s)(1-\sigma\acdf(s))(1-\rcdf(s))(1-\pcdf(s))(\psafe-\mu)$. Note that $Q(s)>0$ for all $s\geq 0$ and $Q(s)$ is strictly decreasing. With this notation, 
\begin{align*}
U'(\psafe|\psafe)=\int_{\mint}^{\tau_P}\{Q(\mint)-Q(s)\}h_A(s|\psafe)ds.
\end{align*}
Because $Q(\cdot)$ is strictly decreasing, the term in curly braces is strictly positive. To complete the proof, it suffices to show that $h_A(s|\psafe)<0$ for $s\in[\mint,\tau_P]$. As shown in Step 4,
\begin{align*}
	h_A(s|\psafe)=-\frac{\mu(1-\mu)}{(\psafe-\mu)^2}(\rhaz(s)-\frac{\rho}{1-\mu}).
\end{align*}
For $s\leq\tau_P$, we have $\rhaz(s)\geq \frac{\rho\psafe}{\mu(1-\psafe)}$. Hence, for such $s$, 
\begin{align*}
	h_A(s|\psafe)\leq -\frac{\mu(1-\mu)}{(\psafe-\mu)^2}(\frac{\rho\psafe}{\mu(1-\psafe)}-\frac{\rho}{1-\mu})=-\frac{\rho\psafe(1-\mu)-\rho\mu(1-\psafe)}{(\psafe-\mu)^2(1-\psafe)}=-\frac{\rho}{(\psafe-\mu)(1-\psafe)}<0.
\end{align*}
This completes the proof of the first claim.

To prove the second claim, suppose that the \ppp{} has delegated to an intermediary with $\psafe_I<\psafe$. It will be established that if $\psafe_I$ sufficiently close to $\psafe$, then the \ppp{} does not want to overrule the \aaa{}'s chosen action following a news arrival. Obviously, type-0 news reveals $\omega=0$ and both \ppp{} and intermediary prefer the safe action in this case. Recall from Proposition \ref{charval} that in equilibrium the intermediary selects the risky action with probability 1 following type-1 news at all times $[0,\tau_P(\psafe_I)]$. Type-1 news arrivals before $\mint(\psafe_I)$ are real with probability 1, and both \ppp{} and intermediary prefer the risky action for such arrivals. What remains is to show that the \ppp{} prefers the risky action following type-1 news arrivals that occur in $[\mint(\psafe_I),\tau_P(\psafe_I)]$. Given an arrival at such a time, the \ppp{}'s posterior belief is given by (\ref{belief}),
\begin{align*}
	\bl(s|\psafe_I)=\frac{\mu\rpdf(s)(1-\apri\acdf(s|\psafe_I))+\mu\apri\apdf(s|\psafe_I)(1-\rcdf(s))}{\mu\rpdf(s)(1-\apri\acdf(s|\psafe_I))+\apri\apdf(s|\psafe_I)(1-\rcdf(s))}=\frac{\mu[\rhaz(s)+\ahaz(s|\psafe_I)]}{\mu\rhaz(s)+\ahaz(s|\psafe_I)}.
\end{align*}
Thus, the \ppp{} strictly prefers the risky action following a type-1 news arrival if
\begin{align*}
\frac{\mu[\rhaz(s)+\ahaz(s|\psafe_I)]}{\mu\rhaz(s)+\ahaz(s|\psafe_I)}>\psafe\iff \ahaz(s|\psafe_I)<\frac{\mu(1-\psafe)\rhaz(s)}{\psafe-\mu}.
\end{align*}
It is readily verified that in the equilibrium of Proposition \ref{charval}, 
\begin{align*}
	\ahaz(s|\psafe_I)=\frac{\mu(1-\psafe_I)\rhaz(s)-\rho\theta_I}{\psafe_I-\mu}.
\end{align*}
Thus, the \ppp{} strictly prefers risky following a type-1 news arrival if
\begin{align*}
	\frac{\mu(1-\psafe_I)\rhaz(s)-\rho\theta_I}{\psafe_I-\mu}<\frac{\mu(1-\psafe)\rhaz(s)}{\psafe-\mu}.
\end{align*}
Obviously, for $\psafe_I=\psafe$, left hand side is strictly smaller. Because left hand side is continuous, this inequality holds if $\psafe_I$ and $\psafe$ are close.
\end{proof}
%

\section{Supplemental Appendix}
\setcounter{page}{1}

\subsection{Proof of Lemmas used in Proposition \ref{strucval}}
\noindent \textbf{Observation 1.} Note that 
\begin{align*}
	\int_0^tw^P_0(s)+w^P_1(s)ds+W^P_\phi(t)=1.
\end{align*}
Indeed, the integral is the probability that a type-0 or type-1 arrival occurs at times less than or equal to $t$, while $W_\phi^P(t)$ is the probability that an arrival of news takes longer than $t$. Rearranging, for $t_1\leq t_2$ we have
\begin{align*}
	\int_{t_1}^{t_2}w^P_0(s)+w^P_1(s)ds=W^P_\phi(t_1)-W^P_\phi(t_2).
\end{align*}
Similarly, for the \aaa{} we have,
\begin{align*}
	\int_{t_1}^{t_2}w^A_0(s)+w^A_1(s)+w_S^A(s)ds=W^A_\phi(t_1)-W^A_\phi(t_2).
\end{align*}

%
\begin{proof}[Proof of Lemma \ref{pbounds}]

\textit{Proof of Parts (i) and (ii):} By definition, 
\begin{align*}
	&u_P(t')-u_P(t)=\\&\int_t^{t'}\exp(-\rho s)(w_0^P(s)\psafe+w_1^P(s)(\psafe+a(s)(\bl(s)-\psafe))ds+\exp(-\rho t')W_\phi^P(t')\psafe-\exp(-\rho t)W_\phi^P(t)\psafe.
\end{align*}
\textit{Step 1: Derive a bound on the second term of the integrand}.
\begin{align*}
&	w_1^P(s)(\psafe+a(s)(\bl(s)-\psafe))=w_1^P(s)\psafe (1-a(s))+w_1^P(s)\bl(s)a(s)=\\
&	\underbrace{(\mu\rpdf(s)(1-\sigma\acdf(s))+\sigma\apdf(s)(1-\rcdf(s))}_{w_1^P(s)}\psafe(1-a(s))+\underbrace{(\mu\rpdf(s)(1-\sigma\acdf(s))+\mu\sigma\apdf(s)(1-\rcdf(s)))}_{w_1^P(s)\bl(s)}a(s).
\end{align*}
Combining terms, we have 
\begin{align*}
&	w_1^P(s)(\psafe+a(s)(\bl(s)-\psafe))=\\
&	\mu\rpdf(s)(1-\sigma\acdf(s))(\psafe(1-a(s))+a(s))+\sigma\apdf(s)(1-\rcdf(s))(\psafe(1-a(s))+a(s)\mu).
\end{align*}
\noindent Recall that $\mu<\psafe<1$, and hence, 
\begin{align}\label{Q}
	w_1^P(s)(\psafe+a(s)(\bl(s)-\psafe))\leq
	\mu\rpdf(s)(1-\sigma\acdf(s))+\sigma\apdf(s)(1-\rcdf(s))\psafe.
\end{align}
\textit{Step 2: Apply the bound derived in Step 1}. 

\noindent Recall that $w_0^P(s)=(1-\mu)\rpdf(s)(1-\sigma\acdf(s)$. Using this and (\ref{Q}), we have 
\begin{align}\label{QQ}
	u_P(t')-u_P(t)\leq & \int_t^{t'}\exp(-\rho s)\{((1-\mu)\psafe+\mu)\rpdf(s)(1-\sigma\acdf(s))+\sigma\apdf(s)(1-\rcdf(s))\psafe\} ds+\\
	\nonumber &\exp(-\rho t')(1-\rcdf(t'))(1-\sigma\acdf(t'))\psafe-\exp(-\rho t)(1-\rcdf(t))(1-\sigma\acdf(t))\psafe.
\end{align}

\noindent \textit{Step 3: Derive an alternative expression for the RHS of inequality (\ref{QQ})}.

Consider the following integral,
\begin{align*}
	&\int_{t}^\infty\sigma\apdf(s)\min\{u_P^{FB}(s)-u_P^{FB}(t),u_P^{FB}(t')-u_P^{FB}(t)\}ds=\\
	&\underbrace{\int_{t}^{t'}\sigma\apdf(s)(u_P^{FB}(s)-u_P^{FB}(t))ds}_{A}+\underbrace{(1-\sigma\acdf(t'))(u_P^{FB}(t')-u_P^{FB}(t))}_{B}.
\end{align*}
Focus on term A. Substituting, we have
\begin{align*}
&\int_{t}^{t'}\sigma\apdf(s)[\int_t^s\exp(-\rho x)\rpdf(x)(\mu+(1-\mu)\psafe)dx+\exp(-\rho s)(1-\rcdf(s)\psafe-\exp(-\rho t)(1-\rcdf(t))\psafe]ds=\\
&\int_{t}^{t'}\int_t^{s}\sigma\apdf(s)\{\exp(-\rho x)\rpdf(x)(\mu+(1-\mu)\psafe)dxds+\int_{t}^{t'}\sigma\apdf(s)\exp(-\rho s)(1-\rcdf(s))\psafe\} ds\\
&-	\sigma(\acdf(t')-\acdf(t))(\exp(-\rho t)(1-\rcdf(t))).\end{align*}
Reversing the order of integration of the double integral, we have 
\begin{align*}
	\int_{t}^{t'}\sigma(\acdf(t')-\acdf(s))\exp(-\rho s)\rpdf(s)(\mu+(1-\mu)\psafe)+\sigma\apdf(s)\exp(-\rho s)(1-\rcdf(s))\psafe ds\\
	-\sigma(\acdf(t')-\acdf(t))(\exp(-\rho t)(1-\rcdf(t))).
\end{align*}
Now consider term $B$. We have
\begin{align*}
	&(1-\sigma\acdf(t'))\{u_P^{FB}(t')-u_P^{FB}(t)\}=\\	
	&(1-\sigma\acdf(t'))\{\int_t^{t'}\exp(-\rho s)\rpdf(s)(\mu+(1-\mu)\psafe))ds+\exp(-\rho t')(1-\rcdf(t'))\psafe-\exp(-\rho t)(1-\rcdf(t))\psafe\}.
\end{align*}
Adding terms A and B we have,
\begin{align*}
	\int_{t}^{t'}(1-\sigma\acdf(s))\exp(-\rho s)\rpdf(s)(\mu+(1-\mu)\psafe)+\sigma\apdf(s)\exp(-\rho s)(1-\rcdf(s))\psafe ds\\
	+(1-\sigma\acdf(t'))\exp(-\rho t')(1-\rcdf(t))(1-\sigma\acdf(t))\exp(-\rho t)(1-\rcdf(t)),
\end{align*}
which is the RHS of inequality (\ref{QQ}). Therefore, inequality (\ref{QQ}) can be written
\begin{align}\label{QQQ}
	u_P(t')-u_P(t)\leq \int_{t}^\infty\sigma\apdf(s)\min\{u_P^{FB}(s)-u_P^{FB}(t),u_P^{FB}(t')-u_P^{FB}(t)\}ds.
\end{align}

\textit{Proof of (i).} Notice that
\begin{align*}
	\min\{u_P^{FB}(s)-u_P^{FB}(t),u_P^{FB}(t')-u_P^{FB}(t)\}\leq u_P^{FB}(t')-u_P^{FB}(t).
\end{align*} 
Substituting into (\ref{QQQ}), we have 
\begin{align}\label{QQQQ}
	u_P(t')-u_P(t)\leq \int_{t}^\infty\sigma\apdf(s)(u_P^{FB}(t')-u_P^{FB}(t))ds=(1-\sigma\acdf(t))(u_P^{FB}(t')-u_P^{FB}(t)).
\end{align}

\textit{Proof of (ii).} Suppose $t>\tau_P$. Applying inequality (\ref{QQQQ}), we have,
\begin{align*}\
	u_P(t)-u_P(\tau_P)\leq (1-\sigma\acdf(\tau_P))(u_P^{FB}(t)-u_P^{FB}(\tau_P)).
\end{align*}
From Lemma \ref{fb} we have that the single peak of $u^{FB}(\cdot)$ is at $\tau_P$. By implication, the right hand side is strictly negative for all $\tau_P<t<\infty$. Furthermore, the right hand side is also strictly decreasing in $t$, and hence $\lim_{t\rightarrow \infty}u_P^{FB}(t)-u_P^{FB}(\tau_P)<0$.

\textit{Proof of (iii)}. By definition, 
\begin{align*}
	&u_P(t')-u_P(t)=\\&\int_t^{t'}\exp(-\rho s)(w_0^P(s)\psafe+w_1^P(s)(\psafe+a(s)(\bl(s)-\psafe))ds+\exp(-\rho t')W_\phi^P(t')\psafe-\exp(-\rho t)W_\phi^P(t)\psafe.
\end{align*}
Given (\ref{PICa}), we have $\psafe+a(s)(\bl(s)-\psafe)\geq \psafe$, and hence,
\begin{align*}
	u_P(t')-u_P(t)\geq \psafe&\int_t^{t'}\exp(-\rho s)(w_0^P(s)+w_1^P(s))ds+\exp(-\rho t')W_\phi^P(t')\psafe-\exp(-\rho t)W_\phi^P(t)\psafe\geq \\
 \exp(-\rho t') \psafe&\int_t^{t'}(w_0^P(s)+w_1^P(s))ds+\exp(-\rho t')W_\phi^P(t')\psafe-\exp(-\rho t)W_\phi^P(t)\psafe.
\end{align*}
Using Observation 1, we have 
\begin{align*}
	u_P(t')-u_P(t)\geq 
 \exp(-\rho t') \psafe(W_\phi^P(t)-W_\phi^P(t'))&+\exp(-\rho t')W_\phi^P(t')\psafe-\exp(-\rho t)W_\phi^P(t)\psafe=\\
-[\exp(-\rho t)-\exp(-\rho t')]W_\phi^P(t)\psafe.
\end{align*}

\textit{Proof of (iv).} If $\acdf(t)=\acdf(t')$, then $\apdf(s)=0$ for almost all $s\in(t,t')$. It follows that for almost all $s\in(t,t')$, $\bl(s)=1$, $a(s)=1$ (from (\ref{PICa})), $w^P_0(s)=(1-\mu)\rpdf(s)(1-\sigma\acdf(t))$, $w^P_1(s)=\mu\rpdf(s)(1-\sigma\acdf(t))$. Furthermore, $W_\phi^P(t)=(1-\rcdf(t))(1-\sigma\acdf(t))$ and $W_\phi^P(t')=(1-\rcdf(t'))(1-\sigma\acdf(t))$. The result follows from routine substitution.
\end{proof}

\begin{proof}[Proof of Lemma \ref{pcont}] 

Using Parts (i) and (iii) of Lemma \ref{pbounds}, for $t'>t$
\begin{align*}
-(\exp(-\rho t)-(\exp(-\rho t'))W_\phi^P(t)\psafe	\leq u_P(t')-u_P(t)\leq (1-\sigma\acdf(t))\{u_P^{FB}(t')-u_P^{FB}(t)\}.
\end{align*}
By continuity of $u_P^{FB}(\cdot)$ and $\exp( \cdot)$, we have $\lim_{t'\rightarrow t^+}u_P(t')=u_P(t)$.
Similarly, for $t'<t$
\begin{align*}
-(\exp(-\rho t')-(\exp(-\rho t))W_\phi^P(t')\psafe	\leq u_P(t)-u_P(t')\leq (1-\sigma \acdf(t'))\{u_P^{FB}(t)-u_P^{FB}(t')\},
\end{align*}
Note that (1) $W_\phi^P(t')\leq W_\phi^P(t)$,  (2) $(\exp(-\rho t')-(\exp(-\rho t))>0$, and (3) $\acdf(t')\geq 0$. It follows that
\begin{align*}
-(\exp(-\rho t')-(\exp(-\rho t))W_\phi^P(t)\psafe	\leq u_P(t)-u_P(t')\leq u_P^{FB}(t)-u_P^{FB}(t').
\end{align*}
Continuity of $u^{FB}(\cdot)$ and $\exp(\cdot)$ imply $\lim_{t'\rightarrow t^-}u_P(t')=u_P(t)$.\end{proof}

\begin{proof}[Proof of Lemma \ref{ptop}] From Lemma \ref{pbounds}, we have $u_P(t)<u_P(\tau_P)$ for $\tau_P<t\leq \infty$. From (\ref{PICt}), we have $\ppdf(t)=0$ for such $t$.  By implication, $\pcdf(\tau_P)=1$.
\end{proof}
%


\noindent \textbf{Definition:} Let
\begin{align*}
&v(t)\equiv \int_0^t\exp(-\rho x)\rpdf(x)(\mu+(1-\mu)\asafe)dx+\exp(-\rho t)(1-\rcdf(t))\asafe.
\end{align*}

\begin{proof}[Proof of Lemma \ref{abounds}]
Suppose $t<t'\leq \tau_P$. By definition,
\begin{align*}
&	u_A(t')-u_A(t)=\\
	&\int_{t}^{t'}\exp(-\rho s)\{(w_0^A(s)+w_S^A(s))\beta+w_1^A(s)((1-a(s))\asafe+a(s))\}ds+\nonumber\\
	&\exp(-\rho t')W_\phi^A(t')(\mu a(t')+(1-a(t'))\asafe)-\exp(-\rho t)W_\phi^A(t)(\mu a(t)+(1-a(t))\asafe)\nonumber.
	\end{align*}
\textit{Proof of (i)}. 
First, rearrange the previous expression:
\begin{align*}
&	u_A(t')-u_A(t)=\\
	&\int_{t}^{t'}\exp(-\rho s)\{(w_0^A(s)+w_S^A(s))\beta+w_1^A(s)((1-a(s))\asafe+a(s))\}ds+\exp(-\rho t')W_\phi^A(t')\asafe-\exp(-\rho t)W_\phi^A(t)\asafe+\nonumber\\
	&\exp(-\rho t')W_\phi^A(t')(\mu-\asafe) a(t'))-\exp(-\rho t)W_\phi^A(t)(\mu-\asafe)a(t)\nonumber.
	\end{align*}
Because $a(s)\leq 1$ and $\asafe<\mu<1$,we have 
\begin{align}\label{AQ}
&	u_A(t')-u_A(t)\leq\\
	&\int_{t}^{t'}\exp(-\rho s)\{(w_0^A(s)+w_S^A(s))\beta+w_1^A(s)\}ds+\exp(-\rho t')W_\phi^A(t')\asafe-\exp(-\rho t)W_\phi^A(t)\asafe+\nonumber\\
	&\exp(-\rho t')W_\phi^A(t')(\mu-\asafe) a(t')-\exp(-\rho t)W_\phi^A(t)(\mu-\asafe)a(t)\nonumber.
	\end{align}
\textit{Step 1: Provide a different expression the first line of the RHS of (\ref{AQ})}.

The expression to be rewritten is 
	\begin{align}\label{AQ1}
	\int_{t}^{t'}&\exp(-\rho s)\{(w_0^A(s)+w_S^A(s))\beta+w_1^A(s)\}ds+\exp(-\rho t')W_\phi^A(t')\asafe-\exp(-\rho t)W_\phi^A(t)\asafe=\nonumber\\		
		\int_{t}^{t'}&\exp(-\rho s)\{\rpdf(s)(1-\pcdf(s))(\mu+(1-\mu)\asafe)+\asafe\ppdf(s)(1-\rcdf(s))\}ds+\\
		&\exp(-\rho t')(1-\pcdf(t'))(1-\rcdf(t'))\asafe-\exp(-\rho t)\exp(-\rho t)(1-\pcdf(t))(1-\rcdf(t))\asafe\nonumber.
	\end{align}

For $t\leq \tau_P$, we have $\rhaz(t)\geq\phi_P>\rho\asafe/(\mu(1-\asafe))$. Thus, for $t\leq\tau_P$, function $v(\cdot)$ is strictly increasing.

Consider the integral
\begin{align*}
&	\int_t^{\infty}\ppdf(s)\min\{v(s)-v(t),v(t')-v(t)\}ds=\underbrace{\int_t^{\infty}\ppdf(s)\{v(s)-v(t)\}ds}_{A}+\underbrace{\strut (1-\pcdf(t'))(v(t')-v(t))}_{B}.
\end{align*}
which follows because $v(\cdot)$ is increasing for $t\leq \tau_P$.

First, consider term A.
\begin{align*}
&		\int_t^{t'}\ppdf(s)\{\int_t^s\exp(-\rho x)\rpdf(x)(\mu+(1-\mu)\asafe)dx+\exp(-\rho s)(1-\rcdf(s))\asafe-\exp(-\rho t)(1-\rcdf(t))\asafe\} ds,
\end{align*}
which is equal to
\begin{align*}
&	\int_t^{t'}\int_t^s\ppdf(s)\{\exp(-\rho x)\rpdf(x)(\mu+(1-\mu)\asafe)\}dxds+\int_t^{t'}\ppdf(s)\exp(-\rho s)(1-\rcdf(s))\asafe ds\\
	&-(\pcdf(t')-\pcdf(t))(\exp(-\rho t)(1-\rcdf(t))\asafe.
\end{align*}
Reversing the order of integration in the double integral, we have 
\begin{align*}
		&\int_t^{t'}\int_x^{t'}\ppdf(s) \exp(-\rho x)\rpdf(x)(\mu+(1-\mu)\asafe)dsdx+\int_t^{t'}\ppdf(s)\exp(-\rho s)(1-\rcdf(s))\asafe ds\\
	&-(\pcdf(t')-\pcdf(t))(\exp(-\rho t)(1-\rcdf(t))\asafe=\\
		&\int_t^{t'}(\pcdf(t')-\pcdf(x)) \exp(-\rho x)\rpdf(x)(\mu+(1-\mu)\asafe)dx+\int_t^{t'}\ppdf(s)\exp(-\rho s)(1-\rcdf(s))\asafe ds\\
	&-(\pcdf(t')-\pcdf(t))(\exp(-\rho t)(1-\rcdf(t))\asafe.
\end{align*}
Therefore, term A is
\begin{align*}
			&\int_t^{t'}(\pcdf(t')-\pcdf(s)) \exp(-\rho s)\rpdf(s)(\mu+(1-\mu)\asafe)+\ppdf(s)\exp(-\rho s)(1-\rcdf(s))\asafe ds\\
	&-(\pcdf(t')-\pcdf(t))(\exp(-\rho t)(1-\rcdf(t))\asafe.
\end{align*}
Next, consider term $B$:
\begin{align*}
&(1-\pcdf(t'))\{\int_{t}^{t'}\exp(-\rho s)\rpdf(s)(\mu+(1-\mu)\asafe)ds+\exp(-\rho t')(1-\rcdf(t'))\asafe-\exp(-\rho t)(1-\rcdf(t))\asafe\}.
\end{align*}
Adding terms A and B, we have 
\begin{align*}
		&\int_t^{t'}(1-\pcdf(s)) \exp(-\rho s)\rpdf(s)(\mu+(1-\mu)\asafe)ds+\ppdf(s)\exp(-\rho s)(1-\rcdf(s))\asafe ds\\
	&+(1-\pcdf(t'))(\exp(-\rho t')(1-\rcdf(t'))\asafe-(1-\pcdf(t))(\exp(-\rho t)(1-\rcdf(t))\asafe,
\end{align*}
which is exactly the expression in (\ref{AQ1}). 

\textit{Proof of (i).} Note first that by straightforward differentiation.
\begin{align*}
	v'(t)> 0\iff \rhaz(t)>\frac{\rho\asafe}{\mu(1-\asafe)}.
\end{align*}
Note that $\asafe<\psafe\Rightarrow \frac{\rho\asafe}{\mu(1-\asafe)}<\phi_P$. Therefore, $v(\cdot)$ is strictly increasing for $t\leq\tau_P$.

Using Step 1, we have,
\begin{align*}
&	u_A(t')-u_A(t)\leq \\
	&	\int_t^{\infty}\ppdf(s)\min\{v(s)-v(t),v(t')-v(t)\}ds+
\exp(-\rho t')W_\phi^A(t')(\mu-\asafe) a(t')-\exp(-\rho t)W_\phi^A(t)(\mu-\asafe)a(t)\leq\\
	&	(1-\pcdf(t))(v(t')-v(t))+
\exp(-\rho t')W_\phi^A(t')(\mu-\asafe) a(t')-\exp(-\rho t)W_\phi^A(t)(\mu-\asafe)a(t),
\end{align*}
where use has been made of the observation $\min\{v(s)-v(t),v(t')-v(t)\}\}\leq v(t')-v(t)$, completing the proof of (i).
%

\textit{Proof of (ii)}. Consider the expression for $u_A(t')-u(t)$; noting that $\asafe<1$ and $a(t)\geq 0$, we have 
	\begin{align*}
	&u_A(t')-u_A(t)\geq \exp(-\rho t')\asafe\int_{t}^{t'}\{w_0^A(s)+w_0^S(s)+w_1^A(s)\}ds+\\
	&\exp(-\rho t')W_\phi^A(t')(\mu a(t')+(1-a(t'))\asafe)-\exp(-\rho t)W_\phi^A(t)(\mu a(t)+(1-a(t))\asafe).
\end{align*}
Using Observation 1, we have
\begin{align*}
	&u_A(t')-u_A(t)\geq 	\exp(-\rho t')(W_\phi^A(t)-W_\phi^A(t'))\asafe+\\
	&\exp(-\rho t')W_\phi^A(t')(\mu a(t')+(1-a(t'))\asafe)-\exp(-\rho t)W_\phi^A(t)(\mu a(t)+(1-a(t))\asafe).
\end{align*}
Simplifying, 
\begin{align*}
	u_A(t')-u_A(t)\geq
	&-(\exp(-\rho t)-\exp(-\rho t'))W_\phi^A(t)\asafe+
	(\exp(-\rho t')W_\phi^A(t')a(t')-\exp(-\rho t)W_\phi^A(t)a(t))(\mu-\asafe)
\end{align*}

\textit{Proof of (iii)}. 

If $\pcdf(t)=\pcdf(t')$, then $\ppdf(s)=0$ almost everywhere in $(t,t')$. This observation, together with the other assumptions implies $w_0^A(s)=(1-\mu)\rpdf(s)(1-\pcdf(t))$, $w_1^A(s)=\mu\rpdf(s)(1-\pcdf(t))$, $w_S^A(s)=0$, for almost all $s\in(t,t')$. Furthermore, $W^A_\phi(t)=(1-\rcdf(t))(1-\pcdf(t))$ and $W^A_\phi(t')=(1-\rcdf(t'))(1-\pcdf(t))$. The result follows from routine substitution.\end{proof}

\begin{remark} If $\ppdf(s)=0$ and $a(s)=1$ for $s\in[t,t']$, then the Right hand side of the inequality in Lemma \ref{abounds} reduces to $(1-\pcdf(t))(u_A^{FB}(t')-u_A^{FB}(t))$. Thus, parts $(i)$ and $(iii)$ of Lemma \ref{abounds} are mutually consistent.
	\end{remark}

\begin{proof}[Proof of Lemma \ref{aball}]

\textit{Part (i).} Note that $a(t)<1\Rightarrow \apdf(t)>0$ (see \ref{belief}). It follows from (\ref{AIC}) that $u_A(t')-u_A(t)\leq 0$ for any $t'>t$. From Lemma \ref{abounds}, we have 
\begin{align*}
\{\exp(-\rho t')W_\phi^A(t')a(t')-\exp(-\rho t)W_\phi^A(t)a(t)\}(\mu-\asafe)
-\{\exp(-\rho t)-\exp(-\rho t')\}W_\phi^A(t)\asafe\leq 	u_A(t')-u_A(t).
\end{align*}
Combining these inequalities, we have 
\begin{align*}
	\{\exp(-\rho t')W_\phi^A(t')a(t')-\exp(-\rho t)W_\phi^A(t)a(t)\}(\mu-\asafe)
-\{\exp(-\rho t)-\exp(-\rho t')\}W_\phi^A(t)\asafe\leq 0.
\end{align*}
Simplifying, we have
\begin{align*}
	\exp(-\rho t')W_\phi^A(t')a(t')
\leq \exp(-\rho t)W_\phi^A(t)a(t)+\frac{\{\exp(-\rho t)-\exp(-\rho t')\}W_\phi^A(t)\asafe}{\mu-\asafe}.
\end{align*}
Because $\pcdf(t)<1$, for $\epsilon\in(0,\bar{\epsilon})$ we have $\pcdf(t')<1$ for all $t'\in[t,t+\epsilon)$. In other words, $W_\phi^A(t')>0$ for such $t'$. Therefore, we have
\begin{align*}
	a(t')
\leq \underbrace{\frac{\exp(-\rho t)W_\phi^A(t)}{\exp(-\rho t')W_\phi^A(t')}}_{\text{A}}a(t)+\underbrace{\frac{\{\exp(-\rho t)-\exp(-\rho t')\}W_\phi^A(t)\asafe}{(\mu-\asafe)\exp(-\rho t')W_\phi^A(t')}}_{\text{B}}.
\end{align*}
Because $W_\phi^A(\cdot)$ is right-continuous, A is arbitrarily close to 1 provided $\epsilon$ sufficiently small. Simultaneously, $B$ is arbitrarily close to 0. Because $a(t)<1$, we have that the RHS is less than 1.

\textit{Part (ii)}. Note that $a(t)<1\Rightarrow \apdf(t)>0$ (see \ref{belief}). It follows from (\ref{AIC}) that $u_A(t)-u_A(t'')\geq 0$ for any $t''<t$. From Lemma \ref{abounds}, we have 
\begin{align*}
	(1-\pcdf(t''))(v(t)-v(t''))+\exp(-\rho t)W_\phi^A(t)(\mu-\asafe) a(t)-\exp(-\rho t'')W_\phi^A(t'')(\mu-\asafe) a(t'')&\geq  
	u_A(t)-u_A(t'').
\end{align*}
Combining these inequalities we have 
\begin{align*}
	&(1-\pcdf(t''))(v(t)-v(t''))+
\exp(-\rho t)W_\phi^A(t)(\mu-\asafe) a(t)-\exp(-\rho t'')W_\phi^A(t'')((\mu-\asafe) a(t''))\geq 0.
\end{align*}
Because $t''<t$, we have $\exp(-\rho t'')W_\phi^A(t'')\geq \exp(-\rho t)W_\phi^A(t)$. Therefore,
\begin{align*}
	&(1-\pcdf(t''))(v(t)-v(t''))+
\exp(-\rho t'')W_\phi^A(t'')\{(\mu-\asafe) a(t)\}-\exp(-\rho t'')W_\phi^A(t'')((\mu-\asafe) a(t''))\geq 0.
\end{align*}
Furthermore, $\pcdf(t'')\leq \pcdf(t)<1$. Therefore,
\begin{align*}
	&(v(t)-v(t''))+
\exp(-\rho t'')(1-\rcdf(t''))(\mu-\asafe) (a(t)-a(t''))\geq 0.
\end{align*}
Obviously, $\exp(-\rho t'')(1-\rcdf(t''))(\mu-\asafe)>0$. It follows that
\begin{align*}
&a(t'')\leq a(t)+\frac{v(t)-v(t'')}{\exp(-\rho t'')(1-\rcdf(t''))(\mu-\asafe)}.
\end{align*}
Because $v(\cdot)$ is continuous and $a(t)<1$,  for $t''$ sufficiently close to $t$, RHS is smaller than 1.\end{proof}

%
%
\begin{proof}[Proof of Lemma \ref{comp1}] 

\textit{Part (i)}. Note that if $\acdf(t_1)=\acdf(t_2)$, then $\acdf(t_1)=\acdf(t)=\acdf(t_2)$ for all $t\in[t_1,t_2)$. In addition we have assumed (\ref{PICa}). From Lemma \ref{pbounds} part (iv), we have $u_P(t_2)-u_P(t)=(1-\sigma\acdf(t_1))(u_P^{FB}(t_2)-u_P^{FB}(t))$. We have assumed $t_2\leq \tau_P$. From Lemma \ref{fb}, $u^{FB}_P(t_2)>u^{FB}_P(t)$. Therefore, $u_P(t_2)>u_P(t)$. From (\ref{PICt}), we have $\ppdf(t)=0$.

\textit{Part (ii)}. In the proof of (i) we have shown that $u_P(t_1)<u_P(t_2)$. From continuity of $u_P(\cdot)$ (Lemma \ref{pcont}), we have $u_P(t)<u_P(t_2)$ for all $t\in[t_1-\epsilon,t_1)$, for sufficiently small $\epsilon>0$. From (\ref{PICt}), we have $\ppdf(t)=0$ for such $t$. Together with part (i) we have $\ppdf(t)=0$ for $t\in[t_1-\epsilon,t_2)$.
\end{proof}

%
\begin{proof}[Proof of Lemma \ref{comp2}]

Consider $T_P=\{t|\pcdf(t)=\pcdf(t_2)\}$. Because $\pcdf(t_2)<1$, $\pcdf(\cdot)$ is weakly increasing, and $\pcdf(\tau_P)=1$, the set $T_P$ is bounded from above by $\tau_P$. Therefore, it has a supremum, $t_2'$. In other words, there exists $t_2'\leq \tau_P$ and $\epsilon'>0$ such that for any $\epsilon\in(0,\epsilon')$: (1) $\pcdf(t)=\pcdf(t_2)$ for $t\in(t_2'-\epsilon,t_2')$, and (2) $\pcdf(t)>\pcdf(t_2)$ for $t\in(t_2',t_2'+\epsilon)$. Note that continuity of $u_P(\cdot)$ (Lemma \ref{pcont}) combined with (2) implies that $u_P(t_2')=u_P^*$, the \ppp{}'s equilibrium payoff, and hence $u_P(t_2')\geq u_P(t)$ for all $t\geq 0$.  

\textit{Step 1: Show that either (I) $a(t)=1$ for all almost all $t\in[t_1,t_2')$, or (II) some $t\in(t_1,t_2')$ and $\epsilon>0$ exist such that $a(t'')<1$ for almost all $t''\in(t-\epsilon,t)$ and $a(t')=1$ for all $t'\in(t,t+\epsilon)$.}

Consider $T_A=\{t|t\in(t_1,t_2')\cap a(t)<1\}$ (note the exclusion of $t_1$). If $T_A$ has measure zero, then (I).

 Suppose $T_A$ has strictly positive measure. Consider $t\in T_A$. Using Lemma \ref{aball}, we conclude that an interval $(t-\epsilon,t+\epsilon)\subset T_A$ for some $\epsilon>0$. That is, the set $T$ is open. It follows that $T_A$ is a union of disjoint open intervals, i.e.  $T_A=\cup_{k=1}^K (q_k,p_k)$, where $K\leq \infty$ and $t_1\leq q_1<p_1\leq q_2<p_2...\leq q_k<p_k...\leq t_2$.  That is, $q_k$ is the bottom of interval $k$ and $p_k$ is its top. Note that $p_k=q_{k+1}$ is allowed; in this case, the top of open interval $k$ coincides with the bottom of open interval $k+1$. For ease of notation, if $K=\infty$, let $p_K\equiv \lim_{k\rightarrow \infty}p_k$, which exists because $\{p_k\}$ is a bounded monotone sequence. In addition, whether $K$ is finite or infinite, $p_K\leq t_2'$ by construction.
 
Note that if a $k\in K$ exists such that $p_k<q_{k+1}$, then (II). Indeed,   consider time $p_k$ and $0<\epsilon<\min\{p_k-q_k, q_{k+1}-p_k\}$. If  $t''\in(p_k-\epsilon,p_k)$, then $q_k<t''<p_k$, and hence $a(t'')<1$. Simultaneously, if $t'\in(p_k,p_k+\epsilon)$, then $p_k<t'<q_{k+1}$, and hence $a(t)=1$.

To proceed, consider the remaining possibility that $p_k=q_{k+1}$ for all $k\in K$. In this case, the upper bound of interval $k$ always coincides with the lower bound of interval $k+1$. Thus, the set $T$ is an interval $(q_1,p_K)$, with a countable number of points removed, i.e.  $T=\{(q_1,p_K)-\{p_k\}_{k=1}^\infty\}$. Therefore, for almost all $t\in(q_1,p_K)$, we have $a(t)<1$. In turn (\ref{PICa}) implies that $\bl(t)=\psafe$ for almost all such $t$. Thus,,
\begin{align*}
&	u_p(p_K)-u_P(q_1)=\\
&	\int_{q_1}^{p_K}\exp(-\rho s)\{(w_0^P(s)+w_1^P(s)+w^P_S(s))\}\psafe ds+\exp(-\rho p_K)W_\phi^P(p_K)\psafe-\exp(-\rho q_1)W_\phi^P(q_1)\psafe.
\end{align*}
Using Observation 1, and noting that $\exp(-\rho s)<\exp(-\rho q_1)$ for $s\in (q_1,p_K]$, we have 
\begin{align*}
	u_P(p_K)-u_P(q_1)\leq& 
	\exp(-\rho q_1)(W_\phi^P(q_1)-W_\phi^P(p_K))\psafe+\exp(-\rho p_K)W_\phi^P(p_K)\psafe-\exp(-\rho q_1)W_\phi^P(q_1)\psafe=\\
 -(&\exp(-\rho q_1)-\exp(-\rho p_K))W_\phi^P(p_K))\psafe=\\
  -(&\exp(-\rho q_1)-\exp(-\rho p_K))(1-\rcdf(p_K))(1-\pcdf(t_2)))\psafe<0.
\end{align*}
Therefore, $u_P(p_K)<u_P(q_1)\leq u_P(t_2')$. Because $u_P(\cdot)$ is continuous (see Lemma \ref{pcont}), it must be the case that $p_K<t_2'$.

Finally, consider time $p_K$ and $0<\epsilon<\min\{p_K-q_1,p_K-t_2'\}$. If $t''\in(p_K-\epsilon,p_K)$, then $t''\in(q_1,p_K)$. Recall that $a(t'')<1$ for almost all such $t''$. Furthermore,  if $t'\in(p_K,p_K+\epsilon)$, then $t'\in(p_K,t_2')$, and hence, $a(t')=1$. This completes the proof of Step 1.

\textit{Step 2: Show that $a(t)=1$ for all almost all $t\in[t_1,t_2')$}.

Step 2 is proved by establishing that Case (II) in the statement of Step 1 leads to a contradiction. Therefore, suppose that for some $t\in(t_1,t_2')$ and $\epsilon>0,$ exist such that $a(t'')<1$ for almost all $t''\in(t-\epsilon,t)$, and $a(t')=1$ for all $t'\in(t,t+\epsilon)$. 

Consider $t-\epsilon<t_L<t<t_M<t_H<t+\epsilon\leq t_2'$. By construction, $a(t)=1$ for all $t \in[t_M,t_H]$. In addition, $\pcdf(t_M)=\pcdf(t_H)=\pcdf(t_2)<1$. From Lemma \ref{abounds} part (iii), we have 
\begin{align*}
	\frac{u_A(t_H)-u_A(t_M)}{1-\pcdf(t_2)}=u_A^{FB}(t_H)-u_A^{FB}(t_M).
\end{align*}

Using Lemma \ref{abounds} part (ii), we have
\begin{align*}
&u_A(t_M)-u_A(t_L)\geq\\
	&	\{\exp(-\rho t_M)W_\phi^A(t_M)-\exp(-\rho t_L)W_\phi^A(t_L)a(t_L)\}(\mu-\asafe)
-\{\exp(-\rho t_L)-\exp(-\rho t_M)\}W_\phi^A(t_L)\asafe.
\end{align*}
Note that $a(t_L)\leq 1$ and $W_\phi^A(t_L)\leq 1$. It follows that 
\begin{align*}
&u_A(t_M)-u_A(t_L)\geq\\
	&	\{\exp(-\rho t_M)W_\phi^A(t_M)-\exp(-\rho t_L)W_\phi^A(t_L)\}(\mu-\asafe)
-\{\exp(-\rho t_L)-\exp(-\rho t_M)\}\asafe.
\end{align*}
Next, note that $\pcdf(t_M)=\pcdf(t_L)=\pcdf(t_2)<1$. Substituting and simplifying, we have 
\begin{align*}
&\frac{u_A(t_M)-u_A(t_L)}{1-\pcdf(t_2)}\geq\\ &-
	\{\exp(-\rho t_L)(1-\rcdf(t_L))-\exp(-\rho t_M)(1-\rcdf(t_M))\}(\mu-\asafe)
-\{\exp(-\rho t_L)-\exp(-\rho t_M)\}\asafe\\
&=-(\gamma(t_L)-\gamma(t_M)),
\end{align*}
where $\gamma(t)\equiv \exp(-\rho t)\{(1-\rcdf(t))(\mu-\asafe)+\asafe\}$. 

Next, note that 
\begin{align*}
\frac{u_A^{FB}(t_H)-u_A^{FB}(t_L)}{1-\pcdf(t_2)}=\frac{u_A^{FB}(t_H)-u_A^{FB}(t_M)}{1-\pcdf(t_2)}+\frac{u_A^{FB}(t_M)-u_A^{FB}(t_L)}{1-\pcdf(t_2)}.
\end{align*}
Using the previous bounds, it follows that 
\begin{align*}
&\frac{u_A(t_H)-u_A(t_L)}{1-\pcdf(t_2)}\geq u_A^{FB}(t_H)-u_A^{FB}(t_M)-(\gamma(t_L)-\gamma(t_M)).	
\end{align*}
Choose $t_M$ just above $t$. By continuity of $\gamma(\cdot)$, for $t_L$ sufficiently close to $t$, the difference $\gamma(t_L)-\gamma(t_M)\approx 0$. At the same time Lemma \ref{fb} guarantees $u_A^{FB}(t_H)-u_A^{FB}(t_M)>0$. Therefore for some $\epsilon''>0$, we have that $t''\in(t-\epsilon'',t)$ implies $u_A(t_H)-u_A(t'')>0$. From (\ref{AIC}), we have $\apdf(t'')=0$. Hence, $\bl(t'')=1$, which in turn implies $a(t'')=1$. This contradicts the initial assumption that $a(t'')<1$ for almost all $t''\in(t-\epsilon,t)$.

\textit{Step 3: Show that $\apdf(t)=0$ for all $t\in[t_1,t_2')$.}

From Step 2, $a(t)=1$ for almost all $t\in[t_1,t_2')$. Consider $t_1\leq t<t_2'$. For any such $t$, it is possible to find a $t'\in(t,t_2')$ for which $a(t')=1$. Note that
\begin{align*}
&\frac{	u_A(t')-u_A(t)}{1-\pcdf(t_2)}=\\
&\int_{t}^{t'}\exp(-\rho s)\rpdf(s)(\mu+(1-\mu)\asafe)ds+\exp(-\rho t')(1-\rcdf(t'))\mu-\exp(-\rho t)(1-\rcdf(t))(\asafe+a(t)(\mu-\beta))\geq \\
&\int_{t}^{t'}\exp(-\rho s)\rpdf(s)(\mu+(1-\mu)\asafe)ds+\exp(-\rho t')(1-\rcdf(t'))\mu-\exp(-\rho t)(1-\rcdf(t))\mu=\\
&u_A^{FB}(t')-u_A^{FB}(t)>0,
\end{align*}
where the last equality follows from $t'<t_2'\leq\tau_P<\tau_A$. It follows that for any $t\in[t_1,t_2')$, there exists some $t'$ such that $u_A(t')>u_A(t)$. From (\ref{AIC}), we have $\apdf(t)=0$. Obviously $t_2'\geq t_2$, and $\apdf(t)=0$ holds for $t\in[t_1,t_2)$.
\end{proof}

\begin{proof}[Proof of Lemma \ref{gapexp}] 

Consider $T_P=\{t|\pcdf(t)=\pcdf(t_2)\}$. Because $T_P$ is bounded from below by 0 and $\pcdf(\cdot)$ is right-continuous, $T_P$ has a minimum, $\underline{t}_1\in[0,t_1]$. To prove the result, it will be shown that $\underline{t}_1=0$. 

Suppose that $\underline{t}_1>0$. Consider interval $[\underline{t}_1,t_2')$ where $t_1<t_2'<t_2$. Applying Lemma \ref{comp2}, we have $\apdf(t)=0$ for all $t\in[\underline{t}_1,t_2')$. Next, consider interval $[\underline{t}_1,t_2'']$ where $t_1<t_2''<t_2'$. Because $\apdf(t)=0$ for $t\in[t_1,t_2'']$, we have $\acdf(t_1)=\acdf(t_2'')$. Furthermore, $t_2''<t_2<\tau_P$, where the last  inequality follows from $\pcdf(t_2)<1$ and Lemma \ref{ptop}. Using Lemma \ref{comp1} part (ii), there exists $t_0<\underline{t}_1$ such that $\ppdf(t)=0$ for all $t\in[t_0,t_2'')$.  Consider interval $[t_0,t_2''']$ where $t_1<t_2'''<t_2''$. Because $\ppdf(t)=0$ for all such $t$, we have $\pcdf(t_0)=\pcdf(t_2''')$. Furthermore, $t_2'''\in[t_1,t_2]$, and therefore $\pcdf(t_2''')=\pcdf(t_2)$. Hence, $\pcdf(t_0)=\pcdf(t_2)$. Because $t_0<\underline{t}_1$, we have that $\underline{t}_1$ is not the minimum of $T_P$.
\end{proof}


\begin{proof}[Proof of Lemma \ref{gapexa}]

Consider $T_A=\{t|\acdf(t)=\acdf(t_2)\}$. Because $T_A$ is bounded from below by 0 and $\acdf(\cdot)$ is right-continuous, $T_A$ has a minimum, $\underline{t}_1\in[0,t_1]$. To prove the result, it will be shown that $\underline{t}_1=0$. 

Suppose that $\underline{t}_1>0$. Consider interval $[\underline{t}_1,t_2')$ where $t_1<t_2'<t_2$. Applying Lemma \ref{comp1} part (ii), we have $\ppdf(t)=0$ for all $t\in[t_0,t_2')$, for some $t_0<\underline{t}_1$. By implication, $\ppdf(t)=0$ for all $t\in[t_0,t_2'']$, where $t_1<t_2''<t_2'$. Hence, $\pcdf(t_0)=\pcdf(t_2'')$. Furthermore, because $t_2''<t_2$, we have $\pcdf(t_2'')\leq \pcdf(t_2)<1$. Applying Lemma \ref{comp2}, we have $\apdf(t)=0$ for all $t\in[t_0,t_2'')$. By implication, $\apdf(t)=0$ for all $t\in[t_0,t_2''']$, where $t_1<t_2'''<t_2''$. Therefore, $\acdf(t_0)=\acdf(t_2''')=\acdf(t_2)$, where the last equality follows from $t_2'''\in(t_1,t_2)$. Because $t_0<\underline{t}_1$ and $\acdf(t_0)=\acdf(t_2)$, we have that $\underline{t}_1$ is not the minimum of $T_A$.	
\end{proof}
\subsection{Proof of Corollary 2}
\begin{proof}[Proof of Corollary \ref{corcs}]

\textit{Part (i)}. Obviously, $\sigma$ has no effect on $\tau_P$. To see that it reduces $\mint$, recall the following implicit function relating $\mint$ and $\sigma$ derived in Proposition \ref{charval}.
\begin{align*}
\exp\{-\int_{\mint}^{\tau_P}\frac{\mu(1-\psafe)\rhaz(s)-\rho\psafe}{\psafe-\mu}ds\}=1-\sigma.
\end{align*}
Differentiating with respect to $\sigma$,
\begin{align*}
\frac{d\mint}{d\sigma}\{\frac{\mu(1-\psafe)\rhaz(\mint)-\rho\psafe}{\psafe-\mu}\}\exp\{-\int_{\mint}^{\tau_P}\frac{\mu(1-\psafe)\rhaz(s)-\rho\psafe}{\psafe-\mu}\}=-1.
\end{align*}
Simplifying, 
\begin{align*}
\frac{d\mint}{d\sigma}\{\frac{\mu(1-\psafe)\rhaz(\mint)-\rho\psafe}{\psafe-\mu}\}(1-\sigma)=-1\Rightarrow\\
\frac{d\mint}{d\sigma}=\{\frac{\mu(1-\psafe)\rhaz(\mint)-\rho\psafe}{\psafe-\mu}\}(1-\sigma)=-\frac{\psafe-\mu}{1-\sigma}\frac{1}{\mu(1-\psafe)(\rhaz(\mint)-\frac{\rho\psafe}{\mu(1-\psafe)})}<0,
\end{align*}
where the last inequality follows because $\rhaz(s)>\rho\psafe/(\mu(1-\psafe))$ for $s<\tau_P$, and $\mint<\tau_P$. Next, note that
\begin{align*}
\frac{\partial\acdf(t)}{\partial\mint}\frac{d\mint}{d\sigma}&=-\frac{1}{\sigma}(1-\sigma\acdf(t))\frac{\mu(1-\psafe)\rhaz(\mint)-\rho\psafe}{\psafe-\mu}(-\frac{\psafe-\mu}{1-\sigma}\frac{1}{\mu(1-\psafe)(\rhaz(\mint)-\frac{\rho\psafe}{\mu(1-\psafe)})})\\
	&=\frac{1-\sigma\acdf(t)}{\sigma(1-\sigma)}\geq \frac{1}{\sigma}.
\end{align*}
Meanwhile, we have 
\begin{align*}
	\frac{\partial\acdf(t)}{\partial\sigma}&=-\frac{1}{\sigma^2}\acdf(t)\geq -\frac{1}{\sigma^2}.
\end{align*}
Hence, $d\acdf(t)/d\sigma\geq 1/\sigma-1/\sigma^2>0$ for $\sigma\in(0,1)$. Therefore, an increase in $\sigma$ increases $\acdf(\cdot)$, resulting in a FOSD shift toward smaller $t$. 

Note that $\pcdf(t)$ depends on $\sigma$ only via $\mint$ and \begin{align*}
\frac{d\pcdf(t)}{d\mint}=-(1-\pcdf(t))\frac{\asafe(1-\mu)\rhaz(\mint)-\rho\mu}{\mu-\asafe}<0, 	
 \end{align*}
 where the last inequality follows because $\rhaz(\cdot)$ is decreasing, and $\mint<\tau_A$. Because an increase in $\sigma$ reduces $\mint$, it increases $\pcdf(\cdot)$, resulting in a FOSD shift toward smaller $t$.
 
 
 \textit{Part (ii).} First note that
 \begin{align*}
 	\rhaz(\tau_P)=\frac{\rho\psafe}{\mu(1-\psafe)}\Rightarrow 
 \tau_P'(\psafe)=\frac{\rho}{\rhaz'(\tau_P)\mu(1-\psafe)^2}<0,
 \end{align*}
where the inequality follows from $\rhaz(\cdot)<0$.
 
 Next, recall that Proposition  \ref{charval} gives an implicit function relating $\mint$, $\tau_P$, and $\sigma$,
\begin{align*}
\exp\{-\int_{\mint}^{\tau_P}\frac{\mu(1-\psafe)\rhaz(s)-\rho\psafe}{\psafe-\mu}ds\}=1-\sigma.
\end{align*}
Let 
\begin{align*}
\ahaz(s)&\equiv \frac{\mu(1-\psafe)\rhaz(s)-\rho\psafe}{\psafe-\mu}\\
h_A(s)&\equiv \frac{d\ahaz(s)}{d\psafe}=\frac{-(\mu\rhaz(s)+\rho)(\psafe-\mu)-\mu(1-\psafe)\rhaz(s)+\rho\psafe}{(\psafe-\mu)^2}
	=-\frac{\mu(1-\mu)}{(\psafe-\mu)^2}(\rhaz(s)-\frac{\rho}{1-\mu}).
\end{align*}
Differentiating with respect to $\theta$, we have 
\begin{align*}
\exp\{-\int_{\mint}^{\tau_P}\ahaz(s)ds\}\Big(\ahaz(\mint)\tau_M'(\psafe)-\ahaz(\tau_P)\tau_P'(\psafe)-\int_{\tau_M(\psafe)}^{\tau_P(\psafe)}h_A(s)ds\Big)=0.
\end{align*}
Note that $\rhaz(\tau_P)=\rho\psafe/(\mu(1-\psafe))$, which implies $\ahaz(\tau_P)=0$. By implication,
\begin{align*}
\ahaz(\mint)\tau_M'(\psafe)-\int_{\tau_M}^{\tau_P}h_A(s)ds=0\Rightarrow \tau_M'(\psafe)=\frac{\int_{\tau_M}^{\tau_P}h_A(s)ds}{\ahaz(\mint)}.
\end{align*}
Next, note that for $s\in[\mint,\tau_P]$, 
\begin{align*}
	\rhaz(s)-\frac{\rho}{1-\mu}\geq 	\frac{\rho\psafe}{\mu(1-\psafe)}-\frac{\rho}{1-\mu}=\frac{\rho(\psafe-\mu)}{\mu(1-\mu)(1-\psafe)}>0.
\end{align*}
It follows that $h_A(s)<0$ for such $s$. Simultaneously, $\rhaz(s)>\rho\psafe/(\mu(1-\psafe))$ and hence, $\ahaz(s)>0$. Combining these observations,  $\mint'(\psafe)<0$.

Next, note that 
\begin{align*}
	\acdf(t)=\frac{1}{\sigma}(1-\exp(-\int_{\mint}^t\ahaz(s)ds)).
\end{align*}
It follows that 
\begin{align*}
	\frac{d\acdf(t)}{d\psafe}=\frac{1}{\sigma}\exp(-\int_{\mint}^t\ahaz(s)ds)(\int_{\mint}^t h_A(s)ds-\ahaz(\mint)\mint'(\psafe)).
\end{align*}
Substituting for $\mint'(\psafe)$, we have 
\begin{align*}
	\frac{d\acdf(t)}{d\psafe}&=\frac{1}{\sigma}\exp(-\int_{\mint}^t\ahaz(s)ds)(\int_{\mint}^t h_A(s)ds-\int_{\mint}^{\tau_P} h_A(s)ds)\\
	&=\frac{1}{\sigma}\exp(-\int_{\mint}^t\ahaz(s)ds)(-\int_{t}^{\tau_P} h_A(s)ds).
\end{align*}
Recalling that $h_A(\cdot)<0$ for such $s$, we have $d\acdf(t)/d\psafe>0$, and hence an increase in $\psafe$ results in a FOSD shift toward smaller $t$. 

Finally, note that $\pcdf(\cdot)$ depends on $\psafe$ only through $\mint$, and for $t\in[\mint,\tau_P)$
\begin{align*}
	\frac{d\pcdf(t)}{d\mint}=(1-\pcdf(t))\frac{\asafe(1-\mu)\rhaz(\mint)-\rho\mu}{\asafe-\mu}>0,
\end{align*}
where the last inequality follows because $\mint<\tau_A$. Thus, for all $t\in[\mint,\tau_P)$ the distribution shifts up, and the upper bound $\tau_P$ where the distribution has an atom, also shifts left, resulting in a FOSD shift toward smaller $t$. \end{proof}
\newpage

\end{document}